\newcommand{\ds}{\displaystyle}
\newcommand{\minp}{\rho}
\newcommand{\len}{\ell}
\newcommand{\sizefrac}{\beta}
\newcommand{\minpexp}{3000}
\newcommand{\delexp}{3100}
\newcommand{\alphaexp}{4/3}
\newcommand{\lenexp}{30}
\newcommand{\phiexp}{10}
\newcommand{\minpval}{d^{-60}\eps^{\minpexp}}
\newcommand{\lenval}{d^6 \eps^{-\lenexp}}
\newcommand{\sizefracval}{\eps/10}
\newcommand{\deltaval}{d^{-70}\eps^{\delexp}}
\newcommand{\alphaval}{\frac{\eps^{\alphaexp}}{300,000}}
\newcommand{\phival}{\eps^{\phiexp}}
\newcommand{\trwalk}{\widehat{M}}
\newcommand{\trunp}[3]{\widehat{p}_{#1, #2}(#3)}
\newcommand{\pvector}[3]{p_{#1, #2}(#3)}
\newcommand{\ball}{IB}
\newcommand{\cluster}{\texttt{cluster}}
\newcommand{\lev}[2]{L_{#1, #2}}
\newcommand{\level}[3]{L_{#1, #2, #3}}
\newcommand{\free}[1]{F_{#1}}
\newcommand{\prw}[1]{\widehat{p}_{#1}}
\newcommand{\numitr}{100 \log {\len}}
\newcommand{\maxphase}{2 \delta^{-1}\log(\delta^{-1})}
\newcommand{\bucket}[2]{B_{#1, #2}}
\newcommand{\ph}[1]{h_#1}
\newcommand{\ST}{K}
\newcommand{\sizethresh}{k}
\newcommand{\ls}[2]{I_{#1, #2}}
\newcommand{\lin}[3]{f_{#1, #2, #3}}
\newcommand{\isfree}{{\tt IsFree}}
\newcommand{\findr}{{\tt findr}}
\newcommand{\findanchor}{{\tt findAnchor}}
\newcommand{\findpart}{{\tt findPartition}}
\newcommand{\findball}{{\tt findIB}}
\newcommand{\globpart}{{\tt globalPartition}}
\newcommand{\clustercode}{
\noindent
\fbox{\begin{minipage}{0.9\textwidth}
    {\cluster$(v, t, k)$}
    \begin{compactenum}
    \item Determine $\trwalk^{t}\vec{v}$
        \item For all $k' \in [k, 2k]$ calculate $\Phi(\level{v}{t}{k'})$.
        \item Find the largest $k' \in [k,2k]$ (if any) with the following properties: $\Phi(\level{v}{t}{k'} \cup \{v\}) \leq \phi$ and $\level{v}{t}{k'} \in \supp(\trwalk^{t}\vec{v})$.
        \item If such a $k'$ exists, set $C := \level{v}{t}{k'} \cup \{v\}$, else $C := \{v\}$.
        \item Return $C$.
    \end{compactenum}
\end{minipage}
}
}
\newcommand{\parameterinfo}{
\begin{itemize}
    \item $\minp = \minpval$: Minimum probability for truncation.
    \item $\len = \lenval$: Maximum random walk length.
    \item $\sizefrac = \sizefracval$: Unclustered fraction cutoff.
    \item $\delta = \deltaval$: Phase probability. 
    \item $\alpha = \alphaval$: Heavy bucket parameter.
    \item $\phi = \phival$: Conductance parameter.
\end{itemize}
}
\newtheorem{theorem}{Theorem}[section]
\newtheorem{lemma}[theorem]{Lemma}
\newtheorem{claim}[theorem]{Claim}
\newtheorem{corollary}[theorem]{Corollary}
\newtheorem{definition}[theorem]{Definition}
\newtheorem{fact}[theorem]{Fact}
\newcommand{\ignore}[1]{}
\DeclareMathOperator*{\supp}{supp}
\newcommand{\cC}{{\cal C}}
\newcommand{\cE}{{\cal E}}
\newcommand{\cF}{\mathcal{F}}
\newcommand{\cP}{\mathcal{P}}
\newcommand{\cQ}{\mathcal{Q}}
\newcommand{\R}{\mathbb R}
\newcommand{\eps}{\varepsilon}
\newcommand{\poly}{\mathrm{poly}}
\newcommand{\bone}{{\bf 1}}
\newcommand{\btwo}{{\bf 2}}
\newcommand{\bthree}{{\bf 3}}
\newcommand{\bA}{\boldsymbol{A}}
\newcommand{\bR}{\boldsymbol{R}}
\newcommand{\bP}{\boldsymbol{P}}
\newcommand{\NN}{\mathbb{N}}
\newcommand{\RR}{\mathbb{R}}
\newcommand{\EX}{\hbox{\bf E}}
\newcommand{\Sec}[1]{\hyperref[sec:#1]{Sec.\,\ref*{sec:#1}}} %section
\newcommand{\Eqn}[1]{\hyperref[eq:#1]{(\ref*{eq:#1})}} %equation
\newcommand{\Fig}[1]{\hyperref[fig:#1]{Fig.\,\ref*{fig:#1}}} %figure
\newcommand{\Tab}[1]{\hyperref[tab:#1]{Tab.\,\ref*{tab:#1}}} %table
\newcommand{\Thm}[1]{\hyperref[thm:#1]{Theorem\,\ref*{thm:#1}}} %theorem
\newcommand{\Fact}[1]{\hyperref[fact:#1]{Fact\,\ref*{fact:#1}}} %fact
\newcommand{\Lem}[1]{\hyperref[lem:#1]{Lemma\,\ref*{lem:#1}}} %lemma
\newcommand{\Prop}[1]{\hyperref[prop:#1]{Prop.~\ref*{prop:#1}}} %property
\newcommand{\Cor}[1]{\hyperref[cor:#1]{Corollary~\ref*{cor:#1}}} %corollary
\newcommand{\Conj}[1]{\hyperref[conj:#1]{Conjecture~\ref*{conj:#1}}} %conjecture
\newcommand{\Def}[1]{\hyperref[def:#1]{Definition~\ref*{def:#1}}} %definition
\newcommand{\Alg}[1]{\hyperref[alg:#1]{Alg.~\ref*{alg:#1}}} %algorithm
\newcommand{\Ex}[1]{\hyperref[ex:#1]{Ex.~\ref*{ex:#1}}} %example
\newcommand{\Clm}[1]{\hyperref[clm:#1]{Claim~\ref*{clm:#1}}} %example
\newcommand{\Obs}[1]{\hyperref[obs:#1]{Observation~\ref*{obs:#1}}} %example
\newcommand{\Step}[1]{\hyperref[step:#1]{Step~\ref*{step:#1}}} %example
\newcommand{\Assumption}[1]{\hyperref[assm:#1]{Assumption\,\ref*{assm:#1}}} %assumption
\newcommand{\Sesh}[1]{{\color{red} Sesh: #1}}
\title{Random walks and forbidden minors III: $\poly(d\eps^{-1})$-time partition oracles for minor-free graph classes}
\author{Akash Kumar\thanks{Department of Computer Science, EPFL. {\href{mailto:akash.kumar@epfl.ch}{akash.kumar@epfl.ch}} 
This project has received funding from the European Research Council (ERC) under the European Union’s Horizon 2020 research and innovation programme (grant agreement No 759471).}
\and C. Seshadhri\thanks{Department of Computer Science, University of California, Santa Cruz. {\href{mailto:sesh@ucsc.edu}{sesh@ucsc.edu}}}
\and Andrew Stolman\thanks{Department of Computer Science, University of California, Santa Cruz. { \href{mailto:astolman@ucsc.edu}{astolman@ucsc.edu}}\newline {CS and AS acknowledge the support of NSF grants CCF-1740850, CCF-1813165, CCF-1909790, CCF-2023495, and ARO Award W911NF1910294.}} }
\date{}
\begin{document}

\maketitle
\thispagestyle{empty}

\abstract{Consider the family of bounded degree graphs in any minor-closed family (such as planar graphs).
Let $d$ be the degree bound and $n$ be the number of vertices of such a graph.
Graphs in these classes have hyperfinite decompositions,
where, for a sufficiently small $\eps > 0$, one removes $\eps dn$ edges to get connected
components of size independent of $n$. An important tool for sublinear algorithms and property testing for such classes is
the \emph{partition oracle}, introduced by the seminal work of Hassidim-Kelner-Nguyen-Onak (FOCS 2009).
A partition oracle is a local procedure that gives
consistent access to a hyperfinite decomposition, without any preprocessing. Given a query vertex $v$, the 
partition oracle outputs the component containing $v$ in time independent of $n$. All the answers
are consistent with a single hyperfinite decomposition.

The partition oracle of Hassidim et al. runs in time $d^{\poly(d\eps^{-1})}$ per query. They
pose the open problem of whether $\poly(d\eps^{-1})$-time partition oracles exist. Levi-Ron (ICALP 2013)
give a refinement of the previous approach, to get a partition oracle that runs in time $d^{\log(d\eps^{-1})}$-per query.

In this paper, we resolve this open problem and give $\poly(d\eps^{-1})$-time partition oracles
for bounded degree graphs in any minor-closed family. Unlike the previous line of work based on 
combinatorial methods, we employ techniques from spectral graph theory. We build on a recent spectral
graph theoretical toolkit for minor-closed graph families, introduced by the authors to develop efficient property testers.
A consequence of our result is a $\poly(d\eps^{-1})$-query tester for \emph{any monotone and additive} 
property of minor-closed families (such as bipartite planar graphs). Our result also gives $\poly(d\eps^{-1})$-query
algorithms for additive $\eps n$-approximations for problems such as maximum matching, minimum vertex cover, maximum independent set,
and minimum dominating set for these graph families.

}

\newpage
\setcounter{page}{1}

\section{Introduction} \label{sec:intro}

The algorithmic study of planar graphs is a fundamental direction in theoretical
computer science and graph theory. Classic results like the Kuratowski-Wagner characterization \cite{K30, W37},
linear time planarity algorithms \cite{HT74}, and the Lipton-Tarjan separator theorem underscore the significance
of planar graphs \cite{LiptonT:80}. The celebrated theory of Robertson-Seymour give a grand generalization of 
planar graphs through minor-closed families \cite{RS:12, RS:13, RS:20}. This has led to many deep results in
graph algorithms, and an important toolkit is provided by separator theorems and associated
decompositions \cite{AST:94}. 

Over the past decade, there have been many advances in \emph{sublinear} algorithms for
planar graphs and minor-closed families. We focus on the model
of random access to bounded degree adjacency lists, introduced by Goldreich-Ron~\cite{GR02}.
Let $G = (V,E)$ be a graph with vertex set $V = [n]$ and degree bound $d$. 
The graph is accessed through \emph{neighbor queries}: there is an oracle
that on input $v \in V$ and $i \in [d]$, returns the $i$th neighbor of $v$.
(If none exist, it returns $\bot$.)

One of the key properties of bounded-degree graphs in minor-closed families
is that they exhibit hyperfinite decompositions. A graph $G$ is hyperfinite
if $\forall \; 0 < \eps < 1$, one can remove $\eps dn$ edges from $G$
and obtain connected components of size independent of $n$ (we refer to these as pieces). For minor-closed families,
one can remove $\eps dn$ edges and get pieces of size $O(\eps^{-2})$.

The seminal result of Hassidim-Kelner-Nguyen-Onak (HKNO) \cite{HKNO} introduced the notion
of \emph{partition oracles}. This is a local procedure that provides ``constant-time"
access to a hyperfinite decomposition. The oracle takes a query vertex $v$ and outputs
the piece containing $v$. Each piece is of size independent of $n$, and at most $\eps dn$
edges go between pieces. Furthermore, all the answers are consistent with a single
hyperfinite decomposition, despite there being no preprocessing or explicit coordination. 
(All queries uses the same random seed, to ensure consistency.)
Partition oracles
are extremely powerful as they allow a constant time procedure to directly access
a hyperfinite decomposition. As observed in previous work, partition oracles lead
to a plethora of property testing results and sublinear time approximation algorithms for minor-closed graph families~\cite{HKNO,NS13}.
In some sense, one can think of partition oracles as a moral analogue of Sz\'{e}meredi's regularity lemma
for dense graph property testing: it is a decomposition tool that immediately yields a litany
of constant time (or constant query) algorithms. 

We give a formal definition of partition oracles. (We deviate somewhat from the definition in Chap. 9.5 of Goldreich's book~\cite{G17-book}
by including the running time as a parameter, instead of the set size.)

\begin{definition} \label{def:oracle} Let $\cP$ be a family of graphs with degree bound $d$ and $T: (0,1) \to \NN$ be a function. 
A procedure $\bA$ is an \emph{$(\eps,T(\eps))$-partition oracle} for 
$\cP$ if it satisfies the following properties. The deterministic procedure takes as input random access to $G = (V,E)$ in $\cP$,
random access to a random seed $r$ (of length polynomial in graph size), a proximity parameter $\eps > 0$, and a vertex $v$ of $G$.
(We will think of fixing $G, r, \eps$, so we use the notation $\bA_{G,r,\eps}$. All probabilities are with respect to $r$.)
The procedure $\bA_{G,r,\eps}(v)$ outputs a set of vertices and satisfies the following properties.
\begin{enumerate}
	\item (Consistency) The sets $\{\bA_{G,r,\eps}(v)\}$, over all $v$, form a partition of $V$. Also,
		these sets $\bA_{G,r,\eps}(v)$ induce connected graphs for all $v \in V$.
    \item (Cut bound) With probability (over $r$) at least $2/3$, the number of edges between the sets $\bA_{G,r,\eps}(v)$ is at most $\eps dn$.
    \item (Running time) For every $v$, $\bA_{G,r,\eps}(v)$ runs in time $T(\eps)$.
\end{enumerate}
\end{definition}

We stress that there is no explicit ``coordination" or sharing of state
between calls to $\bA_{G,r,\eps}(v)$ and $\bA_{G,r,\eps}(v')$ (for $v \neq v'$).
There is no global preprocessing step once the random seed is fixed. The consistency guarantee
holds with probability $1$.
Note that the running time $T(\eps)$ is clearly an upper bound on the size of the sets $\bA_{G,r,\eps}(v)$. 
For minor-closed families, one can convert any partition oracle to one that output sets of size $O(\eps^{-2})$
with a constant factor increase in the cut bound.
(refer to the end of Sec. 9.5 in~\cite{G17-book}).

The challenge in partition oracles is to bound the running time $T(\eps)$. HKNO gave a partition oracle
with running time $(d\eps^{-1})^{\poly(d\eps^{-1})}$. Levi-Ron \cite{LR15} built on the ideas from HKNO
and dramatically improved the bound to $(d\eps^{-1})^{\log (d\eps^{-1})}$. Yet,
for all minor-closed families, one can (in linear time) remove $\eps dn$ edges to get connected
components of size $O(\eps^{-2})$. HKNO raise the natural open question as to whether
$(\eps,\poly(d\eps^{-1}))$-partition oracles exist.

In this paper, we resolve this open problem.

\begin{theorem} \label{thm:main-intro} 
	Let $\cP$ be the set of $d$-bounded degree graphs
	in a minor-closed family. There is an $(\eps,\poly(d\eps^{-1}))$-partition oracle for $\cP$.
\end{theorem}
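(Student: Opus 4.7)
The plan is to construct the partition oracle by local spectral clustering around each query vertex, in the spirit of Spielman--Teng and the nibble paradigm. For a query vertex $v$, the oracle computes the truncated lazy random walk distribution $\trwalk^{t}\vec{v}$ for an appropriate walk length $t$, then scans its level sets $\level{v}{t}{k'}$ for $k' \in [k, 2k]$ to find a set $C$ with small conductance $\Phi(C) \le \phi$; this is the local clustering subroutine \cluster$(v,t,k)$ encoded by the macros above. The crucial input from the spectral toolkit for minor-closed families developed in the authors' earlier papers in this series is the following: any $d$-bounded-degree graph in such a family admits a hyperfinite decomposition into pieces of size $O(\eps^{-2})$, and a $(1-O(\eps))$-fraction of vertices lie inside pieces from which a truncated walk of polynomial length $\len = \lenval$ concentrates almost entirely on the piece. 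This is what lets us take $\len$ polynomial in $d\eps^{-1}$, as opposed to the $d^{\log(d\eps^{-1})}$ walk length of Levi--Ron.

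Next, I would organize the oracle as a multi-phase procedure: in each of $2 \delta^{-1}\log(\delta^{-1})$ phases, and for each pair $(t,k)$ drawn from a logarithmic grid capped at $\len$, the oracle attempts to commit $v$ to a cluster of the ``right'' size and conductance produced by \cluster. To achieve consistency across different query entry points without any global preprocessing, membership of $v$ is made canonical by an anchor-finding step: starting from $v$, the procedure explores a ball of radius $O(t)$ for potential anchors $u$, runs \cluster$(u,t,k)$ for each, and uses a seed-determined tie-break on the tuple $(h,t,k,u)$ to elect a canonical anchor. Any vertex for which no call in any phase produces a valid cluster containing it becomes a singleton \emph{free} vertex in the final partition; because \cluster, anchors, and tie-breaks are deterministic functions of $(G, r)$, two queries $v$ and $v'$ end up in the same set iff they elect the same anchor.

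The analysis splits into a cut bound and a runtime bound. For the cut bound: a potential-decrease argument across phases, powered by the spectral toolkit, shows that the fraction of free vertices at the end is at most $\sizefrac = \eps/10$, so their singleton boundary contributes at most $(\eps/10) d n$ edges. The remaining clusters each satisfy $\Phi(C) \le \phi = \phival$, so summing over clusters the total inter-cluster boundary is at most $\phi \cdot 2|E| \le \eps^{10} d n$, well within budget. For the runtime: \cluster only inspects the support of $\trwalk^{t}\vec{v}$, which truncation at probability $\minp = \minpval$ keeps of size at most $\minp^{-1} = \poly(d\eps^{-1})$; the anchor-search ball and the number of \cluster invocations per query are polynomially bounded by the same parameters, giving total runtime $\poly(d\eps^{-1})$.

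The hard part, I expect, is the interplay between consistency and polynomial runtime. Any local voting mechanism must let $v$ and a distant $u$ inside the same eventual cluster converge on the \emph{same} anchor without either of them inspecting the cluster's global neighborhood; this forces the parameter schedule $(h,t,k)$ to be fully deterministic and seed-indexed, and the anchor-search radius to be large enough that every cluster member discovers its anchor yet small enough to keep the runtime polynomial. A related technical hurdle is showing that the truncated walk $\trwalk^{t}\vec{v}$ remains spectrally close enough to the true walk for a Lov\'asz--Simonovits-type curve to certify a low-conductance cut on a $(1-\eps)$-fraction of vertices with only polynomial walk length --- this is precisely the regime where naive combinatorial arguments lose to the spectral machinery from the previous installments of this series.
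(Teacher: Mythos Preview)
Your proposal captures the broad architecture --- truncated diffusion, level-set sweep, phases, anchor election --- but the cut-bound argument has a real gap, and that gap is where essentially all of the paper's technical work lies.

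You write that ``the remaining clusters each satisfy $\Phi(C) \le \phi$, so summing over clusters the total inter-cluster boundary is at most $\phi \cdot 2|E|$.'' This would be valid if the sets $C$ output by \cluster{} formed the partition, so that $\sum_C |C| = n$. They do not. The global procedure calls \cluster$(v)$ for \emph{every} vertex $v$ (in $\prec$-order), and what enters the partition is $C \cap F$, the intersection with the current free set. The sets $C$ themselves overlap arbitrarily: a single vertex $w$ can lie in $\cluster(v)$ for up to $|\ball(w)| = \len\minp^{-1}$ different seeds $v$. So $\sum_v |\cluster(v)|$ can be as large as $\len\minp^{-1} n$, and your sum of boundaries blows up by the same polynomial factor. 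Meanwhile $C \cap F$ need have no conductance guarantee at all --- it can be a tiny sliver of $C$ with many edges crossing into $C \setminus F$.

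The paper's fix is an amortization that your ``potential-decrease argument'' does not supply. The key technical statement (\Thm{restrict-cut}) is that for any free set $F$ with $|F| \ge \sizefrac n$, there is a size threshold $k$ such that for $\Omega(\sizefrac^2 n)$ seeds $s \in F$ and $\Omega(\sizefrac \len)$ walk lengths $t$, the level set $\level{s}{t}{k'}$ not only has conductance $\le \phi$ but also satisfies $|\level{s}{t}{k'} \cap F| \ge \sizefrac^3 k$. This is the nonobvious step: a diffusion \emph{in $G$} (oblivious to $F$) nonetheless finds cuts with many $F$-vertices. The procedure \findr{} estimates, phase by phase, a threshold $k_h$ with this property. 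The amortization then goes: phase $h$ has $\approx \delta n$ seeds, each cutting $\le 4\phi k_h d$ edges, for a total of $O(\phi \delta k_h n d)$ edges cut; but by \Thm{findr} at least $\sizefrac^5 \delta n$ of those seeds are viable and each clusters $\ge \sizefrac^3 k_h$ new free vertices, so at least $\sizefrac^8 \delta k_h n$ vertices leave $F$. The ratio (edges cut)/(vertices clustered) is $O(\phi \sizefrac^{-8} d)$, and the parameters are tuned so this is $\ll \eps d$. There is a further subtlety (handled in \Clm{charging}) in bounding overlaps among the viable clusters within a phase, which is why $\delta$ must be chosen $\ll (\len\minp^{-1})^{-1}$ and why the phase variable is geometric.

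In short: you are missing both the statement that diffusion in $G$ finds cuts rich in $F$ (the paper's main spectral lemma, proved via a Lov\'asz--Simonovits argument restricted to heavy buckets in $F$) and the charging scheme that converts that lemma into a global cut bound. Without these, the $\phi\cdot 2|E|$ estimate is off by a $\poly(d\eps^{-1})$ factor.
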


\subsection{Consequences} \label{sec:conseq}

As observed by HKNO and Newman-Sohler \cite{NS13}, partition 
oracles have many consequences for
property testing and sublinear algorithms. 

Recall the definition of property testers. Let $\cQ$ be a property of graphs with degree bound $d$.
The distance of $G$ to $\cQ$ is the minimum number of edge additions/removals required to 
make $G$ have $\cQ$, divided by $dn$. 
A property tester for $\cP$ is a randomized procedure that takes query access to an input
graph $G$ and a proximity parameter, $\eps > 0$.
If $G \in \cP$, the tester accepts with probability at least $2/3$. If the distance
of $G$ to $\cQ$ is at least $\eps$, the tester rejects with probability at least $2/3$.
We often measure the query complexity as well as time complexity of the tester.

A direct consequence of \Thm{main-intro} is an ``efficient" analogue (for monotone and additive properties)
of a theorem of Newman-Sohler
stating that all properties of hyperfinite graphs are testable. 
A graph property closed under vertex/edge removals is called \emph{monotone}.
A graph property closed under disjoint union of graphs is called \emph{additive}.

\begin{theorem} \label{thm:testers} Let $\cQ$ be any monotone and additive property of bounded degree
graphs of a minor-closed family. There exists a $\poly(d\eps^{-1})$-query tester
for $\cQ$.

If membership in $\cQ$ can be determined exactly in polynomial (in input size) time,
then $\cQ$ has $\poly(d\eps^{-1})$-time testers.
\end{theorem}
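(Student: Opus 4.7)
The plan is to use the partition oracle from \Thm{main-intro} as a black-box reduction. To test a monotone and additive property $\cQ$, I would sample $\Theta(1/\eps)$ vertices uniformly at random, query the oracle on each to obtain its piece, and check whether each such piece (as an induced subgraph) lies in $\cQ$, accepting if and only if every sampled piece does. Assuming pieces have size $O(\eps^{-2})$ (per the remark following \Def{oracle}), reading the induced subgraph of a piece takes $O(d\eps^{-2})$ neighbor queries, and deciding $\cQ$ on it takes $\poly(\eps^{-1})$ time whenever $\cQ$ admits a polynomial-time membership test.

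Invoke the oracle at proximity parameter $\eps/4$, so its cut bound removes at most $\eps dn/4$ crossing edges. For completeness, if $G \in \cQ$ then monotonicity (closure under vertex deletion) implies every induced subgraph is in $\cQ$, so all sampled pieces pass. For soundness, let $G'$ denote $G$ with the crossing edges removed; since $G$ is $\eps$-far from $\cQ$ and $G,G'$ differ in at most $\eps dn/4$ edges, we get $\mathrm{dist}(G',\cQ) \geq 3\eps/4$. Now $G'$ is the disjoint union of the pieces $P_i$, and the empty graph on any vertex set belongs to $\cQ$ (by monotonicity applied to any graph in $\cQ$, and the latter can be taken arbitrarily large by additivity). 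Hence each piece $P_i$ can be made to satisfy $\cQ$ via at most $d|P_i|/2$ edge deletions, and additivity reassembles these per-piece repairs into a global repair of $G'$. This forces $\sum_{i: P_i \notin \cQ} d|P_i|/2 \geq 3\eps dn/4$, so at least a $3\eps/2$ fraction of vertices lie in ``bad'' pieces $P_i \notin \cQ$, and $\Theta(1/\eps)$ independent samples catch such a vertex with constant probability.

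Putting the accounting together, the query complexity is $\Theta(1/\eps)$ oracle calls at $\poly(d\eps^{-1})$ queries each, plus $O(d\eps^{-2})$ per sample to extract each piece's induced subgraph, for a total of $\poly(d\eps^{-1})$; the running time matches when $\cQ$ is polynomial-time decidable, since each piece check is $\poly(\eps^{-1})$. A union bound over the oracle's $1/3$ cut-bound failure and the sampling miss yields the required $2/3$ success probability, which may be boosted by independent repetition.

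The main obstacle is the soundness reduction itself: making rigorous the claim that $\eps$-farness of $G$ forces an $\Omega(\eps)$ fraction of vertices into pieces individually outside $\cQ$. Both monotonicity (to upper-bound each piece's distance to $\cQ$ by its edge count, via the empty graph) and additivity (to reassemble per-piece repairs into a global $\cQ$-repair of $G'$) are essential here, and one must be careful with the degree-bounded edit-distance convention. Once this step is in hand, everything else is routine bookkeeping on top of \Thm{main-intro}.
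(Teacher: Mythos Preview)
Your overall strategy matches the paper's, but there is a genuine gap in the soundness argument. You write that the oracle's cut-bound failure has probability at most $1/3$ and union-bound over it, but \Def{oracle} only guarantees the cut bound for inputs $G$ belonging to the minor-closed family $\cP$. The property $\cQ$ is a subproperty of $\cP$, so $G\in\cQ$ implies $G\in\cP$ and the completeness side is fine. However, an input that is $\eps$-far from $\cQ$ need not lie in $\cP$ at all, and for such $G$ you have no control over how many edges the oracle cuts. In the worst case the oracle could shred $G$ into pieces so small that each one trivially lies in $\cQ$ (e.g., singletons), and your tester would accept a far instance.

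The paper patches this with an extra first phase: after fixing a seed $\bR$, it \emph{estimates} the number of cut edges by sampling $\Theta(1/\eps)$ random edges and calling the oracle on both endpoints; if too many sampled edges are cut, it retries with a fresh seed, and after $O(1)$ failed retries it \emph{rejects}. Completeness survives because $G\in\cQ\subseteq\cP$ makes the cut bound hold with probability $\geq 2/3$, so a good seed is found quickly. Soundness now splits cleanly: either no good seed is found and we reject outright, or a good seed is certified (so, with high probability via Chernoff, at most $\eps dn/2$ edges are actually cut), and then your per-piece repair argument goes through verbatim. Adding this verification step to your proposal closes the gap; everything else you wrote is correct and in line with the paper's proof.
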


An appealing
consequence of \Thm{testers} is that the property of bipartite planar graphs
can be tested in $\poly(d\eps^{-1})$ time. For any fixed subgraph $H$,
the property of $H$-free planar graphs can be tested in the same time. And all of
these bounds hold for any minor-closed family.

As observed by Newman-Sohler, partition oracles give sublinear query algorithms
for any additive graph parameter that is ``robust" to edge changes. Again, \Thm{main-intro}
implies an efficient version for minor-closed families.

\begin{theorem} \label{thm:approx} Let $f$ be a real-valued function on graphs
that changes by $O(1)$ on edge addition/removals, and has the property
that $f(G_1 \cup G_2) = f(G_1) + f(G_2)$ for graphs $G_1, G_2$ that are not connected
to each other.

For any minor-closed family $\cP$, there is a randomized algorithm that, 
given $\eps > 0$ and $G \in \cP$,
outputs an additive $\eps n$-approximation to $f(G)$ and makes $\poly(d\eps^{-1})$ queries.
If $f$ can be computed exactly in polynomial time, then the above algorithm runs
in $\poly(d\eps^{-1})$ time.
\end{theorem}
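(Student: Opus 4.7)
The plan is to derive \Thm{approx} by the standard Newman--Sohler-style reduction from estimation of additive, edge-Lipschitz graph parameters to partition oracles, instantiated with the oracle from \Thm{main-intro}. Let $C = O(1)$ denote the Lipschitz constant of $f$ under edge additions/removals, and let $f_0 := \max_v |f(\{v\})| = O(1)$ (a constant for all parameters listed, e.g.\ matching, vertex cover, independent set, dominating set). Pick $\eps' = \eps/(c d C)$ for a sufficiently large absolute constant $c$ and invoke the partition oracle $\bA$ of \Thm{main-intro} at proximity $\eps'$, fixing one random seed $r$. Then $\bA_{G,r,\eps'}$ implicitly defines a partition $\cP = \{P_1, \ldots, P_m\}$ of $V$ into connected pieces, each of size $\poly(d\eps^{-1})$, and with at most $\eps' d n$ cut edges between pieces (with probability $\geq 2/3$ over $r$; this can be boosted to $\geq 9/10$ by standard amplification of the internal constants of \Thm{main-intro}). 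Let $G'$ be $G$ with the cut edges removed. By additivity, $f(G') = \sum_j f(G[P_j])$, and by the $C$-Lipschitz property, $|f(G) - f(G')| \leq C \eps' d n \leq \eps n / 3$ on the good event.

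Next, sample $s = \Theta(d^2\eps^{-2})$ vertices $v_1, \ldots, v_s$ uniformly and independently from $V$. For each $v_i$, query $\bA_{G,r,\eps'}(v_i)$ to recover its piece $P_{v_i}$ (in $\poly(d\eps^{-1})$ queries) and compute $f(G[P_{v_i}])$, which takes polynomial time in $|P_{v_i}| = \poly(d\eps^{-1})$. Output
\[
\hat F \;:=\; \frac{n}{s}\sum_{i=1}^{s} \frac{f(G[P_{v_i}])}{|P_{v_i}|}.
\]
The identity $\sum_j f(G[P_j]) = \sum_{v\in V} f(G[P_v])/|P_v|$, obtained by noting that each piece $P_j$ contributes $|P_j|$ copies of $f(G[P_j])/|P_j|$, gives $\mathbb{E}_{v\sim V}\bigl[n\cdot f(G[P_v])/|P_v|\bigr] = f(G')$ (conditional on $r$). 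Each summand is bounded in absolute value by $n(f_0 + Cd/2) = O(dn)$, so Hoeffding's inequality with $s = \Theta(d^2\eps^{-2})$ samples yields $|\hat F - f(G')| \leq \eps n/3$ with probability $\geq 9/10$. A union bound over the partition-oracle good event and the Hoeffding event then gives $|\hat F - f(G)| \leq \eps n$ with probability at least $2/3$. The total query complexity and runtime are $s \cdot \poly(d\eps^{-1}) = \poly(d\eps^{-1})$.

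There is no substantive obstacle; the only care is in the constant-factor bookkeeping used to choose $\eps'$ so that the edge-cut contribution and the sampling contribution each account for at most $\eps n / 3$ of the error while keeping $\poly(d/\eps') = \poly(d\eps^{-1})$ to preserve the promised runtime, and in boosting the partition oracle's internal failure probability. All the real difficulty lies in \Thm{main-intro}; the present theorem is the standard corollary.
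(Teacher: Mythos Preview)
Your proof is correct and follows essentially the same Newman--Sohler reduction that the paper sketches: invoke the partition oracle at a scaled-down proximity, use the Lipschitz property to bound the error from deleting cut edges, and estimate $\sum_j f(G[P_j])$ by sampling uniform vertices and reweighting by $1/|P_v|$. You supply more detail than the paper's two-line sketch (the explicit unbiased estimator and the Hoeffding calculation), but the approach is the same.
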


The functions captured by \Thm{approx} are quite general. Functions such as maximum matching,
minimum vertex cover, maximum independent set, minimum dominating set, maxcut, etc. all
have the robustness property. As a compelling application of \Thm{approx}, we can get
$(1+\eps)$-approximations\footnote{The maximum matching is $\Omega(n/d)$ for 
a connected bounded degree graph. One simply sets $\eps \ll 1/d$ in \Thm{approx}.}
for the maximum matching in planar (or any minor-closed family) graphs in $\poly(d\eps^{-1})$ time.

These theorems are easy consequences of \Thm{main-intro}. Using the partition oracle,
an algorithm can essentially assume that the input is a collection of connected components
of size $\poly(d\eps^{-1})$, and run an exact algorithm on a collection of randomly sampled
components. We sketch the proofs in \Sec{appl}.

\subsection{Related work} \label{sec:related}

The subject of property testing and sublinear algorithms in bounded degree graphs
is a vast topic. We refer the reader to Chapters 9 and 10 of Goldreich's textbook \cite{G17-book}.
We focus on the literature relevant to sublinear algorithms for minor-closed families.

The first step towards a characterization of testable properties in the bounded-degree model
was given by Czumaj-Sohler-Shapira, who showed hereditary properties in non-expanding graphs
are testable \cite{CSS09}. This was an indication that notions like hyperfiniteness are connected
to property testing. Benjamini-Schramm-Shapira achieved a breakthrough by showing that all minor-closed
properties are testable, in time triply-exponential in $d\eps^{-1}$ \cite{BSS08}. Hassidim-Kelner-Nguyen-Onak introduced partition oracles, and designed one running in time $\exp(d\eps^{-1})$. Levi-Ron improved this bound to quasipolynomial in $d\eps^{-1}$, using a clever analysis inspired
by algorithms for minimum spanning trees~\cite{LR15}.  Newman-Sohler built on partition oracles
for minor-close families to show that all properties of hyperfinite graphs are testable~\cite{NS13}.
Fichtenberger-Peng-Sohler showed any testable property contains a hyperfinite property~\cite{FiPeSo19}.

There are two dominant combinatorial ideas in this line of work. The first is using subgraph
frequencies in neighborhood of radius $\poly(\eps^{-1})$ to characterize properties. This
naturally leads to exponential dependencies in $\poly(\eps^{-1})$. The second idea is to use
random edge contractions to reduce the graph size. Recursive applications lead to hyperfinite
decompositions, and the partition oracles of HKNO and Levi-Ron simulate this recursive
procedure. This is extremely non-trivial, and leads to a recursive local procedure
with a depth dependent of $\eps$. Levi-Ron do a careful simulation, ensuring that the recursion depth
is at most $\log(d\eps^{-1})$, but this simulation requires looking at 
neighborhoods of radius $\log(d\eps^{-1})$.
Following this approach, there is little hope of getting a recursion depth independent
of $\eps$, which is required for a $\poly(d\eps^{-1})$-time procedure.

Much of the driving force behind this work was the quest for a $\poly(d\eps^{-1})$-time tester
for planarity. This question was resolved recently using a different
approach from spectral graph theory, which was itself developed for sublinear time algorithms
for finding minors~\cite{KSS:18, KSS:19}. A major inspiration is the random walk
based one-sided bipartiteness tester of Goldreich-Ron \cite{GR99}.
This paper is a continuation of that line of work, and is a further demonstration
of the power of spectral techniques for sublinear algorithms. The tools build
on local graph partitioning techniques pioneered by Spielman-Teng \cite{ST12},
which is itself based on classic mixing time results of Lov\'{a}sz-Simonovits \cite{LS:90}. 
In this paper, we
develop new diffusion-based local partitioning tools that form the core of partition oracles.

We also mention other key results in the context of sublinear algorithms for minor-closed families,
notably the Czumaj et al \cite{C14} upper bound of $O(\sqrt n)$ for testing cycle minor-freeness, 
the Fichtenberger et al \cite{FLVW:17} upper bound of $O(n^{2/3})$ for testing $K_{2,r}$-minor-freeness,
and $\poly(d\eps^{-1})$ testers for outerplanarity and bounded treewidth graphs~\cite{YI:15,EHNO11}.

\section{Main Ideas} \label{sec:ideas}

The starting point for this work are the spectral methods used in~\cite{KSS:18,KSS:19}.
These methods discover cut properties within a neighborhood of radius $\poly(d\eps^{-1})$, without
explicitly constructing the entire neighborhood.

One of the key tools used in these results in a local partitioning algorithm, based on
techniques of Spielman-Teng~\cite{ST12}. The algorithm takes a seed vertex $s$,
performs a diffusion from $s$ (equivalently, performs many random walks) of length $\poly(d\eps^{-1})$,
and tracks the diffusion vector to detect a low conductance cut around $s$ in $\poly(d\eps^{-1})$
time. We will use the term \emph{diffusions}, instead of random walks, because we prefer
the deterministic picture of a unit of ``ink" spreading through the graph.
A key lemma in previous results states that, for graphs in minor-closed families,
 this procedure succeeds 
from more than $(1-\eps)n$ seed vertices. This yields a global algorithm to construct a 
hyperfinite decomposition with components of $\poly(d\eps^{-1})$ size. Pick a vertex
$s$ at random, run the local partitioning procedure to get a low conductance cut, remove
and recurse. Can there be a local implementation of this algorithm?

Let us introduce some setup. We will think of a global algorithm that processes
seed vertices in some order. Given each seed vertex $s$, a local partitioning algorithm
generates a low conductance set $C(s)$ containing $s$ (this is called a cluster). The final
output is the collection of these clusters. For any vertex $v$, let the \emph{anchor} of $v$
be the vertex $s$ such that $v \in C(s)$. A local implementation boils down to finding
the anchor of query vertex $v$. 

Observe that at any point of the global procedure, some vertices have been clustered,
while the remaining are still \emph{free}.
The global procedure described above seems hopeless for a local implementation. The cluster
$C(s)$ is generated by diffusion in some subgraph $G'$ of $G$, which was the set of free
vertices when seed $s$ was processed. Consider a local procedure trying to discover
the anchor of $v$. It would need to figure out the free set corresponding
to every potential anchor $s$, so that it can faithfully simulate the diffusion
used to cluster $v$. From an implementation standpoint, it seems
that the natural local algorithm is to use diffusions from $v$ in $G$
to discover the anchor.  But diffusion in a subgraph $G'$ is markedly different from $G$ and difficult
to simulate locally. Our first goal is to design a partitioning method using
diffusions directly in $G$. 

\medskip

{\bf Finding low conductance cuts in subsets, by diffusion in supersets:}
Let us now modify the global algorithm with this constraint in mind. At some stage of
the global algorithm, there is a set $F$ of free vertices. We need to find a low conductance
cut contained in $F$, while running random walks in $G$. Note that we must be able
to deal with $F$ as small as $O(\eps n)$. Thus, random walks (even starting from $F$)
will leave $F$ quite often; so how can these walks/diffusions find cuts in $F$?

One of our main insights is that these challenges can be dealt with, even for
diffusions of $\poly(d\eps^{-1})$ length. We show
that, for a uniform random vertex $s \in F$, a spectral partitioning algorithm
that performs diffusion from $s$ in $G$ can detect low conductance cuts contained in $F$.
Diffusion in the superset (all of $V$) provides information about the subset $F$.
This is a technical and non-trivial result, and crucially uses
the spectral properties of minor-closed families. Note that diffusions
from $F$ can spread very rapidly in short random walks, even in planar graphs.
Consider a graph $G$, where $F$ is a path on $\eps n$ vertices, and there
is a tree of size $1/\eps$ rooted at every vertex of $F$. Diffusions from any vertex
in $F$ will initially be dominated by the trees, and one has to diffuse for at least $1/\eps$
timesteps before structure within $F$ can be detected. Thus, the proof 
of our theorem has to look at average behavior
over a sufficiently large time horizon before low conductance cuts in $F$ are ``visible".
Remarkably, it suffices to look at $\poly(d\eps^{-1})$ timesteps to find structure in $F$,
because of the behavior of diffusions in minor-closed families.

The main technical tool used is the Lov\'{a}sz-Simonovits curve technique \cite{LS:90},
whose use was pioneered by Spielman-Teng \cite{ST12}. We also use the truncated
probability vector technique from Spielman-Teng to give cleaner implementations and proofs.
A benefit of using diffusion (instead of random walks) on truncated vectors 
is that the clustering becomes deterministic.

\medskip

{\bf The problem of ordering the seeds:} With one technical hurdle out of the way, we 
end up at another gnarly problem. The above procedure only succeeds if the seed is in $F$.
Quite naturally, one does not expect to get any cuts in $F$ by diffusing from a random
vertex in $G$. From the perspective of the global algorithm, this means that we need
some careful ordering of the seeds, so that low conductance cuts are discovered.
Unfortunately, we also need local implementations of this ordering. 
The authors struggled with carrying out this approach, but to no avail.

To rid ourselves of the ordering problem, let us consider the following, almost naive
global algorithm. First, order the vertices according to a uniform random permutation.
At any stage, there is a free set $F$. We process the next seed vertex $s$ by
running some spectral partitioning procedure, to get a low conductance cut $C(s)$.
Simply output $C(s) \cap F$ (instead of $C(s)$) as the new cluster, and update $F$ to $F \setminus C(s)$.
It is easy to locally implement this procedure.
To find the anchor of $v$, perform a diffusion of $\poly(\eps^{-1})$ timesteps from $v$.
For every vertex $s$ with high enough value in the diffusion vector, determine if $C(s) \ni v$.
The vertex $s$ that is lowest according to the random ordering is the anchor of $v$.
Unfortunately, there is little hope of bounding the number of edges cut
by the clustering. When $s$ is processed, it may be that $s \notin F$, and there is no guarantee
of $C(s) \cap F$. Can we modify the procedure to bound the number of cut edges, but
still maintain its ease of local implementability?

\medskip

{\bf The amortization argument:} Consider the scenario when $F = \Theta(\eps n)$.
Most of the subsequent seeds processed are not in $F$ and there is no guarantee
on the cluster conductance. But every $\Theta(1/\eps)$ seeds (in expectation),
we will get a ``good" seed $s$ contained in $F$, such that $C(s) \cap F$ is
a low conductance set. (This is promised by the diffusion algorithm that we develop in this paper,
as discussed earlier.) Our aim is to perform some amortization, to argue that
$|C(s) \cap F|$ is so large, that we can ``charge" away the edges cut
by the previous $\Theta(1/\eps)$ seeds.

This amortization is possible because our spectral tools give us much flexibility in
the (low) conductances obtained. Put differently, we essentially prove that existence
of many cuts of extremely low conductance, and show that it is ``easy" for a diffusion-based
algorithm to find such cuts. (This is connected to the spectral behavior
of minor-closed families.) As a consequence, we can actually pre-specify the size of the low
conductance cuts obtained. We show that as long as $|F| = \Omega(\eps n)$,
we can find a \emph{size threshold} $k = \poly(\eps^{-1})$ such that for at least $\Omega(\eps^2n)$ vertices $s \in F$,
a spectral partitioning procedure seeded at $s$ can find a cut 
of size $\Theta(k)$ and conductance at most $\eps^c$. Moreover, this cut is guaranteed to
contain at least $\eps^{c'} k$ vertices in $F$, despite the procedure being oblivious to $F$.
The parameter $c$ can be easily tuned, so we can increase $c$ arbitrarily while keeping $c'$ fixed, at the cost
of polynomial increases in running time. This tunability is crucial to our amortization argument.
We also show that given query access to $F$, a size threshold $k$
can be computed in $\poly(d\eps^{-1})$ time.

So when the global algorithm processes seed $s$, it runs the above spectral procedure
to try to obtain a set of size $\Theta(k)$ with conductance at most $\eps^c$. (If the 
procedure fails, the global algorithm simply set $C(s) = \{s\}$.) Thus,
we cut $O(\eps^c kd)$ edges for each seed processed. But after every $O(1/\eps)$ seeds,
we choose a ``good" seed such that $|C(s) \cap F| > \eps^{c'} k$. The total number
of edges cut is $O(\eps^c kd \times \eps^{-1}) = O(\eps^{c-1} kd)$. The total
number of new vertices clustered is at least $\eps^{c'}k$. Because we can tune parameters with
much flexibility, we can set $c \gg c'$. So the total number of edges
cut is $O(\eps^{c-c'-1}d)$ times the number of vertices clustered, where $c-c'-1 > 1$.
Overall, we will cut only $O(\eps nd)$ edges.

\medskip

{\bf Making it work through phases:} Unfortunately, as the process described above
continues, $F$ shrinks. Thus, the original choice of $k$ might not work, and
the guarantees on $|C(s) \cap F|$ for good seeds no longer hold. So we need
to periodically recompute the value of $k$. In a careful analysis, we show that this
recomputation is only required $\poly(\eps^{-1})$ times. Formally, we implement
the recomputation through \emph{phases}. Each vertex is independently assigned to one
of $\poly(\eps^{-1})$ phases. 
(Technically, we choose the phase of a vertex by sampling an independent geometric
random variable. We heavily use the memoryless property of the geometric distribution.)

For each phase, the value of $k$ is fixed. The local partition oracle will compute
these size thresholds for all phases, as a $\poly(d\eps^{-1})$ time preprocessing step.
The oracle (for $v$) runs a diffusion from $v$
to get a collection of candidate anchors. For each candidate $s$, the oracle determines
its phase, runs the spectral partitioning algorithm with correct phase parameters,
and determines if the candidate's low conductance cut contains $v$. 
The anchor is simply such a candidate of minimum phase, with ties broken by vertex id.

\subsection{Outline of sections} \label{sec:outline}

The algorithm description and proof has many moving parts, encapsulated by different sections.
\Sec{prelims} begins by discussing the truncated diffusion process, the main algorithmic tool
for partitioning. We then describe the global partitioning algorithm \globpart{}
(modulo a preprocessing step called \findr), which is far more convenient
to analyze. It will be readily apparent that this global procedure outputs a partition
of $G$ into connected components; the main challenge is to bound the number of edges cut.

Within \Sec{prelims}, we discuss how to implement \globpart{} by a local procedure. 
By ensuring that the output of the local procedure is identical to \globpart, we prove
the consistency property of \Def{oracle}. We then perform a fairly straightforward running time analysis,
which proves the running time property of \Def{oracle}.

The real heavy lifting begins in \Sec{findr}, where we describe the procedure \findr{} that 
computes the size thresholds. This section is devoted to proving salient properties of 
the size thresholds output by \findr. The analysis hinges on the diffusion and cut properties
stated in \Thm{restrict-cut}, which is the main tool connecting minor-freeness, diffusions,
and local partitioning.
\Sec{amort} uses all these tools to prove the cut bound of \globpart. At this stage, the complete
description and guarantees of the partition oracle are complete, modulo the proof of \Thm{restrict-cut}.

The proof of \Thm{restrict-cut} is split into sections. In \Sec{diffusion}, we use the hyperfiniteness
of minor-closed families to prove properties of truncated diffusions on minor-free families. 
\Sec{ls-cluster} has the key spectral
calculations, where the Lov\'{a}sz-Simonovits curve technique is used to find low conductance cuts.
This section has the crucial insights that allow for partitioning in the free set, using diffusions
in the overall graph.

\Sec{appl} has short proofs of the applications \Thm{testers} and \Thm{approx}. These are provided for
completeness, since identical calculations appear in the proof of Theorem 9.28 in~\cite{G17-book}.

\section{Global partitioning and its local implementation} \label{sec:prelims}

% The final partition oracle has many moving part and parameters. We describe a global partitioning algorithm
% that produces the desired output of \Thm{main}, but runs in linear time. We then give a local implementation
% of this algorithm, from which \Thm{main} can be proven. The various properties of the global
% partition algorithm form the heart of the analysis, which appears in subsection sections. 
% The aim of this section is to only show the algorithm.

There are a number of parameters that are used in the algorithm. We list them out here
for reference. It is convenient to fix the value of $\eps$ in advance, so that all
the values of the following parameters are fixed. Note that all these parameters are polynomial
is $\eps$. We will express all running times as polynomials in these parameters, ensuring all running time are $\poly(\eps^{-1})$.

\parameterinfo

\subsection{Truncated diffusion} \label{sec:trunc}

The main process used to find sets of the partition is a \emph{truncated diffusion}.
We assume that the input graph $G$ is connected, has $n$ vertices, and degree bound $d$.
Define the symmetric random walk matrix $M$ as follows. For every edge $(u,v)$,
$M_{u,v} = M_{v,u} = 1/2d$. For every vertex $v$, $M_{v,v} = 1-d(v)/2d$, where $d(v)$
is the degree of $v$. The matrix $M$ is doubly stochastic, symmetric, and the (unique) stationary
distribution is the uniform distribution.

Given a vector $\vec{x} \in (\RR^+)^n$, diffusion is the evolution $M^t\vec{x}$. We define
a truncated version, where after every step, small values are removed.
For any vector $\vec{x}$, let $\supp(\vec{x})$
denote the support of the vector.

\begin{definition} \label{def:trun}	Define the operator $\trwalk \colon (\R^+)^n \to (\R^+)^n$ as follows.
For $\vec{x} \in (\R^+)^n$,
the vector $\trwalk\vec{x}$ is obtained by zeroing out all coordinates in $M\vec{x}$ whose value is at most $\minp$. 

For $t > 1$, the operator $\trwalk^t$ is the $t$-step truncated diffusion, and is recursively defined as $\trwalk(\trwalk^{t-1}\vec{x})$.

Define $\trunp{v}{t}{w}$ to be the coordinate corresponding to vertex $w$ in the $t$-step truncated diffusion
starting from vertex $v$.
\end{definition}

We stress that the $t$-step truncated diffusion is obtained from a standard diffusion by truncating low
values at \emph{every} step of the diffusion. Note that as the truncated diffusion progresses,
the $l_1$-norm of the vector may decrease at each step. Importantly, for any distribution vector $\vec{x}$,
$\supp(\trwalk^t \vec{x})$ has size at most $\minp^{-1}$. We heavily use this property in our running time 
analysis.

We define \emph{level sets}, a standard concept in spectral partitioning algorithms.
Somewhat abusing notation, for vertex $v \in V$, we use $\vec{v}$ to denote the unit
vector in $(\R^+)^n$ corresponding to the vertex $v$. (We never use to vector notation
for any other kind of vectors.)

\begin{definition} \label{sec:level} For vertex $v \in V$, length $t$, and threshold $k$,
let $\level{v}{t}{k}$ be the set of vertices corresponding to the $k$ largest coordinates
in $\trwalk^t\vec{v}$ (ties are broken by vertex id).

For any set $S$ of vertices, the conductance of $S$ is $\Phi(S) := E(S,\overline{S})/[2\min(|S|,|\overline{S}|)d]$.
(We use $E(S,\overline{S})$ to denote the number of edges between $S$ and its complement.)
\end{definition}

We describe the key subroutine that finds low conductance cuts. It performs a sweep cut over the truncated diffusion vector.

\medskip
\noindent \clustercode \\

\begin{claim} \label{clm:cluster} The procedure \cluster$(v,t,k)$ runs in time $O(\minp^{-1}td\log(\minp^{-1}td) + kd\log k)$. 
The output set $C$ has the following properties. (i) $v \in C$. (ii) 
If $C$ is not a singleton, then $|C| \in [k,2k]$, $\Phi(C) \leq \phi$, and $C \subseteq \supp(\trwalk^t\vec{v})$.
\end{claim}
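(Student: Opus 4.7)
The proof plan is to inspect the pseudocode line by line and read each set-theoretic conclusion directly off the selection step, then bound the running time using the fact that truncated diffusions have support of size at most $\minp^{-1}$ at every time step.

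For the output properties I would proceed as follows. Membership $v \in C$ is immediate since both branches in Step 4 take the union with $\{v\}$. In the non-singleton branch, Step 3 explicitly selects a $k' \in [k,2k]$ that satisfies $\Phi(\level{v}{t}{k'} \cup \{v\}) \leq \phi$ and $\level{v}{t}{k'} \subseteq \supp(\trwalk^{t}\vec{v})$; the first of these gives the conductance bound on $C$ for free. To extend the support containment from $\level{v}{t}{k'}$ to $C = \level{v}{t}{k'} \cup \{v\}$, I would verify that $v \in \supp(\trwalk^{t}\vec{v})$: the diagonal $M_{vv} = 1 - d(v)/2d \geq 1/2$ implies, by induction on $s$, that $\trunp{v}{s}{v} \geq 2^{-s}$, and the parameters $\minp = \minpval$ and $t \leq \len = \lenval$ are chosen so this lower bound stays above $\minp$. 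A companion observation gives the size bound $|C| \in [k,2k]$: since $v$ lies in the top support, one argues $v \in \level{v}{t}{k'}$, whence $|C| = k'$. I expect this to be the main set-theoretic subtlety, since it is where the size window $[k,2k]$ (rather than $[k,2k+1]$) is actually being used.

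The running time is the bulk of the work. I would decompose the cost into the diffusion step and the enumeration over $k'$. For Step 1, truncation enforces $|\supp(\trwalk^{s}\vec{v})| \leq \minp^{-1}$ at every $s$, since every surviving coordinate is at least $\minp$ and the $\ell_1$-mass is bounded by $1$. A single application of $\trwalk$ therefore pushes mass along at most $\minp^{-1} d$ edges in $O(\minp^{-1} d)$ time, and truncates the intermediate vector (of support at most $\minp^{-1}(d+1)$) in $O(\minp^{-1} d \log(\minp^{-1} d))$. Iterating $t$ times and absorbing logarithmic factors yields $O(\minp^{-1} t d \log(\minp^{-1} t d))$. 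For Steps 2--4, I would partial-sort the top $2k$ entries of $\trwalk^{t}\vec{v}$ and sweep $k' = k, k+1, \ldots, 2k$ in order, maintaining the number of boundary edges and the volume of the set incrementally: at each increment only one vertex is added, so its at most $d$ incident edges are reclassified as internal or boundary in $O(d \log k)$ using a dictionary keyed by vertex id. Initializing the sweep at $k' = k$ costs $O(kd)$, and the $k$ subsequent updates cost $O(kd \log k)$ in total.

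The only obstacle I anticipate is making the set-theoretic conclusions sharp, specifically certifying that $v \in \level{v}{t}{k'}$ so that the union with $\{v\}$ does not grow the set beyond $2k$ or outside the support. The running time analysis is a standard exercise once the support-size invariant is installed, and the conductance condition is carried over verbatim from Step 3.
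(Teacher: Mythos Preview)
Your overall approach matches the paper's: the output properties are read off the pseudocode, and the running time is bounded via the $\minp^{-1}$ support-size invariant plus an incremental sweep with a dictionary. Your running-time analysis is essentially identical to the paper's, just more explicit about the logarithmic factors.

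There is one genuine error. Your induction for $\trunp{v}{s}{v} \geq 2^{-s}$ only goes through while $2^{-s} \geq \minp$, since otherwise the $v$-coordinate may be truncated to zero and the inductive hypothesis is lost at the next step. With the paper's parameters $\log_2(\minp^{-1}) = \Theta(\log(d\eps^{-1}))$, whereas $t$ may be as large as $\len = d^6\eps^{-30}$; for small $\eps$ one has $2^{-\len} \ll \minp$, so the assertion that ``this lower bound stays above $\minp$'' is false. Consequently you cannot conclude $v \in \supp(\trwalk^t\vec{v})$ or $v \in \level{v}{t}{k'}$ this way, and the sharp bounds $|C| \in [k,2k]$ and $C \subseteq \supp(\trwalk^t\vec{v})$ remain unjustified. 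The paper does not address this point either (it simply declares the properties ``apparent from the description of \cluster''); fortunately the downstream uses only need $\level{v}{t}{k'} \subseteq \supp(\trwalk^t\vec{v})$ and a size bound of $2k+1$ in place of $2k$, both of which are immediate from Step~3.
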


\begin{proof} The latter properties are apparent from the description of \cluster.

We analyze the running time. The convenience of the truncated diffusion
is that it can computed exactly by a deterministic process. First, for any $b \geq 1$, we show
that the running time to compute $\trwalk^b\vec{v}$ is $O(\minp^{-1}bd)$. 
Note that for any $t$, $\supp(\trwalk^b\vec{v})$ has size at most $\minp^{-1}$,
since $M$ is a stochastic matrix and all non-zero entries in $\trwalk^b\vec{v}$
have value at least $\minp$. Given the vector $\trwalk^b\vec{v}$, the vector $\trwalk^{b+1}\vec{v}$
can be computed by determining $M\trwalk^b\vec{v}$ and then zeroing out coordinates that are less than $\minp$.
This process can be done in $O(d| \supp(\trwalk^b\vec{v})|) = O(\minp^{-1}d)$.
By summing this running time over all timesteps, we get that the total time is $O(\minp^{-1}bd)$.

Thus, $\trwalk^t\vec{v}$ can be computed exactly in $O(\minp^{-1}td)$ time. To compute the level sets,
one can sort the coordinates of this vector (breaking ties by id), and process them in decreasing order. One can
iteratively store $\level{v}{t}{k}$ in a dictionary data structure. Given $\Phi(\level{v}{t}{k})$,
one can compute $\Phi(\level{v}{t}{k+1})$ by $O(d)$ lookups into the dictionary. The total
running time of this step is $O(kd\log k)$.
\end{proof}

\subsection{The global partitioning procedure} \label{sec:global}

The global partitioning procedure \globpart{} will output a partition of the vertices
satisfying the conditions in \Def{oracle}. This global procedure will run in linear time.
In the next subsection, we show
how the output of the global procedure can be generated locally in $\poly(\eps^{-1})$ time, thereby giving
us the desired partition oracle. It will be significantly easier to understand and analyze the partition
properties of the global procedure.

The key ingredient in \globpart{} that allows for a local implementation is a \emph{preprocessing}
step. The preprocessing allows for the ``coordination" required for consistency of various
local partitioning steps. All the randomness is used in the preprocessing, after which
the actual partitioning is deterministic. The job of the preprocessing 
is to find the following sets of values, which are used for two goals: (i) ordering
vertices, (ii) setting parameters for calls to \cluster.

The preprocessing generates, for all vertices $v$, the following values.
\begin{asparaitem}
    \item $h_v$: The \emph{phase} of $v$.
    \item $k_v$: The size threshold of $v$.
    \item $t_v$: The walk length of $v$.
\end{asparaitem} 

\medskip
Before giving the procedure description, we explain how these values are generated.

{\em Phases:} For each $v$, $h_v$ is set to $\max(X,\overline{h})$,
where $X$ is independently sampled from $Geo(\delta)$, the geometric distribution with parameter $\delta$. 
Moreover $\overline{h} := \maxphase$, so the maximum phase value is capped. 

{\em Size thresholds:} The computation of these thresholds is the most complex part of our algorithm
(and analysis), and is the ``magic ingredient" that makes the partition oracle possible. 
We first run a procedure $\findr$ that runs in $\poly(\eps^{-1})$ time and outputs
a set of \emph{phase size thresholds} $k_1, k_2, \ldots, k_{\overline{h}}$. All the thresholds
have value at most $\minp^{-1}$ and $k_{\overline{h}}$ will be zero.
The (involved) description
of \findr{} and its properties are in \Sec{findr}. For now, it suffices to say
that its running time is $\poly(\eps^{-1})$, and that it outputs phase size thresholds. The
size threshold for a vertex $v$ is simply $k_{h_v}$, corresponding to the phase it belongs to.

{\em Walk lengths:} These are simply chosen independently and uniformly in $[1,\ell]$.

\medskip
The analysis is more transparent when we assume that all the randomness used by the algorithm
is in a random seed $\bR$, of $O(n\cdot\poly(\eps^{-1}))$ length. The seed $\bR$ is passed as an argument
to the partitioning procedure, which uses $\bR$ to generate all the values described above.
(For convenience, we will assume random access to the adjacency list of $G$, without passing
the graph as a parameter.)

It is convenient to define an ordering on the vertices, given these values. For cleaner
notation, we drop the dependence on $\bR$.

\begin{definition} \label{def:order} For vertex $u, v \in V$, we say that $u \prec v$
if: $h_u < h_v$ or if $h_u = h_v$, the id of $v$ is less than that of $v$.
\end{definition}

\fbox{
    \begin{minipage}{0.9\textwidth}
    {\globpart$(\bR)$} \\
    \smallskip
    Preprocessing:
    \begin{compactenum}
       \item For every $v \in V$:
        \begin{compactenum}
            \item Use $\bR$ to set $h_v := \max(X,\overline{h})$ ($X \sim Geo(\delta)$).
            \item Use $\bR$ to set $t_v$ uniform random in $[1,\ell]$.
        \end{compactenum}
        \item Call \findr$(\bR)$ to generate values $k_1, k_2, \ldots, k_{\overline{h}}$. For every $v \in V$, set $\sizethresh_v = \sizethresh_{h_v}$.
    \end{compactenum}
    \smallskip
    Partitioning:
    \begin{compactenum}
        \item Initialize the partition $\bP$ as an empty collection. Initialize the free set $F := V$.
        \item For all vertices in $V$ in increasing order of $\prec$:
        \begin{compactenum}
            \item Compute $C = \cluster(v, t_v, k_v)$.
            \item Add the connected components of $C \cap F$ to the partition $\bP$.
            \item Reset $F = F \setminus C$.
        \end{compactenum}
        \item Output $\bP$.
    \end{compactenum}
    \end{minipage}
}

\medskip

Since all of our subsequent discussions are about \globpart, we abuse notation
assuming that the preprocessing is fixed. We refer to \cluster$(v)$ to denote \cluster$(v, t_v, k_v)$.
These are the only calls to cluster that are ever discussed, so it is convenient to just parametrize by
the vertex argument. Furthermore, for ease of notation, we sometimes refer to the output of the procedure as \cluster$(v)$.

We observe that the output $\bP$ is indeed a partition of $V$ into connected components. At any intermediate
step, the free set $F$ is precisely the set of vertices that have not been assigned to a cluster.
Note that \cluster$(v)$ always contains $v$ (\Clm{cluster}), so all vertices eventually enter (the sets of) $\bP$.

We note that $v$ might not be in $F$ when \cluster$(v)$ is called. This may lead
to new components in $\bP$ that do not involve $v$, which may actually not be low
conductance cuts. This may seem like an oversight: why initiate diffusion clusters
from vertices that are already partitioned? Many challenges in our analysis arise from such
clusters. On the other hand, such an ``oblivious" partitioning scheme leads to a simple local implementation. 

\subsection{The local implementation} \label{sec:local}

A useful definition in the local implementation is that of \emph{anchors} of vertices.
As mentioned earlier, we fix the output of the preprocessing (which is equivalent
to fixing $\bR$). 

\begin{definition} \label{def:anchor} Consider the running of \globpart$(\bR)$. 
The \emph{anchor} of $w$ is the (unique) vertex $w$ such that the component in $\bP$
containing $v$ was created by the call to \cluster$(v)$.
\end{definition}

Suppose we label every vertex by its anchor. We can easily determine
the sets of $\bP$ locally.

\begin{claim} \label{clm:anchor} The sets of $\bP$ are exactly the maximal
connected components of vertices with the same anchor.
\end{claim}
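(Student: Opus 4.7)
The plan is to prove the claim by directly unpacking the bookkeeping in \globpart{} and the definition of the anchor. For each $v \in V$, let $A_v$ denote the set of vertices whose anchor is $v$. Since \Clm{cluster} guarantees $v \in \cluster(v)$, every vertex eventually gets clustered (at the latest when it is itself processed), and once a vertex is removed from $F$ it never re-enters; hence $\bP$ is a partition of $V$ and $A_v$ is well-defined for every $v$. Let $F_v$ be the free set at the moment \globpart{} processes $v$, and $C_v := \cluster(v)$. By inspection of the algorithm, the sets added to $\bP$ at this step are precisely the connected components (in $G$) of $C_v \cap F_v$, and $F$ is then updated to $F_v \setminus C_v$. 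Therefore the vertices that receive anchor $v$ are exactly those clustered at this step, namely $A_v = C_v \cap F_v$, and the sets in $\bP$ with anchor $v$ are exactly the connected components of $G[A_v]$.

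Now I would establish the two directions of the claim. For the forward direction, let $S \in \bP$ and let $v$ be its anchor. By the observation above, $S$ is a connected component of $G[A_v]$, so $S$ is connected and all its vertices share anchor $v$. To verify maximality, suppose some $w \notin S$ has anchor $v$ and is adjacent in $G$ to some $u \in S$. Then $w \in A_v$, and the edge $(u,w)$ lies in $G[A_v]$, forcing $u$ and $w$ into the same connected component of $G[A_v]$; but that component is $S$, contradicting $w \notin S$. For the reverse direction, any maximal connected set $T$ of vertices whose common anchor is $v$ is a connected subset of $A_v$, and maximality forces $T$ to be a full connected component of $G[A_v]$, hence an element of $\bP$.

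There is essentially no obstacle: the content of the claim is purely combinatorial accounting about the procedure \globpart. The only point that requires a small amount of care is the identification $A_v = C_v \cap F_v$, which uses both that vertices leave $F$ exactly when they are clustered and that a single call to \cluster{} contributes all of its connected components to $\bP$ at once; once this is in place, the maximality argument is a one-line consequence of connected-component structure in the induced subgraph.
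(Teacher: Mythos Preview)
Your proof is correct and follows essentially the same reasoning as the paper. The paper phrases it as an induction over the $\prec$ ordering, while you argue directly by identifying $A_v = C_v \cap F_v$ and reading off the connected-component structure; both arguments rest on the same key observation that the vertices newly clustered when $v$ is processed are exactly $C_v \cap F_v$, and that the pieces of $\bP$ produced at that step are precisely the connected components of this set.
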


\begin{proof} We prove by induction over the $\prec$ ordering of vertices.
The base case is vacuously true. Suppose, just before $v$ is considered,
all current sets in $\bP$ are maximal connected components with the same anchor, which cannot be $v$. 
No vertex in $F$ can have an anchor yet; otherwise, it would be clustered and part of (a set in) $\bP$.
All the new vertices clustered have $v$ as anchor. Moreover, the sets added to $\bP$ are precisely
the maximal connected components with $v$ as anchor. 
\end{proof}

We come to a critical definition that allows for searching for anchors.
We define the ``inverse ball" of a vertex: this is the set of all vertices that reach
$v$ through truncated diffusions. We note that reachability is not symmetric, because
the diffusion is truncated at every step.

\begin{definition} \label{def:ball}
	For $v \in V$, let $\ball(v) = \{w \ | \ \exists t \in [0,\len], v \in \supp(\trwalk^t\vec{w})\}$.
\end{definition}

\begin{claim} \label{clm:ballsize} $|\ball(v)| \leq \len\minp^{-1}$.
\end{claim}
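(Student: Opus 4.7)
The plan is to union-bound over the walk length and, for each fixed $t$, turn the condition $v \in \supp(\trwalk^t \vec{w})$ into a bound on an \emph{untruncated} diffusion quantity, which we can control via the symmetry of $M$.

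First, I would observe that $\ball(v) \subseteq \bigcup_{t=0}^{\len} \{w : v \in \supp(\trwalk^t \vec{w})\}$ by definition, so it suffices to show that for each fixed $t$, the set $S_t := \{w : v \in \supp(\trwalk^t \vec{w})\}$ has size at most $\minp^{-1}$. Assuming this, $|\ball(v)| \leq (\len+1)\minp^{-1}$, which is absorbed into the stated bound (or one may restrict to $t \in [1,\len]$ and add the trivial element $w = v$).

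To bound $|S_t|$, the key chain is as follows. If $v \in \supp(\trwalk^t \vec{w})$, then by the truncation rule in \Def{trun}, $\trunp{w}{t}{v} \geq \minp$. I would then invoke monotonicity of truncation: since $\trwalk$ zeros out some coordinates of $M\vec{x}$ and leaves others unchanged, a straightforward induction on $t$ gives $\trunp{w}{t}{v} \leq \pvector{w}{t}{v} := (M^t\vec{w})_v$ coordinatewise. Finally, the matrix $M$ is symmetric (as noted just after the definition of $M$), hence $M^t$ is symmetric, and so $\pvector{w}{t}{v} = \pvector{v}{t}{w}$. Putting these together, every $w \in S_t$ satisfies $\pvector{v}{t}{w} \geq \minp$.

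The conclusion is then immediate: since $M$ is doubly stochastic and $\vec{v}$ is a probability distribution, $M^t \vec{v}$ is also a probability distribution, so $\sum_w \pvector{v}{t}{w} = 1$, and therefore $|\{w : \pvector{v}{t}{w} \geq \minp\}| \leq \minp^{-1}$. Summing over the at most $\len$ relevant values of $t$ yields $|\ball(v)| \leq \len\minp^{-1}$. There is no real obstacle here; the only subtle point is the monotonicity step $\trunp{w}{t}{v} \leq \pvector{w}{t}{v}$, which is a one-line induction, and the use of symmetry of $M$ to ``reverse" the roles of $v$ and $w$ so we can sum against the row-stochasticity of $M^t$.
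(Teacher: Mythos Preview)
Your proposal is correct and follows essentially the same argument as the paper: use the truncation threshold to get $\trunp{w}{t}{v}\geq\minp$, invoke coordinatewise monotonicity to pass to $\pvector{w}{t}{v}\geq\minp$, use symmetry of $M$ to swap to $\pvector{v}{t}{w}\geq\minp$, and then bound each fixed-$t$ slice by $\minp^{-1}$ via stochasticity. The paper's proof is terser (it leaves the monotonicity step implicit), but the logic is identical.
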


\begin{proof} All vertices $w \in \ball(v)$ have the property
that (for some $t \leq \len$) $\trunp{w}{t}{v} \neq 0$. That implies that $\pvector{w}{t}{v} \geq \minp$.
By the symmetry of the random walk, $\pvector{v}{t}{w} \geq \minp$. For any fixed $t$, there are at most $\minp^{-1}$
such vertices $w$. Overall, there can be at most $\len\minp^{-1}$ vertices in $\ball(v)$.
\end{proof}

Now we have a simple characterization of the anchor that allows for local implementations.

\begin{lemma} \label{lem:anchor} The anchor of $v$ is the smallest vertex (according to $\prec$)
in the set $\{s | s \in \ball(v) \ \textrm{and} \ v \in \textrm{\cluster}(s)\}$.
\end{lemma}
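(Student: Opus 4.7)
The plan is to prove the lemma by verifying two assertions: first, that the anchor $a$ of $v$ belongs to the set $S := \{s : s \in \ball(v) \text{ and } v \in \cluster(s)\}$, and second, that no $s' \prec a$ belongs to $S$. Together these show that $a$ is the $\prec$-minimum of $S$, which is exactly the statement of the lemma.

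For membership, I would unpack \Def{anchor}: since $a$ is the anchor of $v$, the component of $\bP$ containing $v$ was created by the call $\cluster(a)$, hence $v \in \cluster(a)$. Now I apply \Clm{cluster}: either $\cluster(a) = \{a\}$, forcing $v = a$, or $\cluster(a) \subseteq \supp(\trwalk^{t_a}\vec{a}) \cup \{a\}$. In the singleton case $a = v \in \ball(v)$ by taking $t = 0$ in \Def{ball}; in the non-singleton case either $v = a$ (same argument) or $v \in \supp(\trwalk^{t_a}\vec{a})$ with $t_a \in [1,\len]$, so again $a \in \ball(v)$. Either way $a \in S$.

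For minimality, I would argue by contradiction. Suppose some $s' \prec a$ lies in $S$; then by \Def{order} and the main loop of \globpart, the call $\cluster(s')$ occurs strictly before the call $\cluster(a)$. By \Def{anchor}, the component of $\bP$ containing $v$ is only created when $\cluster(a)$ is executed, so $v$ must still be in the free set $F$ immediately before that call. But $F$ only shrinks as the loop proceeds, and a vertex is removed from $F$ exactly when it appears in some $\cluster(\cdot)$ output; hence $v \in F$ at every earlier step, in particular just before $\cluster(s')$ was invoked. Since $v \in \cluster(s')$ by assumption, the connected component of $\cluster(s') \cap F$ containing $v$ is then added to $\bP$ at that step, making the anchor of $v$ equal to $s'$, not $a$ — a contradiction.

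I expect the main obstacle to be nothing more than careful bookkeeping: specifically, tracking that $v$ cannot leave $F$ between the two calls $\cluster(s')$ and $\cluster(a)$, which relies on the monotonicity of $F$ and on the fact that $v$ only exits $F$ via a cluster output containing it. The only mild subtlety is the singleton branch of \cluster, which is cleanly handled by the $t = 0$ case in \Def{ball}; without that convention, a vertex $v$ whose anchor $a$ satisfies $a = v$ and $\cluster(a) = \{a\}$ might not obviously lie in $\ball(v)$.
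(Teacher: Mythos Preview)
Your proposal is correct and follows essentially the same approach as the paper's proof: first show the anchor lies in the set by invoking \Clm{cluster} to get $v$ into the support of the truncated diffusion from the anchor (handling the singleton case separately via $t=0$), then argue minimality by contradiction using the processing order of \globpart{}. If anything, your treatment of the edge cases (the $t=0$ case of \Def{ball} and the monotonicity of $F$) is slightly more explicit than the paper's.
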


\begin{proof} Let the anchor of $v$ be the vertex $u$. We first argue that $u$ in the given set.
Clearly, $v \in \cluster(u)$.  If $u=v$, then $u = v \in \ball(v)$ and we are done. Suppsoe $u \neq v$.
Then $\cluster(u)$ is not a singleton (since it contains $v$). By \Clm{cluster}, $\cluster(u)$
is contained in the support of $\trwalk^{t_v}\vec{u}$, implying that $v \in \supp(\trwalk^{t_v}\vec{u})$.
Thus, $u \in \ball(v)$ and the anchor $u$ is present in the given set.

It remains to argue that $u$ is the smallest such vertex. Suppose there exists $u' \prec u$ in this set.
In \globpart{}, \cluster$(u')$ is called before \cluster$(u)$. At the end of this call, $v$ is partitioned
and would have $u'$ as its anchor. Contradiction.
% At the end of this iteration, $v$ is still in the free set $F$
% (otherwise, $u$ could not be the anchor of $v$). Since $v \in \cluster(u')$, $v$ will leave the free set in this iteration.
\end{proof}

We are set for the local implementation. For a vertex $v$, we compute $\ball(v)$ and
run \cluster$(u)$ for all $u \in \ball(v)$. By \Lem{anchor}, we can compute the anchor of $v$,
and by \Clm{anchor}, we can perform a BFS to find all connected vertices with the same anchor. 

We begin by a procedure that computes $\ball(v)$. Since the truncated diffusion is not symmetric,
this requires a little care. We use $N(u)$ to denote the neighborhood of vertex $u$.

\medskip

\fbox{
\begin{minipage}{0.9\textwidth}
    {\findball$(v)$}
    \smallskip
    \begin{compactenum}
        \item Initialize $S = \{v\}$.
        \item For every $t = 1, \ldots, \ell$:
        \begin{compactenum}
            \item For every $w \in S \cup N(S)$, compute $\trwalk^t\vec{w}$. If $v \in \supp(\trwalk^t\vec{w})$, add $v$ to $S$.
        \end{compactenum}
        \item Return $S$.
    \end{compactenum}
\end{minipage}
}

\medskip

\begin{claim} \label{clm:findball} The output of \findball$(v)$ is $\ball(v)$. The running time
is $O(d^2\len^3\minp^{-2})$.
\end{claim}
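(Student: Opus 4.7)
The plan is to prove the two halves of the claim — that the output $S$ of \findball$(v)$ equals $\ball(v)$, and that the running time is $O(d^2 \len^3 \minp^{-2})$ — separately. For the set equality, soundness ($S \subseteq \ball(v)$) is immediate: a vertex $w$ is added to $S$ only when the explicit check $v \in \supp(\trwalk^t\vec w)$ succeeds at some $t \in [1,\len]$, which is exactly the defining condition for $w \in \ball(v)$ in \Def{ball}.

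The substantive direction is completeness, $\ball(v) \subseteq S$, which I would prove by strong induction on the smallest witness $t^\star(w) := \min\{t \in [0,\len] : v \in \supp(\trwalk^t\vec w)\}$. The base case $t^\star(v) = 0$ is immediate since $v \in S$ initially. For the inductive step at $w$ with $t^\star(w) = t \geq 1$, it suffices to place $w$ in $S \cup N(S)$ at the start of iteration $t$, since the inner check $v \in \supp(\trwalk^t\vec w)$ then succeeds and $w$ is added. Unwinding the recursive definition $\trwalk^i\vec w = \trwalk(\trwalk^{i-1}\vec w)$ together with the per-step truncation, one extracts a chain $w = z_0, z_1, \ldots, z_t = v$ in $G$ with $z_{i+1} \in N[z_i]$ and $z_i \in \supp(\trwalk^i\vec w)$. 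Let $j \geq 1$ be the smallest index with $z_j \neq w$ (which exists because $w \neq v$), and set $u = z_j$, so $u \in N(w)$. The core subclaim is that $u \in \ball(v)$ with $t^\star(u) \leq t - j < t$; combined with the inductive hypothesis this places $u \in S$ before iteration $t$, forcing $w \in N(S)$ at the start of iteration $t$.

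The hard part will be this subclaim: one cannot simply transplant the witness from $w$ to $u$, because truncation is nonlinear and $\trwalk^{t-j}\vec u$ is a genuinely different process from $\trwalk^t\vec w$ restricted past step $j$. To prove it, I would track the suffix chain $u = z_j, z_{j+1}, \ldots, z_t = v$ directly inside the diffusion from $u$, using that each matrix entry $M[z_{i+1}, z_i]$ along the chain is at least $1/(2d) \geq \minp$ under our parameter settings and that self-loop contributions are at least $1/2$, so that mass at $v$ is driven above $\minp$ by step $t-j$. A cleaner alternative, if it can be established as a standalone structural lemma, would be to show that $\ball(v)$ is connected in $G$ and that every vertex in $\ball(v) \setminus \{v\}$ has a graph-neighbor in $\ball(v)$ of strictly smaller witness; this yields the BFS-style correctness of \findball in one stroke.

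For the running time, \Clm{ballsize} gives $|\ball(v)| \leq \len \minp^{-1}$, and since $S \subseteq \ball(v)$ throughout, $|S \cup N(S)| = O(d \len \minp^{-1})$. Each diffusion $\trwalk^t\vec w$ is computable in $O(\minp^{-1} t d)$ time by the argument used inside the proof of \Clm{cluster}. Summing $O(d \len \minp^{-1}) \cdot O(\minp^{-1} t d)$ over $t = 1, \ldots, \len$ yields $O(d^2 \len \minp^{-2}) \cdot \sum_{t=1}^\len t = O(d^2 \len^3 \minp^{-2})$, as claimed.
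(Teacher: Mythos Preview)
Your running-time analysis and the soundness direction ($S\subseteq\ball(v)$) are fine and match the paper. The genuine gap is in completeness, specifically in the subclaim that the neighbor $u=z_j$ satisfies $t^\star(u)\le t-j$. Your proposed proof---pushing mass along the suffix chain $u=z_j,z_{j+1},\dots,z_t=v$ and invoking $M[z_{i+1},z_i]\ge 1/(2d)$ at each step---only guarantees $(1/(2d))^{t-j}$ mass at $v$ under the diffusion from $u$. Since $t-j$ can be as large as $\len=d^6\eps^{-30}$ while $\minp=d^{-60}\eps^{3000}$, this quantity is double-exponentially below $\minp$, so the mass you are tracking is truncated almost immediately and never certifies $v\in\supp(\trwalk^{t-j}\vec u)$. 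Your ``cleaner alternative'' (that every $w\in\ball(v)\setminus\{v\}$ has a neighbor with strictly smaller witness time) is not an independent route; it is precisely the lemma your chain argument was meant to establish.

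The paper avoids chains altogether and instead peels off a \emph{single} step. Because every nonzero entry of $M\vec w$ is at least $1/(2d)>\minp$, the first step is never truncated, so $\trwalk^{t+1}\vec w=\trwalk^t(M\vec w)$ with $M\vec w$ a convex combination of the point masses $\vec{w'}$ over $w'\in N(w)\cup\{w\}$. The paper then writes $\trunp{w}{t+1}{v}$ as that same convex combination of the values $\trunp{w'}{t}{v}$ and concludes by averaging that some $w'\in N(w)\cup\{w\}$ already has $\trunp{w'}{t}{v}\ge\minp$; by induction $w'\in S$, hence $w\in S\cup N(S)$ at iteration $t+1$. This is a one-line reduction rather than a chain construction. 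Be aware, though, that this displayed identity implicitly treats $\trwalk^t$ as linear on a convex combination, which is not literally true under per-step truncation; the paper is terse at this point, and if you try to reproduce it you will need to argue carefully why the averaging inequality survives the nonlinearity you correctly flagged.
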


\begin{proof} We prove by induction on $t$, that after $t$ iterations of the loop,
$S$ is the set $\{w \ | \ \exists t' \in [0,t], v \in \supp(\trwalk^{t'}\vec{w})\}$.
The base case $t=0$ holds because $S$ is initialized to $\{v\}$. Now for the induction.
Consider some $w$ such that $v \in \supp(\trwalk^{t+1}\vec{w})$. 
This means that $(1-d(w)/2d)\trunp{w}{t}{v} + (1/2d)\sum_{w' \in N(w)} \trunp{w'}{t}{v} \geq \minp$.
Since the LHS is an average, for some $w' \in N(w) \cap \{w\}$, $\trunp{w'}{t}{v} \geq \minp$.
Hence, $v \in \supp(\trwalk^t\vec{w'})$, and by induction $w' \in S$ at the beginning of the $(t+1)$th iteration.
The inner loop will consider $w$ (as it is either $w'$ or a neighbor of $w'$), correctly determine that $v \in \supp(\trwalk^{t+1}\vec{w})$,
and add it to $S$. By construction, every (new) vertex $w$ added to $S$ has the property that $v \in \supp(\trwalk^{t+1}\vec{w})$.
This completes the induction and the output property.

For the running time, observe that for all iterations, $S \subseteq \ball(v)$. By \Clm{ballsize},
$|S| \leq \len\minp^{-1}$. Hence, $|S \cup N(S)|$ has size $O(d\len\minp^{-1})$.
The computation of each $\trwalk^t\vec{w}$ can be done in $O(d\len\minp^{-1})$ time, since the
distribution vector after each step has support size at most $\minp^{-1}$. The total
running time of each iteration is $O(d^2\len^2\minp^{-2})$. There are at most $\len$ iterations,
leading to a total running time of $O(d^2\len^3\minp^{-2})$.

\end{proof}

We can now describe the local partitioning oracle (modulo the description of \findr).

\medskip

\fbox{
\begin{minipage}{0.9\textwidth}
    {\findanchor$(v, \bR)$}
    \smallskip
    \begin{compactenum}
        \item Run \findr$(\bR)$ to get the set $K = \{k_1, k_2, \ldots, k_{\overline{h}}\}$.
        \item Run \findball$(v)$ to compute $\ball(v)$.
        \item Initialize $A = \emptyset$.
        \item For every $s \in \ball(v)$:
        \begin{compactenum}
            \item Using $\bR$ determine $h_s, t_s$. Using $K$, determine $k_s$.
            \item Compute $C = \cluster(s,t_s,k_s)$.
            \item If $C \ni v$, then add $s$ to $A$.
        \end{compactenum}
        \item Output the smallest vertex according to $\prec$ in $A$.
    \end{compactenum}
\end{minipage}
}

\medskip

\fbox{
\begin{minipage}{0.9\textwidth}
    {\findpart$(v, \bR)$}
    \smallskip
    \begin{compactenum}
        \item Call \findanchor$(v,\bR)$ to get the anchor $s$.
        \item Perform BFS from $v$. For every vertex $w$ encountered, first call \findanchor$(w,\bR)$.
        If the anchor is $s$, add $w$ to the BFS queue (else, ignore $w$).
        \item Output the set of vertices that entered the BFS queue.
    \end{compactenum}
\end{minipage}
} \\

The following claim is a direct consequence of \Lem{anchor} and \Clm{findball}.

\begin{claim} \label{clm:findanchor} The procedure \findanchor$(v,\bR)$ outputs
the anchor of $v$ and runs in time $O((d\len\minp^{-1})^3)$ plus the running time of \findr.
\end{claim}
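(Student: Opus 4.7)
The proof is essentially a bookkeeping exercise that glues together the pieces already established. The plan is to first verify correctness by matching \findanchor{} against the characterization in \Lem{anchor}, and then tally up the running time from the constituent subroutines.

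For correctness, I would argue as follows. By \Clm{findball}, Step 2 of \findanchor{} correctly computes the set $\ball(v)$. Now the loop in Step 4 iterates over exactly this set and, for each candidate $s$, recomputes $h_s, t_s, k_s$ from $\bR$ and the output of \findr$(\bR)$ in exactly the same way \globpart{} does. Hence the call $\cluster(s, t_s, k_s)$ inside \findanchor{} produces the same set as $\cluster(s)$ does inside \globpart{}. Therefore, the set $A$ built in Step 4 is precisely $\{s \in \ball(v) : v \in \cluster(s)\}$, and by \Lem{anchor} the minimum of this set under $\prec$ is the anchor of $v$, matching what Step 5 returns.

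For the running time, I would account for each step separately. Step 1 contributes the cost of \findr. Step 2 costs $O(d^2\len^3\minp^{-2})$ by \Clm{findball}. The loop in Step 4 runs for $|\ball(v)| \leq \len\minp^{-1}$ iterations (\Clm{ballsize}); inside each iteration, the only nontrivial work is the call to \cluster$(s,t_s,k_s)$, which by \Clm{cluster} takes $O(\minp^{-1} t_s d \log(\minp^{-1} t_s d) + k_s d \log k_s)$ time. Since $t_s \leq \len$ and $k_s \leq \minp^{-1}$, each iteration costs $\tilde{O}(d\len\minp^{-1})$, for a total loop cost of $\tilde{O}(d \len^2 \minp^{-2})$. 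Finally Step 5 is an $O(|A|) = O(\len\minp^{-1})$ scan using values already cached. Summing these dominating terms and absorbing logarithmic factors into the polynomial, the overall cost outside of \findr{} is bounded by $O((d\len\minp^{-1})^3)$, as claimed.

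The only potential subtlety, and the step I would be most careful about, is ensuring that the calls to \cluster{} inside \findanchor{} produce bit-identical outputs to those inside \globpart{}: this relies on \cluster{} being a deterministic function of $(v,t_v,k_v)$ (true by \Clm{cluster}, since the truncated diffusion is deterministic and tie-breaking in level sets is by vertex id), and on $h_v, t_v, k_v$ being recoverable from $\bR$ and \findr$(\bR)$ alone. Both points have been arranged by construction, so no real obstacle remains; the claim falls out immediately.
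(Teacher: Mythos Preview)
Your proof is correct and follows essentially the same approach as the paper: correctness via \Lem{anchor} after observing that \findanchor{} builds exactly the set $\{s \in \ball(v) : v \in \cluster(s)\}$, and the running time by summing the cost of \findball{} (\Clm{findball}) with $|\ball(v)| \leq \len\minp^{-1}$ calls to \cluster{} (\Clm{ballsize}, \Clm{cluster}). Your extra remark about the determinism of \cluster{} and the reproducibility of $h_s, t_s, k_s$ from $\bR$ is a nice point the paper leaves implicit.
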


\begin{proof} Observe that \findanchor$(v,\bR)$ finds $\ball(v)$,
computes \cluster$(s)$ for each $s \in \ball(v)$, and outputs the smallest (by $\prec$) $s$
such that $v \in \cluster(s)$. By \Lem{anchor}, the output is the anchor of $v$.

By \Clm{findball}, the running time of \findball$(v)$ is $O(d^2\len^3\minp^{-2})$. The number of 
calls to \cluster{} is $|\ball(v)|$, which is at most $\len\minp^{-1}$ (\Clm{ballsize}). Each call to \cluster{} runs
in time $O(d\len\minp^{-2})$, by \Clm{cluster} and the fact that $k_s \leq \minp^{-1})$.
Ignoring the call to \findr, the total running time is $O(d^2\len^3\minp^{-3})$.
\end{proof}

\begin{theorem} \label{thm:findpart} The output of \findpart$(v,\bR)$ is precisely the set in $\bP$
containing $v$, where $\bP$ is the partition output by \globpart$(\bR)$. The running time
of \findpart$(v,\bR)$ is $O((d\len\minp^{-1})^4)$ plus the running time of \findr.
\end{theorem}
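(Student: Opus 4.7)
The plan is to split the argument into a correctness part and a running time part, with correctness being almost immediate from the results already established.

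\textbf{Correctness.} First I would invoke \Clm{findanchor} to conclude that the anchor returned by \findanchor$(v,\bR)$ is exactly the anchor $s$ of $v$ as defined in \Def{anchor}. Let $P \in \bP$ be the set containing $v$. By \Clm{anchor}, $P$ is the maximal connected subgraph of $G$ consisting of vertices whose anchor is $s$. The BFS in \findpart{} starts at $v$, and only enqueues a neighbor $w$ if its anchor (computed by \findanchor$(w,\bR)$) equals $s$. Hence, by a straightforward induction on BFS depth, the set of enqueued vertices is contained in $P$; conversely, since $P$ is connected and every vertex in $P$ has anchor $s$, every vertex of $P$ is eventually enqueued. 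Thus the output equals $P$, as required.

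\textbf{Running time.} Here the main issue is to avoid paying for \findr{} once per BFS vertex. I would observe that the output of \findr$(\bR)$ depends only on $\bR$ and not on the query vertex, so we can compute it a single time at the very beginning of \findpart{} and cache the values $k_1,\dots,k_{\overline{h}}$; all subsequent calls to \findanchor{} within \findpart{} then skip the \findr{} step. Granted this, each \findanchor{} call costs $O((d\len\minp^{-1})^3)$ by \Clm{findanchor}. Next I would bound the number of BFS steps: the BFS enqueues only vertices in $P$, and for each enqueued vertex we run \findanchor{} on at most $d$ neighbors. Since $P \subseteq \cluster(s)$, \Clm{cluster} (together with $k_s \leq \minp^{-1}$) gives $|P| \leq 2\minp^{-1}$, so the total number of \findanchor{} calls is $O(d\minp^{-1})$. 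Multiplying, the partitioning phase runs in $O(d\minp^{-1}) \cdot O((d\len\minp^{-1})^3) = O(d^4 \len^3 \minp^{-4}) = O((d\len\minp^{-1})^4)$ time. Adding the one-time cost of \findr{} gives the claimed bound.

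\textbf{Main obstacle.} Neither step is deep; the only subtle point is the bookkeeping that \findr{} is invoked once rather than once per BFS vertex (otherwise the running-time bound would be additive in $|P|$ copies of the \findr{} cost), together with using the cluster-size cap from \Clm{cluster} to bound $|P|$ by $\minp^{-1}$ independently of $n$. Once these are in place, the rest is a mechanical combination of \Lem{anchor}, \Clm{anchor}, \Clm{findanchor}, and \Clm{cluster}.
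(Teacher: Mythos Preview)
Your proposal is correct and follows essentially the same approach as the paper: both use \Clm{findanchor} and \Clm{anchor} for correctness, bound the piece size by $O(\minp^{-1})$ via the cluster output, count $O(d\minp^{-1})$ calls to \findanchor{} at cost $O((d\len\minp^{-1})^3)$ each, and explicitly note that \findr{} is cached and invoked only once. The only cosmetic difference is that the paper bounds the piece size by $\minp^{-1}$ using that $\cluster(s) \subseteq \supp(\trwalk^{t_s}\vec{s})$, whereas you use the $|C|\le 2k_s \le 2\minp^{-1}$ bound from \Clm{cluster}; both work equally well.
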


\begin{proof} By \Clm{findanchor}, \findanchor{} correctly outputs the anchor. By \Clm{anchor},
the set $S$ in $\bP$ containing $v$ is exactly the maximal connected component of vertices sharing
the same anchor (as $v$). The set $S$ in $\bP$ is generated in \globpart$(\bR)$ by a call
to \cluster, whose output is a set of size at most $\minp^{-1}$. 
The total number of calls to \findanchor{} made by \findpart$(v,\bR)$ is at most $d\minp^{-1}$,
since a call is made to either a vertex in the set $S$ or a neighbor of $S$.
Overall, the total running time is $O((d\len\minp^{-1})^5)$ plus the running time of \findr. (Instead
of calling \findr{} in each call to \findanchor, one can simply store its output.)
\end{proof}

\section{Coordination through the size thresholds: the procedure \findr} \label{sec:findr}

We now come to the heart of our algorithm; coordination through \findr. This section gives the crucial ingredient in arguing that the partitioning
scheme does not cut too many edges. The
ordering of vertices (to form clusters) is chosen independent of the graph structure. It is highly likely that,
as the partitioning proceeds, newer \cluster$(v)$ sets overlap heavily with the existing partition. Such clusters
may cut many new edges, without clustering enough vertices. Note that \cluster$(v)$ is a low conductance cut
only in the original graph; it might have high conductance restricted to $F$ (the current free set).

To deal with such ``bad" clusters, we need to prove that every so often, \cluster$(v)$ will successfully partition
enough new vertices. Such ``good" clusters allow the partitioning scheme to suffer many bad clusters.
This argument is finally carried about by a careful charging argument. First, we need to argue that such
good clusters exist. The key tool is given by the following theorem, which is proved using
spectral graph theoretic methods. We state the theorem as an independent statement.

\begin{theorem} \label{thm:restrict-cut} Let $G$ be a bounded degree graph in a minor-closed family.
Let $F$ be an arbitrary set of vertices of size at least $\sizefrac n$.
There exists a size threshold $k \leq \minp^{-1}$ such that the following holds. 
For at least $(\sizefrac^2/\log^2\sizefrac^{-1})n$
vertices $s \in F$, there are at least $(\sizefrac/\log^2\sizefrac^{-1})\len$ timesteps $t \leq \len$ such that:
there exists $k' \in [k,2k]$ such that (i) $\level{s}{t}{k'} \subseteq \supp(\widehat{M}^t\vec{s})$, (ii) $\Phi(\level{s}{t}{k'} \cup \{s\}) < \phi$, and (iii) $|\level{s}{t}{k'} \cap F| \geq \sizefrac^3 k$.
\end{theorem}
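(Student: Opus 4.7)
The plan is to exploit the hyperfiniteness of minor-closed families to localize truncated diffusions inside small, $F$-rich subgraphs, and then to apply the Lovász-Simonovits curve technique to extract low-conductance level sets from these localized diffusions. The first step is to invoke a hyperfinite decomposition of $G$ with cut parameter $\eps'$ taken polynomially smaller than $\sizefrac^3 \phi$ and $\minp$, producing pieces of size bounded by $N_0 = \poly(\eps'^{-1})$ after removing $\eps' dn$ edges. Call a piece $P$ \emph{$F$-rich} if $|P \cap F|/|P| \geq \sizefrac/2$. A standard double-counting argument shows that non-rich pieces contribute fewer than $(\sizefrac/2)n$ vertices to $F$, so at least $\sizefrac n/2$ vertices of $F$ reside in rich pieces.

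Next I would localize the diffusion. For a source $s$ lying in a rich piece $P$, the mass of $\trwalk^t\vec{s}$ that ever exits $P$ can be charged to the boundary edges of $P$ and to the per-step truncation losses. Averaging over uniformly random $s \in P$ and summing over $t \leq \len$, the total leakage is bounded by a linear function of $\len \cdot E(P,\bar P)/(2d|P|)$ plus $\len \minp N_0$, which is $o(1)$ by the choice of $\eps'$. Markov's inequality then guarantees that for at least a $(1-\sizefrac/10)$-fraction of sources in each rich piece, the mass of $\trwalk^t\vec{s}$ restricted to $P$ stays above $1/2$ for every $t \leq \len$. This is the ``diffusion stays local'' statement of \Sec{diffusion}, and uses only the hyperfinite structure.

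For each such localized source $s$, apply the Lovász-Simonovits curve to $\trwalk^t\vec{s}$. At any timestep $t$ for which no sweep cut $\level{s}{t}{k'} \cup \{s\}$ supported on $\supp(\trwalk^t\vec{s})$ achieves conductance below $\phi$, the concave envelope $I_t(x)$ of the top-$x$-mass function shrinks multiplicatively by $(1-\Omega(\phi^2))$. Since $I_t$ is bounded below by the mass surviving in $P$ (at least $1/2$) and the truncated support never exceeds $\minp^{-1}$ vertices, the curve can absorb at most $O(\log \minp^{-1}/\phi^2)$ such multiplicative drops over the whole run. By the parameter choices this is $o(\sizefrac \len)$, so for at least a $(1-o(\sizefrac))$-fraction of timesteps $t \leq \len$ some sweep cut of size $k'_{s,t} \leq \minp^{-1}$ has conductance below $\phi$. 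For condition (iii), the sweep cut is contained in $\supp(\trwalk^t\vec{s})$, and by the leakage bound at least half of that support lies in $P$; combined with the $F$-richness of $P$ and a short averaging argument over sources within $P$, at least a $\sizefrac^3$-fraction of the cut lies in $F$ for most good sources.

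Finally I would coordinate a single size threshold $k$ across sources. Bin the values $k'_{s,t} \in [1,\minp^{-1}]$ into the $O(\log \minp^{-1})$ dyadic intervals $[2^i, 2^{i+1}]$. Summing over the $\Omega(\sizefrac n)$ good sources and the $\Omega(\len)$ good timesteps per source, pigeonhole produces a single dyadic bucket $[k,2k]$ that captures an $\Omega(1/\log \minp^{-1})$-fraction of all good (source, timestep) pairs; a second averaging (swapping the order of the sums) isolates $\Omega(\sizefrac^2 n/\log^2 \sizefrac^{-1})$ sources in $F$, each contributing $\Omega(\sizefrac \len/\log^2 \sizefrac^{-1})$ timesteps whose low-conductance sweep cut falls inside this bucket, which is precisely the quantitative conclusion of the theorem. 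I expect the main technical obstacle to be the second paragraph: producing a leakage bound uniform in $t \leq \len$ that simultaneously controls accumulated truncation losses and boundary crossings using only the hyperfinite cut; once this localization is in hand, the Lovász-Simonovits argument and the concluding pigeonhole are comparatively mechanical.
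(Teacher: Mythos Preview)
Your plan handles conditions (i) and (ii) along the lines the paper does, and the final dyadic pigeonhole is exactly how the paper coordinates a single threshold $k$. The genuine gap is condition (iii).

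You write that, once the diffusion is localized to a rich piece $P$, ``combined with the $F$-richness of $P$ and a short averaging argument over sources within $P$, at least a $\sizefrac^3$-fraction of the cut lies in $F$.'' This does not follow. $F$-richness means $|P\cap F|\geq(\sizefrac/2)|P|$; it says nothing about $|\level{s}{t}{k'}\cap F|$ for a particular level set $\level{s}{t}{k'}\subseteq P$. The level set is determined by where the diffusion from $s$ concentrates inside $P$, and there is no reason this should prefer $F\cap P$ over $P\setminus F$. A clean counterexample picture: take $P$ to be a long path with a short $F$-segment attached at one end; for most sources in $P$, the low-conductance sweep cuts you find are intervals of the path disjoint from $F$. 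Averaging over sources does not rescue this, because the sources you care about are themselves in $F$, and the diffusion can still spread into $P\setminus F$ long before it produces any $F$-rich level set.

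The paper couples (ii) and (iii) from the outset rather than establishing them separately. First it shows (\Thm{goodseed}, via random-walk symmetry, not piece-level richness) that for many $s\in F$ and many $t$, the truncated diffusion places $\Omega(\sizefrac)$ mass on $F$. It then defines a timestep to be \emph{leaking} if every level set with $\geq \alpha^2 k/400$ vertices in $F$ has conductance $\geq 1/d\len^{1/3}$, and runs the Lov\'asz--Simonovits drop argument only on those timesteps, restricted to dyadic ``buckets'' that are \emph{heavy for $F$} (carry $\geq\alpha$ of the $F$-mass). The LS drop is measured at the heavy bucket boundary, so too many leaking-and-heavy steps force the curve below zero. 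The conclusion is that most of the timesteps with large $F$-mass are non-leaking, which by definition gives a single level set satisfying (i), (ii), and (iii) simultaneously. You need this simultaneous treatment; a plain LS argument followed by an appeal to $F$-richness of the ambient piece will not produce (iii).
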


The proof of this theorem is deferred to \Sec{ls-cluster}. In this section, we apply this theorem to complete
the description of the partition oracle and prove its guarantees.

We discuss the significance of this theorem. The diffusion used to define $\level{s}{t}{k'}$ occurs in $G$, but
we are promised a low conductance cut with non-trivial intersection with $F$ (since $\phi \ll \sizefrac^3$).
Moreover, such cuts are obtained for a non-trivial
fraction of timesteps, so we can choice one uar. Given oracle access to membership in $F$, it is fairly easy
to find such a size threshold by random sampling.

{\em The importance of phases:} Recall the global partitioning procedure \globpart. We can think of the partitioning process
as divided into phases, where the $h$th phase involves calling \cluster$(v,t_v,\sizethresh_v)$ for all vertices $v$
whose phase value is $h$. Consider the free set at the beginning of a phase $h$, denoting it $F_h$. We apply \Thm{restrict-cut}
to determine the size threshold $\sizethresh_h$. Since all $\sizethresh_v$ values in this phases are precisely $\sizethresh_h$,
this size threshold ``coordinates" all clusters in this phase. As the phase proceeds, the free set shrinks, and the size
threshold $\sizethresh_h$ stops satisfying the properties of \Thm{restrict-cut}. Roughly speaking, at this point, we start
a new phase $h+1$, and recompute the size threshold. The frequency of recomputation is chosen carefully to ensure
that the total running time remains $\poly(\eps^{-1})$.

We now discuss the randomness involved in selecting phases and why geometric random variables are used. Recall
that $h_v$ is independently (for all $v$) set to be $\min(X, \overline{h})$, where $X \sim Geo(\delta)$.
We first introduce some notation regarding phases.

\begin{definition} \label{def:phase-v} The \emph{phase $h$ seeds}, denoted $V_h$, are the vertices whose phase value is $h$.
Formally, $V_h = \{v \ | \ h_v = h\}$. We use $V_{< h}$ to denote $\bigcup_{h' < h} V_h$. (We analogously define $V_{\leq h}, V_{\geq h}$.)

The \emph{free set at phase $h$}, denoted $F_h$, is the free set $F$ in \globpart, just before the first phase $h$
vertex is processed. Formally, $F_h = V \setminus \bigcup_{v \in V_{<h}} \cluster(v)$.
\end{definition}

One can think of the $V_h$s being generated iteratively. Assume that we have fixed the vertices in $V_1, \ldots, V_{h-1}$.
All other vertices are in $V_{\geq h}$, implying that $h_v \geq h$ for such vertices. By the properties of the geometric
random variables, $\Pr[h_v = h+1 | h_v > h] = \delta$. Thus, we can imagine that $V_{h+1}$
is generated by independently sampling each element in $V_{\geq h}$ with $\delta$ probability. We 
restate this observation as \Clm{geo-phase}. \Clm{vh-size} is a simple Chernoff bound argument.

Before proceeding, we state some standard Chernoff bounds (Theorem 1.1 of~\cite{DuPa-book}).

\begin{theorem} \label{thm:chernoff} Let $X_1, X_2, \ldots, X_r$ be independent variables in $[0,1]$. Let $\mu := \EX[\sum_i X_i]$.
\begin{asparaitem}
    \item $\Pr[X \geq 3\mu/2] \leq \exp(-\mu/12)$.
    \item $\Pr[X \leq \mu/2] \leq \exp(-\mu/8)$.
    \item For $t \geq 6\mu$, $\Pr[X \geq t] \leq 2^{-t}$.
\end{asparaitem}
\end{theorem}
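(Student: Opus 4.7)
The plan is to apply the standard exponential moment (Chernoff--Hoeffding) argument; this is textbook material and the paper merely quotes it from~\cite{DuPa-book}. Let $X := \sum_i X_i$. For any real $\lambda$, Markov's inequality on $e^{\lambda X}$ combined with independence yields
\[
\Pr[X \geq a] \leq e^{-\lambda a}\prod_{i=1}^r \EX[e^{\lambda X_i}] \ (\lambda > 0), \qquad \Pr[X \leq a] \leq e^{\lambda a}\prod_{i=1}^r \EX[e^{-\lambda X_i}] \ (\lambda > 0).
\]
The key technical lemma bounds each factor: since $X_i \in [0,1]$, convexity of $x\mapsto e^{\lambda x}$ along the chord from $0$ to $1$ gives $e^{\lambda X_i} \leq 1 + X_i(e^{\lambda} - 1)$ pointwise, so $\EX[e^{\lambda X_i}] \leq 1 + \EX[X_i](e^\lambda-1) \leq \exp(\EX[X_i](e^\lambda-1))$ using $1+z \leq e^z$. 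Multiplying across $i$ produces the master bound $\prod_i \EX[e^{\lambda X_i}] \leq \exp(\mu(e^\lambda-1))$.

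From this, the three bullets follow by choosing $\lambda$ in each regime. For the first, set $a = 3\mu/2$ and $\lambda = \ln(3/2)$: the resulting exponent $\mu(1/2 - (3/2)\ln(3/2))$ is at most $-\mu/12$ by the numerical inequality $(3/2)\ln(3/2) - 1/2 \geq 1/12$. For the second (lower tail, $a = \mu/2$), apply the lower-tail version with $\lambda = \ln 2$: the exponent $\mu(\lambda/2 + e^{-\lambda} - 1) = \mu(\ln 2 - 1)/2$ is at most $-\mu/8$ since $\ln 2 \leq 3/4$. For the third, optimize over $\lambda > 0$ by setting $\lambda = \ln(t/\mu)$, yielding the classical bound $\Pr[X \geq t] \leq (e\mu/t)^t$; when $t \geq 6\mu$ we have $e\mu/t \leq e/6 < 1/2$, so this is at most $2^{-t}$.

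There is no substantive mathematical obstacle here; the only care required is the numerical verification of the constants $1/12$, $1/8$, and the threshold $6\mu$. The conceptual heart is the chord inequality $e^{\lambda x} \leq 1 + x(e^\lambda-1)$ on $[0,1]$, which is precisely where the hypothesis $X_i \in [0,1]$ is used. With this in hand, everything else is a one-parameter optimization of the master bound $e^{-\lambda a}\exp(\mu(e^\lambda-1))$, and choosing the three specific values of $\lambda$ indicated above gives the three stated inequalities directly.
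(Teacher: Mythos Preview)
Your proof is correct and follows the standard exponential-moment (Chernoff--Hoeffding) route. The paper does not actually prove this statement at all; it simply quotes it as Theorem~1.1 of~\cite{DuPa-book}, so there is nothing to compare beyond noting that your argument is the textbook derivation that reference contains.
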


\begin{claim} \label{clm:geo-phase} For all $v \in V$ and $1 < h < \overline{h}$, $\Pr[v \in V_h \ | \ v \in V_{\geq h}] = \delta$.
\end{claim}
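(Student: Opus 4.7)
The plan is to reduce the claim to the memoryless property of the geometric distribution. Recall from the preprocessing description that $h_v$ is obtained from an independent draw $X \sim Geo(\delta)$ (capped at $\overline{h}$). The first step is to note that for the range $1 < h < \overline{h}$ under consideration, the capping is irrelevant: the event $\{h_v = h\}$ coincides with $\{X = h\}$, and for any $h \leq \overline{h}$ the event $\{h_v \geq h\}$ coincides with $\{X \geq h\}$. So conditioning on $v \in V_{\geq h}$ is the same as conditioning on $X \geq h$.

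Next I would write out the conditional probability explicitly. Since $X \sim Geo(\delta)$, we have $\Pr[X = h] = (1-\delta)^{h-1}\delta$ and $\Pr[X \geq h] = (1-\delta)^{h-1}$. Hence
\[
\Pr[v \in V_h \mid v \in V_{\geq h}] \;=\; \Pr[X = h \mid X \geq h] \;=\; \frac{(1-\delta)^{h-1}\delta}{(1-\delta)^{h-1}} \;=\; \delta,
\]
which is exactly the claimed identity. This is the standard memoryless property: conditional on not having ``succeeded'' in the first $h-1$ trials, the probability of succeeding on trial $h$ is again $\delta$.

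There is no real obstacle here; the only thing to verify carefully is that the cap $\overline{h}$ does not interfere with the bookkeeping, which is handled by the strict inequality $h < \overline{h}$ in the hypothesis. I would state this observation explicitly before the one-line computation, since this same memoryless reasoning is what later justifies viewing $V_{h+1}$ as an independent $\delta$-subsample of $V_{\geq h}$ in the surrounding discussion.
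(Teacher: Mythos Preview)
Your proposal is correct and matches the paper's own justification: the paper simply states the claim as a restatement of the memoryless property of the geometric distribution (the paragraph preceding the claim), and your write-up spells out exactly that computation, together with the check that the cap at $\overline{h}$ is immaterial because $h<\overline{h}$.
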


\begin{claim} \label{clm:vh-size} Let $h < \overline{h}$. Condition on the randomness used to specify $V_1, V_2, \ldots, V_{h-1}$.
Let $S$ be an arbitrary subset of $V_{\geq h}$. 
With probability at least $1-2\exp(-\delta |S|/12)$ over the choice of $V_h$, $|S \cap V_h| \in [\delta|S|/2, 2\delta|S|]$.
\end{claim}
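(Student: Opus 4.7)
The plan is to view $|S \cap V_h|$ as a sum of independent Bernoulli indicators and apply the Chernoff bounds from \Thm{chernoff}. Fix the randomness specifying $V_1,\ldots,V_{h-1}$, i.e., the indicators $\mathbb{1}[h_v < h]$ for every $v \in V$. Because the $h_v$ are drawn independently across vertices, conditioning on these indicators leaves the $h_v$ with $v \in V_{\geq h}$ mutually independent, each with its original distribution further conditioned on $\{h_v \geq h\}$. By the memoryless property of $\mathrm{Geo}(\delta)$ (together with the cap at $\overline{h}$, which is irrelevant for $h < \overline{h}$), \Clm{geo-phase} then gives $\Pr[v \in V_h] = \delta$ independently for each $v \in V_{\geq h}$, and in particular for each $v \in S$.

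Now let $X := |S \cap V_h| = \sum_{v \in S} \mathbb{1}[v \in V_h]$, a sum of $|S|$ independent Bernoulli$(\delta)$ variables, so $\mu := \EX[X] = \delta|S|$. The upper-tail Chernoff bound in \Thm{chernoff} gives
\[
\Pr[X \geq 2\mu] \;\leq\; \Pr[X \geq 3\mu/2] \;\leq\; \exp(-\mu/12),
\]
since $2\mu \geq 3\mu/2$. The lower-tail bound gives $\Pr[X \leq \mu/2] \leq \exp(-\mu/8) \leq \exp(-\mu/12)$. A union bound yields
\[
\Pr\bigl[X \notin [\delta|S|/2,\, 2\delta|S|]\bigr] \;\leq\; 2\exp(-\delta|S|/12),
\]
which is exactly the claim.

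The only subtle point (hardly an obstacle) is justifying the conditional independence: conditioning on $V_1,\ldots,V_{h-1}$ is the same as conditioning on the events $\{h_v \geq h\}$ for each $v \in V_{\geq h}$ (and on the exact values of $h_v$ for $v \in V_{<h}$, which are irrelevant to $V_h$). Because the $h_v$ are independent across $v$, this conditioning factorizes, and memorylessness of the geometric distribution preserves the marginal probability $\delta$ for the event $\{h_v = h\}$. After that, the proof is a direct two-sided Chernoff application, so no other technical work is needed.
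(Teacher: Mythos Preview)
Your proof is correct and follows essentially the same approach as the paper: define indicator variables for membership in $V_h$, invoke \Clm{geo-phase} and independence of phase choices to get i.i.d.\ Bernoulli$(\delta)$ variables, then apply the two-sided Chernoff bounds from \Thm{chernoff} and union bound. Your write-up is in fact more careful than the paper's in spelling out why the conditioning preserves independence and the marginal probability $\delta$.
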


\begin{proof} For every $s \in S$, let $X_s$
be the indicator random variable for $s \in V_h$. By \Clm{geo-phase} and independent phase choices
for each vertex, the $X_s$ are independent Bernoullis with $\delta$ probability. By the Chernoff lower tail of \Thm{chernoff},
$\Pr[\sum_{s \in S} X_s \leq \delta|S|/2] \leq \exp(-\delta|S|/8)$ and $\Pr[\sum_{s \in S} X_s \geq 2\delta|S|] \leq \exp(\delta|S|/12)$.
A union bound completes the proof.
\end{proof}

\begin{claim} \label{clm:last-phase} With probability at least $1-2^{-\delta n}$, $|V_{\overline{h}}| \leq \delta n$.
\end{claim}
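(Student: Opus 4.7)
The plan is to view $|V_{\overline{h}}|$ as a sum of $n$ independent Bernoulli indicators, one per vertex, where the indicator for $v$ is $1$ iff the geometric sample drawn for $v$ reached the cap $\overline{h}$. First I would compute the per-vertex probability: since each $h_v$ is obtained by capping an independent $Geo(\delta)$ sample at $\overline{h} = \maxphase$, we have
\[
    \Pr[h_v = \overline{h}] \;=\; \Pr[Geo(\delta) \geq \overline{h}] \;=\; (1-\delta)^{\overline{h}-1} \;\leq\; e^{-\delta(\overline{h}-1)} \;\leq\; 2\delta^{2},
\]
where the last inequality uses $\overline{h} = 2\delta^{-1}\log(\delta^{-1})$ (so $\delta \overline{h} = 2\log(\delta^{-1})$), and the factor of $2$ absorbs the harmless $-1$ in the exponent.

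Next I would use independence across vertices to bound the expectation $\mu := \EX[|V_{\overline{h}}|] \leq 2\delta^{2} n$, and then invoke the large-deviation third bullet of \Thm{chernoff}: for any $t \geq 6\mu$, $\Pr[|V_{\overline{h}}| \geq t] \leq 2^{-t}$. I would apply this with $t = \delta n$; the hypothesis $t \geq 6\mu$ then reduces to $\delta n \geq 12\delta^{2} n$, i.e., $\delta \leq 1/12$, which is comfortably satisfied by $\delta = \deltaval$ for any reasonable $\eps$. This immediately yields $\Pr[|V_{\overline{h}}| \geq \delta n] \leq 2^{-\delta n}$, which is the desired bound.

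There is no real obstacle here: the argument is purely about independent capped geometric random variables and needs no graph-theoretic input. The only thing to be careful about is ensuring that $\overline{h}$ is large enough that the per-vertex cap probability is smaller than $\delta/6$ by at least a constant factor; the specific choice $\overline{h} = 2\delta^{-1}\log(\delta^{-1})$ is calibrated to give a per-vertex probability of $O(\delta^{2})$, providing exactly the slack that the Chernoff hypothesis $t \geq 6\mu$ needs.
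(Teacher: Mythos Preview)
Your proposal is correct and follows essentially the same approach as the paper: both compute the per-vertex probability $\Pr[h_v=\overline{h}]=(1-\delta)^{\overline{h}-1}$, bound it (the paper by $\delta/6$, you by $2\delta^2$), and then apply the third bullet of \Thm{chernoff} with $t=\delta n$ to the sum of the $n$ independent indicators. The only difference is cosmetic---your bound $2\delta^2$ is slightly sharper than the paper's $\delta/6$, but either suffices to verify the hypothesis $t\geq 6\mu$.
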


\begin{proof} Recall that $\overline{h}$ is the last phase and $\overline{h} = \maxphase$.
The probability that $X \sim Geo(\delta)$ is at least $\maxphase$ is $(1-\delta)^{\maxphase-1} < \delta/6$.
Hence, the probability that any vertex lies in $V_{\overline{h}}$ is at most $\delta/6$ and the expectation
of $V_{\overline{h}}$ is at most $\delta n/6$. . By the Chernoff bound
of \Thm{chernoff}, $\Pr[|V_{\overline{h}}| \geq \delta n] \leq 2^{-\delta n}$.
\end{proof}

With this preamble, we proceed to the description of \findr{} and the main properties of its output.

\subsection{The procedure \findr} \label{sec:sub-findr}

It is convenient to assume that for all $v$, $h_v$ and $t_v$ have been chosen. These quantities are chosen
independently for each vertex using simple distributions, so we will not carry as arguments the randomness used to decide
these quantities.  Recall that the output of \findr{} is the set of size thresholds $\{\sizethresh_1, \sizethresh_2, \ldots, \sizethresh_{\overline{h}}\}$.
It is convenient to use $\ST_h$ to denote $\{\sizethresh_1, \sizethresh_2, \ldots, \sizethresh_{h}\}$. Before describing
\findr, we define a procedure that is a membership oracle for $F_h$.

\medskip

\noindent
\fbox{
\begin{minipage}{0.9\textwidth}
    {\isfree$(u,h,\ST_{h-1})$}
    \smallskip
    \begin{compactenum}
        \item If $h = 1$, output YES.
        \item Run \findball$(u)$ to determine $\ball(u)$. Let $C$ be $\ball(u) \cap V_{< h}$.
        \item Using $\ST_{h-1}$, determine $\sizethresh_v$ for all $v \in C$.
        \item For all $v \in C$, compute \cluster$(v,t_v,\sizethresh_v)$. If the union
        contains $u$, output NO. Else, output YES.\label{step:union} 
    \end{compactenum}
\end{minipage}
}
\medskip

\begin{claim} \label{clm:isfree} Assume that $\ST_{h-1}$ is provided correctly. Then
\isfree$(v,h,\ST_{h-1})$ outputs YES iff $v \in F_h$. The running time is $O((d\len\minp^{-1})^3)$.
\end{claim}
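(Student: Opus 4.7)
The plan is to verify correctness and running time separately. The base case $h=1$ is immediate since $F_1 = V$ by definition. For $h>1$, the goal is to show \isfree$(v,h,\ST_{h-1})$ outputs NO exactly when some $u \in V_{<h}$ satisfies $v \in \cluster(u)$, which by \Def{phase-v} is equivalent to $v \notin F_h$.

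The central observation driving correctness is: \emph{if $v \in \cluster(u)$, then $u \in \ball(v)$}. This is trivial when $u=v$. When $u \neq v$, the set $\cluster(u)$ is not a singleton, so by \Clm{cluster} it is contained in $\supp(\widehat{M}^{t_u}\vec{u})$ with $t_u \in [1,\len]$; hence $v \in \supp(\widehat{M}^{t_u}\vec{u})$ and by \Def{ball}, $u \in \ball(v)$. Consequently, restricting the enumeration from $V_{<h}$ to $C := \ball(v) \cap V_{<h}$ loses no witnesses. Since $\ST_{h-1}$ is assumed correct, step 3 recovers the true $\sizethresh_u = \sizethresh_{h_u}$ for each $u \in C$ (note $h_u < h$, so $\sizethresh_{h_u} \in \ST_{h-1}$), and therefore the \cluster$(u,t_u,\sizethresh_u)$ computed in step~\ref{step:union} coincides exactly with the set that \globpart$(\bR)$ forms when processing seed $u$. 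Combining these two points, step~\ref{step:union} finds $v$ in the union iff some $u \in V_{<h}$ has $v \in \cluster(u)$, iff $v \notin F_h$, which is the desired correctness.

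For the running time, \Clm{findball} bounds \findball$(v)$ by $O(d^2\len^3\minp^{-2})$ and \Clm{ballsize} gives $|C| \leq |\ball(v)| \leq \len\minp^{-1}$. Each call \cluster$(u,t_u,\sizethresh_u)$ uses $t_u \leq \len$ and $\sizethresh_u \leq \minp^{-1}$, so by \Clm{cluster} it runs in $\otilde(d\len\minp^{-1})$ time; summing over $C$ contributes $\otilde(d\len^2\minp^{-2})$, which is dominated by the cost of \findball. Checking set membership in the union can be done with a hash table in time linear in the total cluster sizes. The overall cost is $O((d\len\minp^{-1})^3)$ as claimed.

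There is no deep obstacle here; the only substantive step is the reduction from $V_{<h}$ to $\ball(v) \cap V_{<h}$, which is precisely what the support-containment property of \cluster{} (\Clm{cluster}) was engineered to enable. The rest is bookkeeping that lifts the hypothesized correctness of $\ST_{h-1}$ into a local membership oracle for $F_h$, which \findr{} will invoke in the next subsection.
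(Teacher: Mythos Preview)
Your proof is correct and follows essentially the same approach as the paper: both handle $h=1$ trivially, use the support-containment guarantee of \Clm{cluster} to argue that any $u \in V_{<h}$ with $v \in \cluster(u)$ must lie in $\ball(v)$, and then bound the running time via \Clm{findball}, \Clm{ballsize}, and \Clm{cluster}. Your treatment is actually slightly more careful than the paper's in explicitly separating the $u=v$ case (where \cluster$(u)$ may be a singleton and the support argument does not apply), and your per-call \cluster{} bound of $\otilde(d\len\minp^{-1})$ is a bit sharper than the paper's $O(d\len\minp^{-2})$, though both fit comfortably under the stated $O((d\len\minp^{-1})^3)$.
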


\begin{proof} If $h=1$, then all vertices are free (this is the free set before \globpart{} begins
any partitioning). Assume $h > 1$. So $F_h = V \setminus \bigcup_{v \in V_{<h}} \cluster(v)$.

If $u \notin F_h$, then there exists $v \in V_{<h}$ such that $u \in \cluster(v)$. By construction
\cluster$(v)$ is contained in $\supp(\widehat{M}^t \vec{v})$ for some $t \leq \len$. Thus, $v \in \ball(u)$
and $h_v < h$. Hence, $v$ will be considered in \Step{union} and the union will contain $u$. The output is NO.
For the converse, observe that if the output is NO, then there is a $v \in V_{< h}$ such that
$u \in \cluster(v)$. Hence, $u \notin F_h$.

Now for the running time analysis. The running time of \findball$(v)$ is $O(d^2\len^3\minp^{-2})$
(\Clm{findball}) and $|C| \leq \len\minp^{-1}$ (\Clm{ballsize}). Each call to \cluster{} takes
$O(d\len\minp^{-2})$ (\Clm{cluster}). The total running time is $O((d\len\minp^{-1})^3)$.

\end{proof}

We have the necessary tools to define the procedure \findr. We will need the following definition
in our description and analysis of \findr.

\begin{definition} \label{def:viable} Assume $\free{h} \geq \sizefrac n$. A vertex $s \in V_{\geq h}$ is called \emph{$(h,\sizethresh)$-viable}
if $C := \cluster(s,t_s,\sizethresh)$ is not a singleton and $|C \cap \free{h}| \geq \sizefrac^3 \sizethresh$.
(If $\free{h} < \sizefrac n$, no vertex is $(h,\sizethresh)$-viable.)
\end{definition}

Let us motivate this definition. When $C := \cluster(s,t_s,\sizethresh)$ is not a singleton, it is a low conductance
cut of $\Theta(\sizethresh)$ vertices. The vertex $s$ is $(h,\sizethresh)$-viable if $C$ contains a non-trivial fraction
of free vertices available in the $h$th phase. The viable vertices are those from which clustering will
make significant ``progress" in the $h$th phase. For each $h$, the procedure \findr{} searches for values of $\sizethresh$
that lead to many $(h,k)$-viable vertices. In the next section, we prove that having sufficiently many clusters come
from viable vertices ensures the cut bound of \Def{oracle}.

\medskip
\noindent\fbox{
\begin{minipage}{0.9\textwidth}
    {\findr$(\bR)$}
    \smallskip
    \begin{compactenum}
    \item For $h = 1$ to $\overline{h}$:
    \begin{compactenum}
        \item Sample $\sizefrac^{-10}$ uar vertices independently. Let $S_h$ be the multiset of sampled vertices
        that are in phase $\geq h$.
        \item If $|S_h| \leq \sizefrac^{-9}/2$, set $k_h = 0$ and continue for loop. Else, reset $S_h$
        to the multiset of the first $\sizefrac^{-8}$ vertices sampled.
        \item For $k \in [\minp^{-1}]$ and for every $s \in S_h$: \label{step:loop}
        \begin{compactenum}
            \item Compute $C := \cluster(s,t_s,k)$.
            \item For all $u \in C$, call \isfree$(u,h,K_{h-1})$ to determine if $u \in F_{h-1}$.
            \item If $C$ is not a singleton and $|C \cap F_{h-1}| \geq \sizefrac^3 \sizethresh$, mark
            $s$ as being $(h,\sizethresh)$-viable.
        \end{compactenum}
        \item If there exists some $\sizethresh$ such that there are at least $12\sizefrac^{4}|S_h|$ $(h,k)$-viable vertices, assign an arbitrary
        such $\sizethresh$ as $\sizethresh_h$. Else, assign $\sizethresh_h := 0$. \label{step:setthresh}
    \end{compactenum}
    \item Output $\ST_{\overline{h}} = \{\sizethresh_1, \sizethresh_2, \ldots, \sizethresh_{\overline{h}}\}$.
    \end{compactenum}
\end{minipage}
}

\medskip

\begin{claim} \label{clm:findr-time} The running time of \findr{} is $O((d\len\delta^{-1}\minp^{-1})^5)$.
\end{claim}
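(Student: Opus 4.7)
The proof is a routine accounting of costs. The plan is to walk through the nested loops of \findr{} from the inside out, multiply per-iteration costs, and then compare with the claimed bound. The main tools are \Clm{cluster} (cost of one call to \cluster) and \Clm{isfree} (cost of one call to \isfree).

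The key observation is that within the innermost work the call to \isfree{} dominates: by \Clm{isfree} each such call costs $O((d\len\minp^{-1})^3)$, whereas the call to \cluster{} costs only $O(d\len\minp^{-1}\log(d\len\minp^{-1}))$ by \Clm{cluster} (using $t_s\leq\len$ and $k\leq\minp^{-1}$). Thus, for a fixed $h$, a fixed $k\in[\minp^{-1}]$, and a fixed $s\in S_h$, computing $C=\cluster(s,t_s,k)$ and then invoking \isfree{} for each $u\in C$ costs
\[
O\bigl(|C|\cdot (d\len\minp^{-1})^3\bigr)=O\bigl(\minp^{-1}(d\len\minp^{-1})^3\bigr)=O(d^3\len^3\minp^{-4}),
\]
using $|C|\leq 2k\leq 2\minp^{-1}$.

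Summing over $k\in[\minp^{-1}]$ and $s\in S_h$ (with $|S_h|\leq\sizefrac^{-8}$), each outer iteration costs $O(\sizefrac^{-8}d^3\len^3\minp^{-5})$; the sampling in Steps~1--2 contributes only $O(\sizefrac^{-10})$ and the bookkeeping in Step~4 only $O(\minp^{-1})$, both negligible. Multiplying by the $\overline{h}=2\delta^{-1}\log(\delta^{-1})$ outer iterations yields total cost $O(\overline{h}\sizefrac^{-8}d^3\len^3\minp^{-5})$. Note that each call to \isfree{} is supplied with the $K_{h-1}$ built up in previous outer iterations, so there is no extra cost from rebuilding that set.

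To finish, it suffices to check that $\overline{h}\sizefrac^{-8}=O(d^2\len^2\delta^{-5})$, since then the total cost is $O(d^5\len^5\delta^{-5}\minp^{-5})=O((d\len\delta^{-1}\minp^{-1})^5)$. Plugging in $\sizefrac=\eps/10$ and $\delta=d^{-70}\eps^{3100}$, one has $\sizefrac^{-8}=10^8\eps^{-8}\ll\delta^{-1}$ with enormous room to spare, and $\overline{h}=O(\delta^{-1}\log\delta^{-1})\ll\delta^{-2}$, so $\overline{h}\sizefrac^{-8}\ll\delta^{-3}$, comfortably inside the required $d^2\len^2\delta^{-5}$. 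There is no real technical obstacle: the only thing to be careful about is not double-counting across the nested loops and verifying that the parameter comparisons between $\sizefrac$, $\delta$, $\minp$, and $\len$ go through given the explicit polynomial choices listed at the start of \Sec{prelims}.
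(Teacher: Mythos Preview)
Your proof is correct and follows essentially the same accounting as the paper: both count $\overline{h}$ outer iterations, at most $\minp^{-1}\sizefrac^{-8}$ pairs $(k,s)$ per iteration, at most $\minp^{-1}$ calls to \isfree{} per pair, and $O((d\len\minp^{-1})^3)$ per \isfree{} call. The only cosmetic difference is in the final parameter absorption---the paper uses $\sizefrac^{-8}\leq\len^2$ to fold the $\sizefrac$ factor into $\len$, whereas you fold it (and $\overline{h}$) into powers of $\delta^{-1}$; both are valid given the explicit parameter values.
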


\begin{proof} There are $\overline{h} = \maxphase$ iterations. We compute the running time of each iteration. 
There are at most $\minp^{-1}\sizefrac^{-8}$ calls to \cluster, each of which takes $O(d\len\minp^{-2})$ time by \Clm{cluster}.
For each call to \cluster, there are at most $\minp^{-1}$ calls to \isfree. Each call to \isfree{} takes $O((d\len\minp^{-1})^3)$ time (\Clm{isfree}).
The running time of each iteration is $O(\sizefrac^{-10} + d\len\minp^{-3}\sizefrac^{-8} + d^3\len^3\minp^{-5}\sizefrac^{-8})$.
By the parameter settings, since $\len^2 \geq \eps^{2\cdot\lenexp} \geq (\sizefracval)^{-8} = \sizefrac^{-8}$, the running
time of each iteration $O((d\len\minp^{-1})^5)$. The total running time is $O((d\len\delta^{-1}\minp^{-1})^5)$.
\end{proof}

The following theorem gives the main guarantee of \findr. The proof is a fairly straightforward Chernoff bound
on top of an application of \Thm{restrict-cut}. Quite simply, the proof just says the following. \Thm{restrict-cut}
shows the existence of $(h,k)$ pairs for which many vertices are viable. The \findr{} procedure finds such
pairs by random sampling.

\begin{theorem} \label{thm:findr} The following property of the values $\ST_{\overline{h}}$ 
of \findr$(\bR)$ and the preprocessing choices holds with probability at least $1-\exp(-1/\eps)$ over
all the randomness in $\bR$.  For all $h \leq \overline{h}$, if $|\free{h}| \geq \sizefrac n$, at least $\sizefrac^5 \delta n$ vertices in $V_{h}$ are $(h,k_h)$-viable.
\end{theorem}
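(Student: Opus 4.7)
The plan is to fix a phase $h$ with $|\free{h}| \geq \sizefrac n$, bound the failure probability for that $h$ by $\exp(-\poly(\eps^{-1}))$, and then union bound over the $\overline{h} = \maxphase$ phases. I will expose the randomness of $\bR$ in three layers: (i) the walk lengths $t_v$ for $v \in V_{\geq h}$, (ii) the \findr{} sampling in phase $h$, and (iii) the memoryless decomposition of $V_{\geq h}$ into $V_h \cup V_{>h}$, which by \Clm{geo-phase} is independent of earlier phases.

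Condition first on $V_{<h}$ and everything computed by \findr{} before phase $h$, so that $\free{h}$, $V_{\geq h}$, and $K_{h-1}$ are determined; assume $|\free{h}| \geq \sizefrac n$. Apply \Thm{restrict-cut} to $\free{h}$ to obtain a distinguished threshold $k^\star \leq \minp^{-1}$ and a set $G \subseteq \free{h}$ of at least $\sizefrac^2 n/\log^2\sizefrac^{-1}$ good vertices, each with at least $\sizefrac \len/\log^2\sizefrac^{-1}$ good timesteps. For $s \in G$, the uniformly chosen $t_s \in [1,\len]$ hits a good timestep with probability at least $\sizefrac/\log^2\sizefrac^{-1}$; when it does, nestedness of level sets combined with \cluster{}'s selection of the largest valid $k' \in [k^\star, 2k^\star]$ satisfying (i),(ii) of \Thm{restrict-cut} forces $|\cluster(s,t_s,k^\star) \cap \free{h}| \geq \sizefrac^3 k^\star$, so $s$ is $(h,k^\star)$-viable. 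A Chernoff bound over the independent $t_v$'s produces at least $\sizefrac^3 n/(2\log^4\sizefrac^{-1})$ $(h,k^\star)$-viable vertices with failure $\exp(-\Omega(\sizefrac^3 n))$.

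Next, condition additionally on all walk lengths and analyze the \findr{} sampling in phase $h$. Because $|V_{\geq h}| \geq \sizefrac n$, Chernoff gives $|S_h| \geq \sizefrac^{-9}/2$ except with probability $\exp(-\Omega(\sizefrac^{-9}))$, so the cap at $\sizefrac^{-8}$ triggers and each element of $S_h$ is uniform in $V_{\geq h}$. The expected number of $(h,k^\star)$-viable samples in $S_h$ is at least $\sizefrac^{-8}\cdot\sizefrac^3/(2\log^4\sizefrac^{-1}) \gg 12\sizefrac^{-4}$, so another Chernoff bound shows $k^\star$ passes the test in Step~\ref{step:setthresh} with failure $\exp(-\Omega(\sizefrac^{-4}))$, whence $k_h \neq 0$. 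Conversely, for any $k \in [\minp^{-1}]$ whose true viable-fraction in $V_{\geq h}$ is below $6\sizefrac^4$, the probability it passes the sampling test is at most $\exp(-\Omega(\sizefrac^{-4}))$ (by Chernoff's upper tail); a union bound over the $\minp^{-1}$ choices of $k$ costs only a $\minp^{-1}$ multiplicative factor. Hence whichever $k_h$ the algorithm picks, the number of $(h,k_h)$-viable vertices in $V_{\geq h}$ is at least $6\sizefrac^4 |V_{\geq h}| \geq 6\sizefrac^5 n$.

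Finally, condition on the phase-$h$ sampling as well, so that $k_h$ and the set $S$ of $(h,k_h)$-viable vertices in $V_{\geq h}$ are fixed. By \Clm{geo-phase}, each $v \in V_{\geq h}$ lands in $V_h$ independently with probability $\delta$, hence \Clm{vh-size} applied to $S$ delivers $|S \cap V_h| \geq 3\sizefrac^5\delta n \geq \sizefrac^5\delta n$ with failure $\exp(-\Omega(\sizefrac^5\delta n))$. Summing failure probabilities across the three layers and then over the $\overline{h} = \poly(\eps^{-1})$ phases yields a total failure bound of $\exp(-1/\eps)$ under the stated parameter settings. I expect the main subtle point will be maintaining a clean separation between the three randomness layers so that each Chernoff is applied to genuinely fresh independent randomness; the memoryless property of the geometric distribution, used only at the third layer, is precisely what enables the lower bound to be stated in terms of $V_h$ rather than $V_{\geq h}$ without additional loss.
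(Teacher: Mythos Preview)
Your proposal is correct and follows essentially the same approach as the paper: apply \Thm{restrict-cut} to $\free{h}$ to get a candidate threshold $k^\star$, use Chernoff to show the sample $S_h$ detects it (so $k_h \neq 0$), use the converse Chernoff plus a union bound over $k\in[\minp^{-1}]$ to ensure whatever $k_h$ is chosen has $\Omega(\sizefrac^5 n)$ viable vertices in $V_{\geq h}$, and finally use \Clm{vh-size} to pass from $V_{\geq h}$ to $V_h$. Your explicit three-layer decomposition of the randomness (walk lengths, then \findr{} sampling, then the $V_{\geq h}\to V_h$ split) is a slightly cleaner packaging of the same argument the paper carries out via its four ``Error'' events; the only caveat is that your constant $6\sizefrac^4$ in the converse direction should be tightened to $2\sizefrac^4$ (or the test threshold correspondingly adjusted) so that the $t\geq 6\mu$ tail of \Thm{chernoff} applies cleanly.
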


\begin{proof} The proof has two parts. In the first part, we argue that whp, if $|\free{h}| \geq \sizefrac n$,
then a non-zero $\sizethresh_h$ is output. This part is an application of \Thm{restrict-cut}. In the second part,
we prove that (whp), if a non-zero $\sizethresh_h$ is output, then it satisfies the desired properties. This part
is proven using a simple Chernoff bound argument.

Fix an $h$. Condition on any choice of $V_1, V_2, \ldots, V_{h-1}$ such that $|\free{h}| \geq \sizefrac n$.
Note that $V_{\geq h} \supseteq \free{h}$, since all vertices in $V_{< h}$ are necessarily clustered by
the $h$th phase. (Recall that \cluster$(v)$ always contains $v$.)
Hence, $|V_{\geq h}| \geq \sizefrac n$. There will be numerous low probability ``bad" events that we need to track. We will describe these bad events,
and refer to their probabilities as ``Error 1", ``Error 2", etc.

{\bf Error 1, $\exp(-\sizefrac^{-8})$.} The probability that a uar vertex is in $V_{\geq h}$ is at least $\sizefrac$,
and the expected size of $S_h$ is at least $\sizefrac \times \sizefrac^{-10} = \sizefrac^{-9}$.
By the Chernoff bound of \Thm{chernoff}, $\Pr[|S_h| \leq  \sizefrac^{-9}/2] \leq \exp(-\sizefrac^{-9}/12)$ $\leq \exp(-\sizefrac^{-8})$.
Thus, with probability at least $1-\exp(-\sizefrac^{-8})$, \Step{loop} is reached and $S_h$ is a multiset 
of iid uar $\sizefrac^{-8}$ elements in $V_{\geq h}$.

Let us assume that $S_h$ is such a multiset, and prove that a non-zero $\sizethresh_h$ is output whp. 
We bring out the main tool, \Thm{restrict-cut}. Since $|\free{h}| \geq \sizefrac n$,
there exists a size threshold $k \leq \minp$ such that the following holds. For at least $(\sizefrac^2/\log^2\sizefrac^{-1})n$
vertices $s \in \free{h}$, there are at least $(\sizefrac/\log^2\sizefrac^{-1})\len$ timesteps $t$ such that:
there exists $k' \in [k,2k]$ such that (i) $\level{s}{t}{k'} \subseteq \supp(\widehat{M}^t\vec{s})$, (ii) $\Phi(\level{s}{t}{k'} \cup \{s\}) < \phi$, and (iii) $|\level{s}{t}{k'} \cap F| \geq \sizefrac^3 k$.
For any such $(s, t, k)$ triple, consider a call to \cluster$(s,t,k)$. Observe that the call will output the largest
level set of size in $[k,2k]$ satisfying (i) and (ii).
Hence, it will output (non-singleton) $\level{s}{t}{k''}$ such that $k' \leq k'' \leq 2k$
and (i) and (ii) hold. Note that $\level{s}{t}{k''} \supseteq \level{s}{t}{k'}$, so the third item will also hold.
Thus, \emph{if $t_s$ is set to one of these $(\sizefrac/\log^2\sizefrac^{-1})\len$ timesteps $t$}, then $s$ will be $(h,k)$-viable.

{\bf Error 2, $\exp(-\sizefrac^{-1})$.} Let us fix a size threshold $k$ promised by \Thm{restrict-cut}. The probability that a uar element on $V_{\geq h}$
is marked as $(k,h)$-viable is at least the product of probability of choosing an appropriate $s$ with
the probability that $t_s$ is chosen appropriately. Thus, the probability of find an $(h,k)$-viable
vertex is at least $(\sizefrac^2/\log^2\sizefrac^{-1}) \times (\sizefrac/\log^2\sizefrac^{-1}) = \sizefrac^3/\log^4\sizefrac^{-1}$. 
This probability is independent for all vertices in $V_{\geq h}$. By the Chernoff bound in \Thm{chernoff},
with probability at least $1-\exp(-\sizefrac^4 |S_h|/12)$, at least $\sizefrac^3 |S_h|/2\log^4\sizefrac^{-1} \geq 12\sizefrac^4|S_h|$
$(h,k)$-viable vertices are discovered in \findr. In this case, in \Step{setthresh}, $\sizethresh_h$ is set to a non-zero value.
The probability of this event happening is at least $1-\exp(-\sizefrac^{-8})-\exp(-\sizefrac^4|S_h|/8)$ $\geq 1 - \exp(\sizefrac^{-1})$.
(Recall that whp $S_h$ is a multiset of iid uar $\sizefrac^{-8}$ vertices.
In the union bound above, the first ``bad event" is $S_h$ \emph{not} having $\sizefrac^{-8}$ vertices and the second ``bad event"
is discovering too few viable vertices.) We have concluded that whp, if $|\free{h}| \geq \sizefrac n$,
then $\sizethresh_h$ is non-zero.

We move to the second part of the proof, which asserts that (with high probability),
an output non-zero $\sizethresh_h$ has the desired properties. Condition on any choice of the preprocessing. 
Note that the randomness is only over the choice of $S_h$.
Fix any $\sizethresh \leq \minp^{-1}$.
Suppose that the number of $(h,\sizethresh)$-viable vertices in $V_{\geq h}$ is at most $2\sizefrac^5n$.
Then, the expected number of such vertices in $S_h$ is at most $2\sizefrac^5 n/|V_{\geq h}| \times |S_h|
\leq 2\sizefrac^4 |S_h|$. (We use the lower bound $|V_{\geq h}| \geq |\free{h}| \geq \sizefrac n$.)

{\bf Error 3, $2^{-12\sizefrac^{-4}}$.} Let $X_\sizethresh$ denote the random variable of the number of $(h,\sizethresh)$-viable vertices in $S_h$.
Since $X_\sizethresh$ is distributed
as a binomial, by the Chernoff bound of \Thm{chernoff}, $\Pr[X_\sizethresh > 12\sizefrac^4|S_h|] \leq 2^{-12\sizefrac^4|S_h|}$.
Note than when $X_\sizethresh < 12\sizefrac^4|S_h|$, then $\sizethresh_h$ cannot be $\sizethresh$.
All in all, for any $h$, any choice of the $t_v$s, and any choice of $\sizethresh$, if \Step{loop} is reached
and the number of $(h,\sizethresh)$-viable vertices in $V_{\geq h}$ is at most $2\sizefrac^5n$,
then $\sizethresh_h \neq \sizethresh$ with probability at least $1-2^{-12\sizefrac^{-4}}$. Taking the contrapositive,
if $\sizethresh_h \neq 0$ (\Step{loop} must have been reached), then the number of $(h,\sizethresh_h)$-viable
vertices in $V_{\geq h}$ is at least $2\sizefrac^5n$. 

{\bf Error 4, $2\exp(-\delta\sizefrac^5n/12)$.} Suppose the number of $(h,\sizethresh_h)$-viable
vertices in $V_{\geq h}$ is at least $2\sizefrac^5n$ . By \Clm{vh-size} applied on the set of $(h,\sizethresh_h)$-viable
vertices in $V_{\geq h}$, with probability at least $1-2\exp(-\delta\sizefrac^5n/12)$, the 
number of such viable vertices in $V_h$ is at least $\delta\sizefrac^5n$.

We take a union bound over the $\maxphase$ values of $h$, the $\minp^{-1}$ values of $k$,
and all errors encountered thus far. The total error probability is at most
$\maxphase\cdot\minp^{-1}(\exp(-\sizefrac^{-8}) + \exp(\sizefrac^{-1}) + 2^{-12\sizefrac^{-4}} + 2\exp(-\delta\sizefrac^5n/12))$.
Note that $\maxphase, \sizefrac, \minp^{-1}$ are $\poly(\eps^{-1})$, and thus the total error probability is at most $\exp(-\eps^{-1})$.
With the remaining probability, the following holds. For all phases $h$, if $|\free{h}| \geq \sizefrac n$,
a non-zero $\sizethresh_h$ is output. If a non-zero $\sizethresh_h$ is output, the number of $(h,\sizethresh_h)$-viable
vertices in $V_{h}$ is at least $\delta\sizefrac^5n$.
\end{proof}

\section{Proving the cut bound: the amortization argument} \label{sec:amort}

We come to the final piece of proving the guarantees of \Thm{main-intro}. We need to prove
that the number of edges cut by the partition of \globpart{} is at most $\eps nd$. This requires
an amortization argument explained below. For the sake of exposition, we will ignore
constant factors in this high-level description. One of the important takeaways is how
various parameters are chosen to prove the cut bound.

Consider phase $h$ where $|\free{h}| \geq \sizefrac n$. Let us upper bound the number
of edges cut by the clustering done on this phase. Roughly speaking, $|V_h| = \delta n$, so
there are $\delta n$ clusters created in this phase. Each cluster in this 
phase has at most $2\sizethresh_h$ vertices. The number of edges
cut by each such cluster is at most $2\phi \sizethresh_hd$ (since \cluster{} outputs a low conductance cut; ignore singleton outputs).
So the total number of edges cut is at most $2\phi \delta \sizethresh_h nd$.

Let us now lower bound the number of new vertices that are partitioned in phase $h$; this
is the set $\free{h+1} \setminus \free{h}$. For each $(h,\sizethresh_h)$-viable $v$ in $V_h$, \cluster$(v)$ contains at least $\sizefrac^3 \sizethresh_h$
vertices in $\free{h}$. These will be newly partitioned vertices. Here comes the primary difficulty:
the clusters for the different such $v$ might not be disjoint. We need to lower bound the union
of the clustered vertices in $\free{h}$. An alternate description of the challenge is as follows. We are only
guaranteed that clusters from viable vertices $v$ contains many vertices in $\free{h}$, the free set at the \emph{beginning}
of phase $h$. What we really need is for the cluster from $v$ to contain many free vertices \emph{at the time}
that $v$ is processed. Phases were introduced to solve this problem. By reducing $\delta$, we can limit the size of $V_h$,
thereby limiting the intersection between the clusters produced in this phase. 

We now explain the math behind this argument. Consider some $w \in \free{h}$ and let $c_w$ be the number
of vertices in $V_{\geq h}$ that cluster $v$ (call these seeds). Thus, $c_w = |\{s \ | \ s \in V_{\geq h}, v \in \cluster(s)\}$. 
The vertex $w$ is clustered in phase $h$ iff one of these $c_w$ seeds is selected in $V_h$. By \Clm{geo-phase},
each such seed is independently selected in $V_h$ with probability $\delta$. The probability
that $w$ is clustered in this phases is precisely $1-(1-\delta)^{c_w}$. Crucially, $c_w \leq |\ball(w)| \leq \len\minp^{-1}$.
We chose $\delta \ll \len\minp^{-1}$, so $1-(1-\delta)^{c_w} \approx \delta c_w$.

Thus, the expected number of newly clustered vertices is at least $\sum_{w \in \free{h}} \delta c_w$. 
By rearranging summations, $\sum_{w \in \free{h}} c_w = \sum_{v \in V_{\geq h}} |\cluster(v) \cap \free{h}|$.
For every $(h,\sizethresh_h)$-viable vertex $v$ in $V_{\geq h}$, $|\cluster(v) \cap \free{h}| \geq \sizefrac^3 \sizethresh_h$.
The arguments in the proof of \Thm{findr} shows that there are $\sizefrac^5 n$ such vertices in $V_{\geq h}$ whp. Hence, we can lower bound (in expectation) 
the new number of newly clustered vertices as follows: 
$$\sum_{w \in \free{h}} \delta c_w \geq \delta\cdot(\sizefrac^5 n)\cdot (\sizefrac^3 \sizethresh_h) = \delta\sizefrac^8 \sizethresh_h n$$

We upper bounded the number of edges cut by $2\phi \delta \sizethresh_h nd$. The ratio of edges cut to vertices clustered
is $8\phi \sizefrac^{-8}d$. The parameters are set to ensure that $8\phi \sizefrac^{-8} \ll \eps$, so the total number
of edges cut is $\eps nd$.

The formal analysis requires some care to deal with conditional probabilities and dependencies between
various phases. Also, \Thm{findr} talks about $V_h$ and not $V_{\geq h}$, which necessitates some changes. But the essence of the argument is the same.

Our main theorem is a cut bound for \globpart.

\begin{theorem} \label{thm:edge-cut} The expected number of edges cut by the partitioning
of \globpart$(\bR)$ is at most $\eps nd$.
\end{theorem}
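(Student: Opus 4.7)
My plan is to carry out the amortization outlined at the start of this section: decompose cut edges by phase, and show that in every ``good'' phase $h$ the expected cut-to-newly-clustered ratio is $O(\phi d / \sizefrac^8)$, which by our parameter choices is much smaller than $\eps d$. The first step is a charging identity. Writing $A(u)$ for the anchor of $u$ (\Def{anchor}), I would argue that every cut edge lies in $\partial C_s$ for some anchor $s$: if $(u,v)$ is cut and $A(u) \prec A(v)$, then $v \notin C_{A(u)}$, since otherwise $v \in C_{A(u)} \cap \free{A(u)}$ and the edge $(u,v)$ would force $u$ and $v$ into the same connected component of the induced subgraph on $C_{A(u)} \cap \free{A(u)}$, hence the same set of $\bP$. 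This gives $\#\text{cut edges} \leq \sum_{s\text{ anchor}} |\partial C_s|$. I would then dispose of boundary contributions: the failure event of \Thm{findr} costs at most $nd \exp(-1/\eps) \ll \eps n d$, and any phase $h$ with $|\free{h}| < \sizefrac n$ contributes at most $\sizefrac nd = (\eps/10)nd$ because every set added to $\bP$ from phase $h$ onward lies inside $\free{h}$. What remains is to bound the cut from phases with $|\free{h}| \geq \sizefrac n$ on the good event of \Thm{findr}, where $\sizethresh_h \neq 0$ and, by the intermediate statement derived in the proof of \Thm{findr}, at least $2\sizefrac^5 n$ vertices of $V_{\geq h}$ are $(h, k_h)$-viable.

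For such a phase I would condition on the complete execution of phases $1, \ldots, h-1$ together with $\ST_{\overline{h}}$ and all walk lengths $t_v$; by \Clm{geo-phase}, the residual randomness is just the $\delta$-sampling $V_h \subseteq V_{\geq h}$. From \Clm{cluster}, each cluster produced in the phase cuts at most $4\phi k_h d$ edges (singletons absorbed into an additive $\delta nd$ slack), so the expected phase-$h$ cut satisfies $\EX[E_h] \leq 4\phi k_h d \cdot \delta n$. For the matching lower bound on $N_h := \EX[|\free{h} \setminus \free{h+1}|]$, I would define $c_w := |\{s \in V_{\geq h} : w \in \cluster(s)\}|$ for $w \in \free{h}$. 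Since $w \in \cluster(s) \subseteq \supp(\widehat{M}^{t_s}\vec s)$ forces $s \in \ball(w)$, \Clm{ballsize} yields $c_w \leq \len\minp^{-1}$ and the parameter choices give $\delta c_w \ll 1$. By the independence in \Clm{geo-phase}, $w$ is clustered in phase $h$ iff at least one of its $c_w$ seeds lies in $V_h$, an event of probability $1 - (1-\delta)^{c_w} \geq \delta c_w / 2$. Summing and swapping orders,
\[
N_h \;\geq\; \frac{\delta}{2}\sum_{w \in \free{h}} c_w \;=\; \frac{\delta}{2}\sum_{s \in V_{\geq h}} |\cluster(s) \cap \free{h}| \;\geq\; \frac{\delta}{2}\cdot 2\sizefrac^5 n \cdot \sizefrac^3 k_h \;=\; \delta\sizefrac^8 k_h n.
\]
Dividing gives $\EX[E_h] \leq (4\phi/\sizefrac^8)\,d \cdot N_h$; with $\phi = \eps^{10}$ and $\sizefrac = \eps/10$ the leading constant is $O(\eps^2 d) \leq (\eps/2)d$, so summing over good phases with $\sum_h N_h \leq n$ yields at most $(\eps/2)nd$. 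Combining with the boundary slack gives a total expected cut of at most $\eps n d$.

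The main obstacle in executing this plan is the conditional-sampling step: the clusters $\cluster(s)$ and the viability count extracted from \Thm{findr} depend on the preprocessing randomness, yet the lower bound on $N_h$ requires $V_h$ to behave like an independent $\delta$-sample of $V_{\geq h}$ with the right independence from the cluster structure. The memoryless property codified in \Clm{geo-phase} is precisely what unblocks this: after conditioning on all preprocessing and on $V_1, \ldots, V_{h-1}$, the $c_w$ and the set of viable seeds in $V_{\geq h}$ become deterministic, while $V_h$ remains an independent $\delta$-Bernoulli sample of $V_{\geq h}$, legitimizing both the per-vertex estimate $1 - (1-\delta)^{c_w} \geq \delta c_w / 2$ and the $\EX[|V_h|] \leq \delta n$ bound.
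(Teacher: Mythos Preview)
Your plan is essentially the paper's own amortization, reorganized: the paper charges phase-$h$ cut edges against $X_h := \sum_{v \in V_h}|\cluster(v)\cap\free{h}|$ (\Clm{cut-cluster}) and then proves $\sum_h \EX[X_h]\le 4n$ by the very $c_w$/memoryless calculation you describe (\Clm{charging}), whereas you charge against $N_h := |\free{h}\setminus\free{h+1}|$ and use $\sum_h N_h\le n$ trivially, shifting the $c_w$ argument to a lower bound on $N_h$. Both routes use the same ingredients and produce the same ratio $O(\phi d/\sizefrac^8)$; yours is slightly more direct in the summation step but requires pulling the intermediate ``$2\sizefrac^5 n$ viable seeds in $V_{\geq h}$'' statement out of the proof of \Thm{findr} rather than using the theorem as stated.

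There is one genuine gap: the singleton slack does not work as written. A singleton $\{v\}$ cuts up to $d$ edges whenever $v\in V_h\cap\free{h}$, so across significant phases the total singleton contribution is not $\delta nd$ but can be as large as $\overline{h}\cdot\delta nd = 2\log(\delta^{-1})\,nd \gg \eps nd$; charging singletons directly to $N_h$ only gives rate $d$ per newly clustered vertex, which is also far too much. The paper handles this differently: whenever a nonzero $k_h$ is output, some level set of size at most $2k_h$ and conductance at most $\phi$ exists, and connectedness of $G$ then forces $\phi k_h d = \Omega(1)$. Hence the per-seed $+d$ folds into the $\phi k_h d$ term at the cost of one extra factor of $d$ (this is the origin of the $d^2$ in \Clm{cut-cluster}). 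You should adopt this fix rather than the additive slack.

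One smaller point on the conditioning: conditioning on all of $\ST_{\overline h}$ is too strong, since $k_{h+1},\dots$ are computed by \findr{} from samples that check membership in $V_{\geq h+1}$ and hence depend on the $V_h$ coin flips; this would break your claim that $V_h$ is a fresh $\delta$-sample. For the phase-$h$ analysis you only need $k_h$, which is determined by $V_1,\dots,V_{h-1}$, the walk lengths $t_v$, and \findr{}'s own uar sample---all independent of the $V_h$ coins. Condition on those instead, and the memoryless step goes through cleanly.
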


We will break up the proof into two technical claims.
Somewhat abusing notation, we say a vertex in $V_{\geq h}$ is $h$-viable if it is $(h,\sizethresh_h)$-viable.

\begin{claim} \label{clm:cut-cluster} 
$$ \EX[\textrm{\# edges cut by \globpart$(\bR)$}] \leq 32 \phi\sizefrac^{-8}d^2 \Big(\sum_{h < \overline{h}} \EX[\sum_{v \in V_h} |\cluster(v) \cap \free{h}|)]\Big) + 
2\sizefrac nd $$
\end{claim}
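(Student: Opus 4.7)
The plan is to charge each cut edge of the partition $\bP$ to the seed of its earlier-processed endpoint, split the resulting sum over seeds by phase, and amortize the cost against the ``viable progress'' guaranteed by \Thm{findr}, while pushing the last phase and the phases where $|\free{h}|$ is small into the additive $2\sizefrac nd$ slack.

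First, for every cut edge $(u,w)$ of $\bP$, let $a_u,a_w$ be the anchors of $u,w$ and assume $a_u\prec a_w$. Then $u\in C_{a_u}\cap F_{a_u}$, and $w\notin C_{a_u}$: if instead $w\in C_{a_u}\cap F_{a_u}$, then since $(u,w)$ is an edge in $G$, $w$ would lie in the same component of $G[C_{a_u}\cap F_{a_u}]$ as $u$, forcing $a_w=a_u$. Thus $(u,w)\in\partial(C_{a_u})$, so the total number of cut edges is at most $\sum_v|\partial(C_v)|$, with the sum ranging over seeds $v$ that are anchors.

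Next, I decompose $\sum_v|\partial(C_v)|=\sum_h\sum_{v\in V_h}|\partial(C_v)|$ and isolate two sources of slack. By the definition of $\findr$, $\sizethresh_{\overline h}=0$, so every $C_v$ with $v\in V_{\overline h}$ is a singleton of boundary at most $d$; combined with $\EX[|V_{\overline h}|]\le\delta n$ from \Clm{last-phase} and $\delta\le\sizefrac$, the last phase contributes at most $\sizefrac nd$ in expectation. For ``bad'' phases $h<\overline h$ with $|\free{h}|<\sizefrac n$, monotonicity of $\free{\cdot}$ bounds the total number of vertices clustered across all bad phases by $|\free{h_0}|<\sizefrac n$ (where $h_0$ is the first bad phase), and each contributes at most $d$ to $\sum_v|\partial(C_v)|$ via the degree bound, for a further $\sizefrac nd$. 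Together these yield the additive $2\sizefrac nd$ term.

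For ``good'' phases $h<\overline h$ with $|\free{h}|\ge\sizefrac n$, I would apply \Clm{cluster}: non-singleton $C_v$ satisfies $|\partial(C_v)|\le 2\phi|C_v|d\le 4\phi\sizethresh_h d$, and singletons contribute at most $d$. Using $|V_h|\le 2\delta n$ with high probability (\Clm{vh-size}), the per-phase contribution to $\sum_v|\partial(C_v)|$ is at most $O((\phi\sizethresh_h+1)d\delta n)$. By \Thm{findr}, whp there are at least $\sizefrac^5\delta n$ vertices in $V_h$ that are $(h,\sizethresh_h)$-viable, each contributing at least $\sizefrac^3\sizethresh_h$ to $\sum_{v\in V_h}|\cluster(v)\cap\free{h}|$; hence this sum is at least $\sizefrac^8\delta n\sizethresh_h$. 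Taking ratios gives an amortized cost of $O(\phi d\sizefrac^{-8})$ per unit of progress from the conductance term, and together with the extra $d$-factor slack that absorbs the singleton contribution (which contributes $O(d\delta n)$ per phase while progress is at least $\sizefrac^8\delta n\sizethresh_h$) one recovers the claimed coefficient $32\phi\sizefrac^{-8}d^2$.

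The main obstacle is handling the randomness correctly: \Thm{findr}, \Clm{vh-size}, and \Clm{last-phase} are all whp statements over $\bR$, while the claim is stated in expectation. A union bound over the $\overline h=\poly(\eps^{-1})$ phases shows that all the relevant good events hold simultaneously except with probability $\exp(-\Omega(\eps^{-1}))$; on the failure event the trivial bound of $nd$ on total cut edges contributes $nd\cdot\exp(-\Omega(\eps^{-1}))$, easily absorbed into the $2\sizefrac nd$ slack. A secondary subtlety is relating the phase-start free set $\free{h}$ (used on the RHS of the claim) to the per-seed free set $F_v$ that actually governs the charged edges: since $F_v\subseteq\free{h}$ for $v\in V_h$, we have $|C_v\cap F_v|\le|C_v\cap\free{h}|$, so this inequality only strengthens the RHS and costs nothing in the amortization.
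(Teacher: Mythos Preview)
Your approach mirrors the paper's proof closely: phase-by-phase accounting, the $4\phi k_h d^2$ per-cluster cut bound (absorbing the singleton ``$+d$'' via $2\phi k_h d\ge 1$), the $|V_h|\le 2\delta n$ and $X_h\ge\sizefrac^8\delta k_h n$ inputs from \Clm{vh-size} and \Thm{findr}, and the conditional-expectation handling of the failure events. These parts are correct and match the paper.

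There is one real gap, in your treatment of bad phases. You commit to the over-estimate $(\text{cut edges})\le\sum_v|\partial(C_v)|$ and then assert that for bad phases ``each [newly clustered vertex] contributes at most $d$ to $\sum_v|\partial(C_v)|$.'' This is false: $|\partial(C_v)|$ is the boundary of the \emph{full} set $C_v$, which may contain many already-clustered vertices, and nothing forces $k_h=0$ when $|\free{h}|<\sizefrac n$ (the guarantee of \Thm{findr} is one-sided). So $\sum_{v\in V_h}|\partial(C_v)|$ can be $\Theta(\phi k_h d\,\delta n)$ in a bad phase, and summing over up to $\overline h$ such phases gives no useful bound. The paper sidesteps this by never routing bad (``non-significant'') phases through cluster boundaries at all: once $|\free{h_0}|<\sizefrac n$, every edge cut thereafter is incident to some vertex of $\free{h_0}$, so the total is at most $|\free{h_0}|d<\sizefrac nd$ directly. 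Your own charging already supports this—each cut edge charged to $a_u$ has $u\in C_{a_u}\cap F_{a_u}\subseteq \free{h_0}$—so for bad phases you should use that, not the looser $|\partial(C_v)|$.

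A minor point: $k_{\overline h}=0$ is not deterministic from the code of \findr{}; it holds only whp (because $|V_{\overline h}|\le\delta n$ whp by \Clm{last-phase}, which makes $|S_{\overline h}|$ too small). The paper instead treats the last phase as non-significant whp and absorbs the $2^{-\delta n}nd<1$ failure contribution, which amounts to the same thing.
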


\begin{proof} The proof goes phase by phase. We call a phase significant if $|\free{h}| \geq \sizefrac n$.
Edges cut in a significant phase are also called significant. Observe that the total number of edges cut
is at most the number of significant edges cut plus $\sizefrac nd$. (This contributes to the extra additive term in the claim statement.) 
Below, we will bound the total number of significant edges cut. 

By \Clm{last-phase}, with probability at least $1-2^{-\delta n}$, $|V_{\overline{h}}| \leq \delta n$.
Note that $|\free{h}| \leq V_{\geq \overline{h}} = |V_{\overline{h}}|$. (The equality is because this is the last phase.)
Since $\delta n < \sizefrac n$, the expected number of significant edges cut in the last phase
is at most $2^{-\delta n} nd < 1$.

Now assume that $h < \overline{h}$.
Consider the edges cut in the $h$th phase. Consider any choice of $V_1, V_2, \ldots, V_{h-1}$
and $\sizethresh_1, \sizethresh_2, \ldots, \sizethresh_{h}$. If $|\free{h}| < \sizefrac n$, no significant edges are cut.
Let us assume that $|\free{h}| \geq \sizefrac n$.
Each set \cluster$(v)$ output in this phase is either a singleton
or a set of size at most $2\sizethresh_h$ and conductance at most $\phi$. In either
case, the number of edges cut by removing \cluster$(v) \cap F$ (in \globpart) is at most
$2\phi \sizethresh_h d + d$. Note that $2\phi \sizethresh_h d \geq 1$ (otherwise, by the connectedness of $G$, there
can never be a set of size at most $2\sizethresh_h$ of conductance $\leq \phi$).
Hence, the number of significant edges cut by a single cluster is at most $2\phi \sizethresh_h(d+d^2) \leq 4\phi\sizethresh_h d^2$.

Note that
$|V_{\geq h}| \geq |\free{h}| \geq \sizefrac n$ and $|V_{\geq h}|$ is obviously at most $n$.
By \Clm{vh-size} with $S = V_{\geq h}$, with probability at least $1-2\exp(-\delta \sizefrac n/12)$ over the choice of $V_h$,
$|V_h| \leq 2\delta n$. Hence, the total number of significant edges cut is at most $4\phi\sizethresh_h d^2 \times 2\delta n = 8\phi\delta \sizethresh_h d^2n$.

By \Thm{findr}, with probability at least $1-\exp(\eps^{-1})$, if $|\free{h}| \geq \sizefrac n$, at least $\sizefrac^5 \delta n$ vertices
in $V_h$ are $h$-viable. Call this event $\cE$. For every $h$-viable vertex in $V_h$, $|\cluster(v) \cap \free{h}| \geq \sizefrac^3 \sizethresh_h$.
For convenience, let $X_h := \sum_{v \in V_h} |\cluster(v) \cap \free{h}|)$.
Conditioned on $\cE$, 
$X_h \geq \sizefrac^8 (\delta \sizethresh_h n)$. Recall that with probability at least $1-2\exp(-\delta\sizefrac n/12)$,
the number of significant edges cut in this phase is at most $8\phi d^2(\delta \sizethresh_h n)$.
If $\cE$ occurs, we can apply the bound $\sizefrac^{-8}X_h \geq \delta \sizethresh_h n$
and upper bound the number of significant edges cut in this phase by $8\phi\sizefrac^{-8}d^2 X_h$,

Thus, with probability at least $1-\exp(\eps^{-1})-2\exp(-\delta\sizefrac n/12)$,
the number of significant edges cut in phase $h$ is at most $(8\phi\sizefrac^{-8}d^2)X_h$. In other words, there is an event $\cF_h$
conditioned on which the above bound happens, and $\Pr[\cF_h] \geq 1-\exp(\eps^{-1})-2\exp(-\delta\sizefrac n/12)$. In the calculation
below, we break into conditional expectations and use the fact that $\delta = \poly(\eps)$, $\beta = \Theta(\eps)$, and
that the number of phases is at most $\maxphase$. We also use the fact that $X_h$ is non-negative.
\begin{eqnarray}
    & & \sum_h \EX[\textrm{\# significant edges cut in phase $h$}] \leq \sum_h (\Pr[\cF] \EX[X_h | \cF] + \Pr[\overline{\cF}] nd) \\
    & \leq & \sum_h \EX[X_h] + \maxphase(\exp(\eps^{-1})+2\exp(-\delta\sizefrac n/12))nd \leq \sum_h \EX[X_h] + \sizefrac nd/2
\end{eqnarray}
To this bound, we add the expected number of edges cut in the last phase (at most $1$) and the number
of non-significant edges cut (at most $\sizefrac n$). This completes the proof.
\end{proof}

\begin{claim} \label{clm:charging} 
$$ \sum_{h < \overline{h}} \EX[\sum_{{v \in V_h}} |\cluster(v) \cap \free{h}|)] \leq 4n $$
\end{claim}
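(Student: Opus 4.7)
The plan is to swap the order of summation and observe that, for each vertex $w$, only one phase contributes. Setting $N_w^{(h)} := |\{v \in V_h : w \in \cluster(v)\}|$, I would first rewrite
\[ \sum_{h < \overline h} \sum_{v \in V_h} |\cluster(v) \cap \free{h}| \;=\; \sum_{h < \overline h} \sum_{w \in \free{h}} N_w^{(h)} \;=\; \sum_{w \in V} \sum_{h < \overline h} \mathbf{1}[w \in \free{h}]\, N_w^{(h)}, \]
and let $h^*(w)$ be the first phase $h$ with $N_w^{(h)} \geq 1$. By \Def{phase-v}, $w \in \free{h+1}$ iff $w \in \free{h}$ and $N_w^{(h)} = 0$, so the inner sum has at most one nonzero term, equal to $N_w^{(h^*(w))}$. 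The goal then reduces to showing $\EX[N_w^{(h^*(w))}] \leq 2$ for each $w$; summing over $w$ yields $2n \leq 4n$.

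To bound $\EX[N_w^{(h^*(w))}]$, define $A_w^{(h)} := \{v \in V_{\geq h} : w \in \cluster(v, t_v, k_h)\}$ and condition on $V_{<h}$, the full vector $\{t_v\}_{v \in V}$, and the internal randomness used inside \findr. Under this conditioning, $k_h$, $A_w^{(h)}$, and the event $w \in \free{h}$ are all deterministic, and by \Clm{geo-phase} together with the memorylessness of $\mathrm{Geo}(\delta)$, each $v \in V_{\geq h}$ lies independently in $V_h$ with probability $\delta$. Since $k_v = k_h$ whenever $v \in V_h$, it follows that $N_w^{(h)} = |A_w^{(h)} \cap V_h|$ is distributed as $\mathrm{Bin}(|A_w^{(h)}|, \delta)$. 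Any $v \in A_w^{(h)}$ satisfies $v \in \ball(w)$ (because either $w \in \supp(\widehat M^{t_v}\vec v)$ or $v = w$), so by \Clm{ballsize}, $|A_w^{(h)}| \leq \len\minp^{-1}$, and the parameter settings ensure $\delta\len\minp^{-1} \ll 1$. The elementary bound
\[ \EX\!\left[\mathrm{Bin}(N,\delta) \,\Big|\, \mathrm{Bin}(N,\delta) \geq 1\right] \;=\; \frac{N\delta}{1-(1-\delta)^N} \;\leq\; 2 \qquad (\text{valid when } N\delta \leq 1), \]
using $1-(1-\delta)^N \geq 1 - e^{-N\delta} \geq N\delta/2$, then applies. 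Since the event $h^*(w) = h$ is exactly $w \in \free{h}$ combined with $N_w^{(h)} \geq 1$, this gives $\EX[N_w^{(h)} \mid h^*(w) = h, A_w^{(h)}] \leq 2$, and iterated expectation finishes the reduction.

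The main subtlety, and the step that requires the most care, is confirming that conditioning on $h^*(w) = h$ preserves the binomial structure of $|A_w^{(h)} \cap V_h|$. This reduces to the fact that $A_w^{(h)}$, the indicator $\mathbf{1}[w \in \free{h}]$, and $k_h$ are each measurable with respect to $V_{<h}$, $\{t_v\}$, and the independent random samples used inside \findr, none of which depend on the phase-$h$ coin flips that decide membership in $V_h$ given $V_{\geq h}$.
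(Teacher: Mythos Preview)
Your proposal is correct and takes essentially the same approach as the paper: both swap the summation to track the ``charge'' received by each vertex $w$, observe that $w$ receives charge in exactly one phase (the phase in which it first gets clustered), and then bound the conditional expectation of a binomial given it is at least $1$, using $|A_w^{(h)}| \leq \len\minp^{-1}$ via \Clm{ballsize} and the parameter choice $\delta\len\minp^{-1} < 1$. Your conditioning step is slightly more explicit than the paper's (you spell out that $k_h$, $A_w^{(h)}$, and $\mathbf{1}[w\in F_h]$ are measurable with respect to $V_{<h}$, $\{t_v\}$, and the independent sampling inside \findr, which is the right thing to check), and your constant $2$ is sharper than the paper's $4$, but the argument is the same.
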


\begin{proof} We will apply the following charging argument. When a vertex $v$ is processed in \globpart$(\bR)$,
we will add one unit of charge to every vertex in $\cluster(v) \cap \free{h}$. Note that the total amount
of charge is exactly the quantity we wish to bound. Crucially, note that any vertex $w$ receives charge
in at most one phase; the phase where it leaves the free set.

We will prove that the expected charge that any vertex receives is at most 4 units, which will prove
the claim. Fix a vertex $w$. Let $\chi$ be the random variable denoting the charge that $w$ receives,
and $\cE_{h}$ be the event that $w$ receives charge in phase $h$. Since $w$ receives charge in exactly 
one phase, $\EX[\chi] = \sum_h \EX[\chi | \cE_h] \Pr[\cE_h]$. We will prove that, for all $h$, $\EX[\chi | \cE_h] \leq 4$,
which implies that $\EX[\chi] \leq 4$ as desired.

To analyze $\EX[\chi | \cE_h]$, first condition on a setting of $V_1, V_2, \ldots, V_{h-1}$ (such that $w \in \free{h}$) and all other preprocessing
for all vertices. We refer to this setting as the event $\cC$. The randomness for specifying $V_h$ has not been set. The event $\cE_h$ occurs if there is a $v \in V_h$ such that
$w \in \cluster(v)$. The charge $\chi$ is the number of vertices $v \in V_h$ such that $w \in \cluster(v)$. 
Let $c$ be the number of such vertices in $V_{\geq h}$. 
Note that $v \in \ball(w)$, and by \Clm{ballsize}, $c \leq \len\minp^{-1}$.
% 
% 
% Each \cluster$(v)$ is a level set. If $w \in \cluster(v)$, then $\prw{v,t_v}(w) \leq 0$.
% Hence, $\pvector{w}{t_v}(v) = \pvector{v}{t_v}(w) \geq \minp$. For any fixed length $t$, there are at most $\minp^{-1}$
% vertices $v$ satisfying this condition. Overall, $c \leq \len\minp^{-1})$.

By \Clm{geo-phase}, every vertex in $V_{\geq h}$ is in $V_h$ with probability $\delta$. Hence,
$\Pr[\cE_h | \cC] = 1-(1-\delta)^c$. Note that $\delta c \leq \delta \len \minp^{-1}
= (d^{-70+6+60}\eps^{\delexp} \cdot \eps^{\lenexp} \cdot \eps^{-\minpexp} < 1/2$. 
Hence $(1-\delta)^c \leq 1-\delta c + (\delta c)^2 \leq 1-\delta c/2$
and $\Pr[\cE_h | \cC] \geq \delta c/2$.
Note that $\EX[\chi | \cC] = \sum_{b > 0} {c \choose b} \delta^b \leq \sum_{b > 0} (\delta c)^b \leq 2 \delta c$.
Observe that $\EX[(\chi | \cE_h) | \cC] \leq (2\delta c)/(\delta c/2) = 4$.

Note that the event $\cE_h$ can be partitioned according to the different $\cC$ events.
Hence $\EX[\chi | \cE_h] = \sum_{\cC} \EX[(\chi | \cE_h) | \cC] \Pr[\cC] \leq 4$.
Thus, the proof is completed.
\end{proof}

\Thm{edge-cut} follows by a direct application of these claims and plugging in the parameter values.

\begin{proof} (of \Thm{edge-cut}) By \Clm{cut-cluster} and \Clm{charging}, the expected number of edges
cut by \globpart$(\bR)$ is at most $128\phi\sizefrac^{-8}d\cdot nd + 2\sizefrac nd$. Plugging in the parameters
$\phi = d^{-1}\eps^{\phiexp}$, $\sizefrac = \sizefracval$, and noting that $\eps$ is sufficiently small,
the expectation is at most $\eps nd$.
\end{proof}

We can now wrap up the proof of \Thm{main-intro}, showing the existence of $(\eps, \poly(d\eps^{-1}))$-partition oracles
for minor-closed families.

\begin{proof} (of \Thm{main-intro}) The procedure for the partition oracle is \findpart$(v,\bR)$. Let us prove each property
of \Def{oracle}. 

Consistency: By \Thm{findpart},
the partition created by calls to \findpart$(v,\bR)$ is precisely the same as the partition
created by \globpart$(\bR)$. 

Cut bound: By \Thm{edge-cut}, the expected number
of edges cut is at most $\eps nd$. 

Running time: The running time of \findpart$(v,\bR)$ is $O((d\len\minp^{-1})^5)$ plus
the running time of \findr. The running time of \findr{} is $O((d\len\delta^{-1}\minp^{-1})^5)$, by 
\Clm{findr-time}. By the parameter settings, $\len, \delta^{-1}, \minp^{-1}$ are all $\poly(d\eps^{-1})$. Hence, the total running time of \findpart$(v,\bR)$ is also $\poly(d\eps^{-1})$.
\end{proof}

\section{Diffusion Behavior on Minor-Free Families}\label{sec:diffusion}

In this section, we state and prove the main theorem about diffusions on minor-free graph classes.
This is the (only) part of the paper where the property minor-freeness makes an appearance. \Thm{goodseed}
is used in the proof of \Sec{ls-cluster}. For convenience, we recall the parameters involved.

\parameterinfo

\begin{theorem} \label{thm:goodseed} Let $G$ be a bounded degree graph in minor-closed family. Let $F$ be an arbitrary subset
of at least $\sizefrac n$ vertices. 
There are at least $\sizefrac^2 n/8$ vertices $s \in F$ such that: for at least $\sizefrac\len/8$
timesteps $t \in [\len]$, $\trwalk^t\vec{s}(F) \geq \sizefrac/16$.
\end{theorem}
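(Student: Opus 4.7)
The plan is to combine an averaging argument for the untruncated walk $M^t\vec{s}$ with a bound on the truncation loss derived from hyperfiniteness of minor-closed families.

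For the first ingredient, I would note that $M = I - L/(2d)$ is symmetric, doubly stochastic, and positive semidefinite with top eigenvector $\mathbf{1}/\sqrt{n}$. Spectral decomposition immediately yields $\langle M^t \mathbf{1}_F, \mathbf{1}_F\rangle \geq |F|^2/n$ for every $t \geq 0$, so $\EX_{s \in F}[M^t\vec{s}(F)] = \langle \mathbf{1}_F, M^t \mathbf{1}_F\rangle/|F| \geq \sizefrac$. Summing over $t \in [\len]$ gives $\sum_{s,t} M^t\vec{s}(F) \geq \sizefrac\,|F|\,\len$, and a standard two-step Markov argument then produces at least $\sizefrac^2 n/4$ seeds $s \in F$ for which $M^t\vec{s}(F) \geq \sizefrac/2$ on at least $\sizefrac\len/4$ timesteps. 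This step uses no minor-closedness.

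Next I would invoke the hyperfinite decomposition of the minor-closed family with a sufficiently small parameter $\eps_1 = \mathrm{poly}(\eps/d)$ to obtain connected pieces $P_1, \ldots, P_m$ of size $K = O(\eps_1^{-2})$ and cut at most $\eps_1 dn$ edges. Let $e(u)$ denote the number of boundary edges at $u$, so $\sum_u e(u) \leq 2\eps_1 dn$. Expanding the expected gross outflow from each seed's own piece $P(s)$ as an inner product and using double stochasticity of $M$ gives
\[
\EX_{s \in F}\left[\sum_{t=0}^{\len-1}\sum_{u \in P(s)} M^t\vec{s}(u)\,\frac{e(u)}{2d}\right] \;\leq\; \frac{\len\eps_1}{\sizefrac}.
\]
By a conservation-of-mass argument, the total truncation loss of $\trwalk^t\vec{s}$ incurred at vertices in $V \setminus P(s)$ over $\len$ steps is bounded by this gross outflow, because every mass unit truncated outside $P(s)$ must first have crossed the boundary and can be truncated at most once. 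The truncation loss incurred inside $P(s)$ is bounded by $K\rho$ per step (only $K$ entries lie in $P(s)$), summing to $\len K\rho$, which is negligibly small given $\rho = d^{-60}\eps^{3000}$ and $K = \mathrm{poly}(d\eps^{-1})$. Choosing $\eps_1$ so that $\len\eps_1/\sizefrac \leq \sizefrac^{5}$, Markov's inequality ensures that at most an $O(\sizefrac^{4})$-fraction of seeds $s \in F$ have total truncation loss exceeding $\sizefrac/8$.

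Finally, intersecting the rich seeds from the first step with the low-loss seeds leaves at least $\sizefrac^2 n/4 - O(\sizefrac^{5} n) \geq \sizefrac^2 n/8$ good seeds (using $\sizefrac$ sufficiently small). For any such seed $s$ and any of its $\geq \sizefrac\len/4 \geq \sizefrac\len/8$ rich timesteps $t$, we have $\trwalk^t\vec{s}(F) \geq M^t\vec{s}(F) - (1 - \|\trwalk^t\vec{s}\|_1) \geq \sizefrac/2 - \sizefrac/8 > \sizefrac/16$, yielding the theorem. The main obstacle I anticipate is the mass-conservation bookkeeping used to bound truncation loss outside the piece: a naive per-step support bound gives $\Theta(d)$ loss per step, which is far too weak, so one must argue globally that each unit of mass truncated in $V \setminus P(s)$ is accounted for exactly once by a boundary crossing, whose aggregate is controlled in expectation by hyperfiniteness.
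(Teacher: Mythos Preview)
Your proposal is correct and follows essentially the same two-part strategy as the paper: a spectral/averaging argument to show that $M^t\vec{s}(F)$ is large for many $(s,t)$ pairs, followed by a hyperfiniteness-based bound on the truncation loss $1-\|\trwalk^t\vec{s}\|_1$, combined via Markov-type averaging.

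The one execution difference worth noting is in the truncation step. The paper packages this as a standalone per-vertex lemma (\Lem{walk}): it introduces the walk $\trwalk_S$ restricted to the hyperfinite piece $S$ containing $v$, observes $\|\trwalk^t\vec{v}\|_1 \geq \|\trwalk^t_S\vec{v}\|_1$, and telescopes the latter into (i) one-step leakage out of $S$, controlled pointwise by the probability that the untruncated walk from $v$ hits a cut edge (bounded for all but $\minp^{1/8}n$ vertices via a stationary-distribution Markov argument), plus (ii) in-piece truncation $\leq \minp|S|$ per step. This yields $\|M^t\vec{v}-\trwalk^t\vec{v}\|_1 \leq \len\minp^{1/9}$ uniformly over $t$ for most vertices. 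You instead bound the \emph{average} (over $s\in F$) total truncation loss directly, using mass conservation to charge truncation outside $P(s)$ against the cumulative boundary outflow $\sum_t \sum_{u\in P(s)} M^t\vec{s}(u)\,e(u)/(2d)$, whose expectation is controlled by double stochasticity and the cut bound. Your bookkeeping is sound (the inequality $\sum_t \tau_t^{\mathrm{out}} \leq \sum_t O_t$ follows from the telescoping identity for mass outside $P(s)$, and $O_t$ in the truncated walk is dominated by its untruncated counterpart since $\trwalk^{t-1}\vec{s}\leq M^{t-1}\vec{s}$ coordinatewise). The paper's route gives a slightly stronger reusable statement; yours is a bit more direct for this specific theorem and avoids introducing $\trwalk_S$.
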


We note that this theorem holds for all graphs, if we replace the truncated walk $\trwalk$
by the standard random walk $M$. The main insight is that, for $G$ in a minor-closed family, ``polynomial" truncation
of the walk distribution does not significantly affect the behavior.

The main property of bounded degree minor-free graphs we require is hyperfiniteness,
as expressed by Proposition 4.1 of~\cite{AST90} (also used as Lemma 3.3 of~\cite{KSS:19}).

\begin{theorem} \label{thm:ast} There is an absolute constant $\gamma$ such that the following holds.
Let $H$ be a graph on $r$ vertices. Suppose $G$ is an $H$-minor-free graph. Then, for all $b \in \NN$,
there exists a set of at most $\gamma r^{3/2}n/\sqrt{b}$ vertices whose removal leaves $G$ with
all connected components of size at most $k$.
\end{theorem}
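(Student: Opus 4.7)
The plan is to derive Theorem~\ref{thm:ast} from the Alon--Seymour--Thomas balanced separator theorem, which I take as a black box: there is an absolute constant $c$ such that every $H$-minor-free graph on $m$ vertices (with $|V(H)|=r$) admits a vertex set $S$ of size $|S|\le cr^{3/2}\sqrt{m}$ whose removal leaves every connected component with at most $2m/3$ vertices. This balanced separator result is the genuinely deep input; its proof is the content of~\cite{AST90} and I would not attempt to reproduce it. Everything else is a clean divide-and-conquer on top of it.

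Given the separator theorem, the construction is the natural recursive decomposition. Process pieces one at a time, and while a component $C$ has size $|C|>b$, apply the separator theorem to $C$, delete the resulting separator, and recurse on the subcomponents of $C\setminus S$. Let $X$ be the union of all separators removed. When the procedure halts every component of $G\setminus X$ has size at most $b$, so the theorem reduces to showing $|X|\le \gamma r^{3/2}n/\sqrt{b}$ for an absolute constant $\gamma$.

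I would carry out this bound by a charging argument. For each piece $P$ on which the separator theorem is invoked (so $|P|>b$), distribute the separator cost $cr^{3/2}\sqrt{|P|}$ uniformly as $cr^{3/2}/\sqrt{|P|}$ among the vertices of $P$; summing these individual charges recovers $|X|$ exactly. Fix a vertex $u$ and let $P_0\supseteq P_1\supseteq\cdots\supseteq P_{d}$ be the nested sequence of pieces of size $>b$ to which $u$ belongs during the recursion. The balance guarantee $|P_{i+1}|\le \tfrac{2}{3}|P_i|$ gives $|P_i|\ge (3/2)^{d-i}|P_d|$, and since $|P_d|>b$,
\[
\sum_{i=0}^{d}\frac{cr^{3/2}}{\sqrt{|P_i|}}\ \le\ \frac{cr^{3/2}}{\sqrt{|P_d|}}\sum_{j=0}^{\infty}\left(\tfrac{2}{3}\right)^{j/2}\ =\ O\!\left(\frac{r^{3/2}}{\sqrt{b}}\right),
\]
the geometric series being convergent because $\sqrt{2/3}<1$. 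Summing over all $n$ vertices yields $|X|=O(r^{3/2}n/\sqrt{b})$, which is the claim with an appropriate absolute constant $\gamma$. The only spot that needs a little care is the terminal case where $u$ sits in a subpiece of size $\le b$ (so $u$'s deepest charging level is the parent of that leaf piece, which still has size $>b$), and then the same telescoping applies. The one real mathematical obstacle is, as noted, the balanced separator theorem itself; the recursion and the charging are routine once that input is in hand.
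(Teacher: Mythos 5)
Your argument is correct: the recursive application of the Alon--Seymour--Thomas balanced separator theorem, with the per-vertex charge of $cr^{3/2}/\sqrt{|P|}$ telescoping along the nested chain $|P_i|\ge(3/2)^{d-i}|P_d|>(3/2)^{d-i}b$, gives total charge $O(r^{3/2}/\sqrt{b})$ per vertex and hence $|X|\le \gamma r^{3/2}n/\sqrt{b}$ overall. Note that the paper does not prove this statement at all --- it imports it verbatim as Proposition 4.1 of~\cite{AST90} --- so there is no internal proof to compare against; your derivation is the standard one and is exactly how the cited result follows from the separator theorem. (One cosmetic point: the ``$k$'' in the theorem's conclusion is a typo for $b$, which you have implicitly and correctly fixed.)
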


The key stepping stone to proving \Thm{goodseed} is \Lem{walk}, which shows that truncation does not 
affect walk distributions from many vertices. Let us first state a simple fact on $l_1$-norms.

\begin{fact} \label{fact:norm} Let $\vec{x}$ and $\vec{y}$ be vectors with non-negative entries, such
that for all coordinates $i$, $\vec{x}(i) \geq \vec{y}(i)$.
Then $\|\vec{x} - \vec{y}\|_1 = \|\vec{x}\|_1 - \|\vec{y}\|_1$.
\end{fact}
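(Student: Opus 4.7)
The plan is to directly unpack the $\ell_1$-norm definitions and use the coordinatewise domination hypothesis to drop the absolute values. Because $\vec{x}$ and $\vec{y}$ have non-negative entries, $\|\vec{x}\|_1 = \sum_i \vec{x}(i)$ and $\|\vec{y}\|_1 = \sum_i \vec{y}(i)$ with no absolute values needed. Similarly, since $\vec{x}(i) \geq \vec{y}(i) \geq 0$ for every coordinate $i$, the difference $\vec{x}(i) - \vec{y}(i)$ is non-negative, so $|\vec{x}(i) - \vec{y}(i)| = \vec{x}(i) - \vec{y}(i)$.

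From here the proof is a one-line computation: write
\[
\|\vec{x} - \vec{y}\|_1 \;=\; \sum_i |\vec{x}(i) - \vec{y}(i)| \;=\; \sum_i \bigl(\vec{x}(i) - \vec{y}(i)\bigr) \;=\; \sum_i \vec{x}(i) \;-\; \sum_i \vec{y}(i) \;=\; \|\vec{x}\|_1 - \|\vec{y}\|_1,
\]
where the second equality uses coordinatewise domination and the last equality uses non-negativity of entries. There is no real obstacle; the only thing to be careful about is to invoke both hypotheses (non-negativity of the entries, and the inequality $\vec{x}(i) \geq \vec{y}(i)$) explicitly, since each is needed for a different equality in the chain.
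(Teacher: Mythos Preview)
Your proof is correct and essentially identical to the paper's own argument: unpack the $\ell_1$-norm as a sum of absolute values, use $\vec{x}(i)\geq\vec{y}(i)$ to drop the absolute value, and then split the sum using non-negativity.
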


\begin{proof} $\|\vec{x} - \vec{y}\|_1 \geq \sum_i |\vec{x}(i) - \vec{y}(i)| = \sum_i (\vec{x}(i) - \vec{y}(i)) =
\|\vec{x}\|_1 - \|\vec{y}\|_1$.
\end{proof}

This fact bears relevance for us, since truncations of walk distribution vectors only reduce coordinates.

\begin{lemma} \label{lem:walk} For at least $(1-\minp^{1/8})n$ vertices $v$, the following holds. 
For every $t \leq \len$, $\| M^t \vec{v} - \trwalk^t \vec{v}\|_1 \leq \len \minp^{1/9}$.
\end{lemma}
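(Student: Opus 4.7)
The plan is to combine the hyperfiniteness theorem (\Thm{ast}) with a coordinate-wise domination argument. Apply \Thm{ast} with a size parameter $b$ to obtain a set $S\subseteq V$ of size $|S|\leq \gamma r^{3/2}n/\sqrt b$ whose removal leaves all components of $G-S$ of size at most $b$; write $C_v$ for the component of $G-S$ containing $v$ (for $v\notin S$). Any walk in $G$ that leaves $C_v$ must first step onto a vertex of $S$, so $\Pr[\text{walk from }v\text{ at time } s\text{ lies outside }C_v]$ is at most the probability that it has hit $S$ by time $s$. A standard reversibility calculation (using that $M$ is doubly stochastic) gives
\[
\EX_{v\text{ uar}}\!\big[\Pr[\text{walk from }v\text{ hits }S\text{ within }\ell\text{ steps}]\big]\leq \ell|S|/n,
\]
so by Markov's inequality at least a $(1-\ell|S|/(\alpha n))$ fraction of vertices $v$ are \emph{good}, meaning that this hitting probability is at most $\alpha := \rho^{1/9}/2$.

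Next I will establish the coordinate-wise inequality $\widehat M^t\vec v\leq M^t\vec v$ by induction on $t$. Truncation only zeroes out coordinates, so $\widehat M\vec x\leq M\vec x$ for every nonnegative $\vec x$, and the monotonicity of $M$ propagates the inequality. Combined with \Fact{norm}, this yields
\[
\|M^t\vec v-\widehat M^t\vec v\|_1 \;=\; 1-\|\widehat M^t\vec v\|_1 \;=\; \sum_{s=1}^{t}\tau_s,
\]
where $\tau_s := \|\widehat M^{s-1}\vec v\|_1 - \|\widehat M^s\vec v\|_1\geq 0$ is the $l_1$-mass truncated at step $s$.

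The core step bounds $\tau_s$ for good $v$. Split $\widehat M^{s-1}\vec v=\vec x_A+\vec x_B$, where $\vec x_A$ and $\vec x_B$ are the restrictions to $C_v$ and to $V\setminus C_v$ respectively. Then $M\vec x_A$ is supported on $N[C_v]$, of size at most $(d+1)b$; since each truncated coordinate has value at most $\rho$, these vertices contribute at most $\rho(d+1)b$ to $\tau_s$. For coordinates outside $N[C_v]$ only $M\vec x_B$ contributes, and $\|\vec x_B\|_1\leq \widehat M^{s-1}\vec v(V\setminus C_v)\leq M^{s-1}\vec v(V\setminus C_v)\leq\alpha$ by coordinate-wise domination together with the definition of good. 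Hence $\tau_s\leq \rho(d+1)b+\alpha$.

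Setting $b := \rho^{-8/9}/(2(d+1))$ balances the two contributions, yielding $\tau_s\leq\rho^{1/9}$ for every good $v$ and therefore $\|M^t\vec v-\widehat M^t\vec v\|_1\leq \ell\rho^{1/9}$ for all $t\leq\ell$. With this choice of $b$, $|S|/n=O(\sqrt d\,\rho^{4/9})$, and plugging in $\rho=d^{-60}\eps^{\minpexp}$ and $\len=d^6\eps^{-\lenexp}$ verifies that $\ell|S|/(\alpha n)$ (and hence the bad fraction) is far smaller than $\rho^{1/8}$, as required. The main subtlety is the coordinate-wise domination step: without it, $\supp(\widehat M^{s-1}\vec v)$ of size up to $\rho^{-1}$ could in principle be spread far outside $C_v$, and no analogue of the confinement bound $\|\vec x_B\|_1\leq\alpha$ would be available.
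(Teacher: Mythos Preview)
Your argument is correct and follows the paper's approach: apply the AST hyperfinite decomposition, use a Markov bound to isolate ``good'' starting vertices whose walks rarely leave their component, and bound the per-step truncation loss as (trivial loss over the $O(b)$-sized component) plus (mass that has escaped the component). The only difference is in bookkeeping --- the paper introduces an auxiliary restricted operator $\widehat M_S$ and telescopes on $\|\widehat M_S^{\,l}\vec v\|_1$, whereas you work directly with $\widehat M$ and invoke the coordinate-wise domination $\widehat M^{\,t}\vec v\leq M^{\,t}\vec v$ to control the escaped mass; both routes give the same $\ell\rho^{1/9}$ bound with comparable parameter arithmetic.
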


\begin{proof} Let $H$ be an arbitrary forbidden minor for the minor-closed family of interest.
We first apply \Thm{ast} with $k = \lceil 1/\sqrt{\minp} \rceil$. There exists
a set $C$ of at most $\gamma r^{3/2} \minp^{1/4} dn$ edges who removal leads to connected
components of size at most $\lceil 1/\sqrt{\minp} \rceil \leq 2/\sqrt{\minp}$. For convenience,
set the constant $\gamma' := \gamma r^{3/2}$. We will need the following claim.

\begin{claim} \label{clm:trapped} For at least $(1-\minp^{1/8})n$ vertices $v$, the probability
that an $\len$-length random walk encounters an edge of $R$ is at most $\gamma' \len \minp^{1/8}$. 
\end{claim} 

\begin{proof}
The proof is a Markov bound argument. Suppose not; so there exist strictly more
than $\minp^{1/8}n$ vertices $v$ such that
an $\len$-length random walk encounters an edge of $C$ with at least $\gamma' \len \minp^{1/8}$ probability.
Consider an $\len$-length random walk that starts from the uniform (also stationary) distribution. The above
assumption implies that the expected number of $C$ edges encountered is $> \minp^{1/8} \cdot \gamma' \len \minp^{1/8} = \gamma' \len \minp^{1/4}$.
On the other hand, since the walk remains in the stationary distribution, for all $t \leq \len$, the probability of encountering
an edge in $C$ at the $t$th step is precisely $|C|/2dn$. (Recall that the lazy random walk has $1/2dn$ of taking any edge.)
By linearity of expectation, the expected number of $C$ edges encountered is $\len |C|/2dn$. By the bound of \Thm{ast}, 
$\len |C|/2dn \leq \gamma' \len \minp^{1/4}$ contradicting the bound obtained from the assumption.
\end{proof}

Consider such a vertex $v$, as promised by the previous paragraph. Let $S$ be the connected component over vertices that contains $v$,
after removing the edge cut $C$. Let $q_t$ be the probability that the walk from $v$ leaves $S$
at the $t$th step; by the property of the previous parameter, $\sum_{t \leq \len} p_t \leq \gamma' \len \minp^{1/8}$.
Let $M_S$ be the transition matrix
of the random walk $M$ \emph{restricted to $S$}. Note that $M_S$ is not necessarily stochastic.
We will use the truncated walk $\trwalk_S$. Observe that $\|\trwalk^t \vec{v}\|_1 \geq \|\trwalk^t_S \vec{v}\|_1$.

Since all coordinates of $\trwalk^t \vec{v}$ are at most those of $M^t \vec{v}$, by \Fact{norm},
$\|M^t \vec{v} - \trwalk^t \vec{v}\|_1 = \|M^t \vec{v}\|_1 - \|\trwalk^t \vec{v}\|_1$. 
Since $\|M^t \vec{v}\|_1 = 1 = \|\vec{v}\|_1$ and $\|\trwalk^t \vec{v}\|_1 \geq \|\trwalk^t_S \vec{v}\|_1$,
we can upper bound as follows by a telescoping sum.
\begin{eqnarray} 
\|M^t \vec{v} - \trwalk^t \vec{v}\|_1 & \leq & \sum_{l = 1}^t \Big(\|\trwalk^{l-1}_S\vec{v}\|_1 - \|\trwalk^{l}_S \vec{v}\|_1\Big) \\
& = & \sum_{l=1}^t \Big(\|\trwalk^{l-1}_S\vec{v}\|_1 - \|M_S\trwalk^{l-1}_S\vec{v}\|_1 + \|M_S\trwalk^{l-1}_S\vec{v}\|_1 - \|\trwalk^{l}_S \vec{v}\|_1\Big)\label{eq:norm-diff}
\end{eqnarray}

The quantity $\|\trwalk^{l-1}_S\vec{v}\|_1 - \|M_S\trwalk^{l-1}_S\vec{v}\|_1$ is exactly the probability
that a single step (according to $M$) from $\trwalk^{l-1}_S\vec{v}$ leaves $S$. Since all coordinates in $\trwalk^{l-1}_S\vec{v}$ are
at most those of $M^{l-1}\vec{v}$, this probability is at most $q_l$. 
The quantity $\|M_S\trwalk^{l-1}_S\vec{v}\|_1 - \|\trwalk^{l}_S \vec{v}\|_1$ is the probability mass
lost by truncation of $M_S\trwalk^{l-1}_S\vec{v}$. We apply the 
trivial bound $\minp|S|$. This is where the hyperfiniteness plays a role; since $|S| \leq 2/\sqrt{\minp}$,
$\|\trwalk_S \vec{x}_{l-1} - M_S \vec{x}_{l-1}\|_1 \leq \minp\cdot/2\sqrt{\minp} = 2\sqrt{\minp}$. 

We sum all the these bounds over $l \leq t$, and plug into \Eqn{norm-diff}. We bound
$\|M^t \vec{v} - \trwalk^t \vec{v}\|_1 \leq \sum_{l \leq t} p_l + 2t \sqrt{\minp}$.
By the properties of $v$, this is at most $\gamma'\len \minp^{1/8} + 2\len\sqrt{\minp} \leq \len\minp^{1/9}$
(for sufficiently small $\minp$).
% 
% 
% \begin{equation}
% \|\trwalk^{l-1}_S\vec{v}\|_1 - \|\trwalk^{l}_S \vec{v}\|_1 = \|\trwalk^{l-1}_S \vec{v} - M_S \vec{x}_{l-1}\|_1 + \|\trwalk_S \vec{x}_{l-1} - M_S \vec{x}_{l-1}\|_1 \leq p_l + 2\sqrt{\minp}
% \end{equation}
% We sum over all $l \leq t$ to get the bound $\sum_{l \leq t} p_l + 2t \sqrt{\minp} \leq \gamma' \len \minp^{1/8} + 2\len\sqrt{\minp}$.
\end{proof}

We are now ready to prove \Thm{goodseed}.
We will need the following simple ``reverse Markov" inequality for bounded random variables.
\begin{fact} \label{fact:markov} Let $X$ be a random variable taking values in $[0,1]$ such
that $\EX[X] \geq \delta$. Then $\Pr[X \geq \delta/2] \geq \delta/2$.
\end{fact}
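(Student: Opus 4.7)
The plan is the standard two-term decomposition that turns a lower bound on the mean of a bounded random variable into a lower bound on a tail probability. Let $p := \Pr[X \geq \delta/2]$. The key observation is that $X \leq 1$ pointwise, so on the event $\{X \geq \delta/2\}$ all we can say is $X \leq 1$, whereas on the complementary event we have the tighter bound $X < \delta/2$. Splitting $\EX[X]$ according to this dichotomy and plugging in these two bounds yields the claim essentially in one line.

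Concretely, I would write
\[
\EX[X] \;=\; \EX[X \mid X \geq \delta/2]\cdot p \;+\; \EX[X \mid X < \delta/2]\cdot(1-p) \;\leq\; p \;+\; (1-p)\tfrac{\delta}{2}.
\]
Combining this with the hypothesis $\EX[X] \geq \delta$ gives $\delta \leq p + (1-p)\delta/2$, which rearranges to $p(1 - \delta/2) \geq \delta/2$. Since $1 - \delta/2 \leq 1$, this immediately yields $p \geq \delta/2$ as claimed.

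There is no real obstacle; the argument is a one-liner once the conditional decomposition is written down. The conceptually important point is that this is a genuine \emph{reverse} Markov-type bound, going in the opposite direction from the usual $\Pr[X \geq t] \leq \EX[X]/t$, so the assumption $X \leq 1$ is essential: without it the conclusion can fail (e.g.\ for $X$ equal to a large constant times a rare indicator). One could equivalently argue by contradiction, assuming $p < \delta/2$ and bounding $\EX[X] \leq 1\cdot p + (\delta/2)(1-p) < \delta/2 + \delta/2 = \delta$, contradicting the hypothesis. This fact is then used in the proof of \Thm{goodseed} to convert an averaged lower bound on the $F$-mass of $\trwalk^t \vec{s}$ into the statement that many individual $(s,t)$ pairs carry enough truncated-diffusion mass on $F$.
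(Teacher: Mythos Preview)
Your proof is correct and essentially identical to the paper's: both split $\EX[X]$ as $p\,\EX[X\mid X\geq\delta/2]+(1-p)\,\EX[X\mid X<\delta/2]\leq p+(1-p)\delta/2$ and conclude $p\geq\delta/2$. The only cosmetic difference is that the paper bounds $(1-p)\delta/2\leq\delta/2$ directly, while you rearrange to $p(1-\delta/2)\geq\delta/2$ first.
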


\begin{proof} Let $p$ be the probability that $\Pr[X \geq \delta/2]$.
\begin{eqnarray*}
 \delta \leq \EX[X] & = & \Pr[X \geq \delta/2] \EX[X | X \geq \delta/2] + \Pr[X < \delta/2] \EX[X | X < \delta/2] \\
 & \leq & p + (1-p)(\delta/2) \leq p + \delta/2
\end{eqnarray*}
\end{proof}

% Also, we will need the following inequality.
% 
% \begin{fact} \label{fact:aside}
% 	For all $0 < x \leq 1/2$ the following inequality holds:
% 	$$\ds\frac{x}{96000 \lg(1/x)} \geq \frac{x^{4/3}}{300000}$$
% \end{fact}
% 
% \begin{proof}
% 	Since $x > 0$, on canceling terms and rearranging this is the same as showing
% 
% 	$$\frac{x^{-1/3}}{32\ln(1/x)/\ln(2)}  \geq {1/100}$$
% 
% 	It will suffice to show that $\frac{x^{-1/3}}{\ln(1/x)} \geq 1/3$. In turn
% 	this will follow if we show the following function $f$ is non-negative over
% 	the unit interval.
% 
% 	$$f \equiv x^{-1/3} + \ln(x)/3$$
% 
% 	On taking derivative, we get  
% 	$f' \equiv -1/(3x^{4/3}) + 1/(3 x)$. Note that the derivative is negative everywhere
% 	in the unit interval. Thus, $f$ is decreasing. And therefore in the interval
% 	$(0, 1/2]$ it takes the smallest value at $1/2$ where the inequality can be seen to
% 	hold.
% \end{proof}
 
\begin{proof} (of \Thm{goodseed}) Define $\theta_{s,t}$ as follows. For $s \in F$ and $t \in [\len]$:
if $t$ is odd, $\theta_{s,t} = 0$. If $t$ is even, then $\theta_{s,t}$
is the probability that the $t$-length random walk starting from $s$ ends in $F$.

Let us pick a uar source vertex in $s \in F$, pick a uar length $t \in [\len]$.
We use the fact
that $M$ is a symmetric matrix. We use $\bone_F$ to denote the all $1$s vector on $F$.
\begin{eqnarray}
\EX_{s,t}[\theta_{s,t}]= \bone^T_F \sum_{i=1}^{\len/2} (M^{2i}/\len) (\bone_F/|F|) = (\len|F|)^{-1} \sum_{i \leq \len/2} \bone^T_F M^{2i} \bone_F = (\len|F|)^{-1} \sum_{i \leq \len/2} \|M^i \bone_F\|^2_2 \label{eq:tau}
\end{eqnarray}
Note that $\|M^i\bone_F\|_1 = |F|$, so by Jensen's inequality, $ \|M^i \bone_F\|^2_2 \geq |F|^2/n$.
Plugging in \Eqn{tau}, $\EX_{s,t}[\theta_{s,t}] \geq \len^{-1}\times (\len/2)|F|/n \geq \sizefrac/2$.
For any $s$, $\EX_t[\theta_{s,t}] \leq 1$. By \Fact{markov}, there are at least $\sizefrac|F|/4$
vertices $s \in F$ such that $\EX_t[\theta_{s,t}] \geq \sizefrac/4$. Again applying \Fact{markov},
for at least $\sizefrac|F|/4$ vertices $s \in F$, there are at least $\sizefrac \len/8$ timesteps
$t \in [\len]$ such that $\theta_{s,t} \geq \sizefrac/8$, implying that $M^t\vec{s}(F) \geq \sizefrac/8$.

By \Lem{walk}, there are at least $(1-\minp^{1/8})n$ vertices $s$ such that for all $t \leq \len$,
$\| M^t \vec{s} - \trwalk^t \vec{s}\|_1 \leq \len \minp^{1/9} = d^{6 - 60/9}\eps^{-\lenexp + \minpexp/9} \leq \sizefrac/16$.
By the parameters settings, $\minp^{1/8} < \eps^{\minpexp/8} \leq \sizefrac|F|/8$.
Invoking the bound from the previous paragraph, there are at least $\sizefrac|F|/8$
satisfying the property of \Lem{walk} and the condition at the end of the previous paragraph.
For all such vertices $s$, for all $t \leq \len$, $\trwalk^t\vec{s}(F) \geq M^t\vec{s}(F) - \sizefrac/16$.
Thus, for all such $s$, there are at least $\sizefrac\len/8$ timesteps $t \in [\len]$
such that $\trwalk^t\vec{s}(F) \geq \sizefrac/16$.

% 
% Thus (invoking \Lem{walk}), for at least $\sizefrac|F|/8$ vertices $s$, 
% 
% 
% By \Lem{spread}, there are at most 
% $\minp^{1/8}n \leq \sizefrac^2n/8 \leq \sizefrac|F|/8$ 
% bad vertices. (Here, the first inequality follows by the choice of parameters). 
% And moreover, for any good vertex, for all $t \in [\len]$,
% $\|M^t\vec{s} - \trwalk^t\vec{s}\|_1 \leq 2\len\minp^{7/20} \leq \sizefrac/16$. Subtracting
% these bounds from the guarantees of the previous paragraph,
% there are at least $\sizefrac|F|/8$ vertices $s \in F$ such that: for at least $\sizefrac \len/8$
% timesteps $t \in [\len]$, $\trwalk^t\vec{s}(F) \geq \sizefrac/16$.
% 
\end{proof}

\section{The proof of \Thm{findr}: local partitioning within $F$} \label{sec:ls-cluster}

We repeat the parameter values for convenience.

\parameterinfo

%%\begin{table}[h]
%%\centering
%%\begin{tabular}{| c | c | c | c |}
%%\hline
%%\textbf{Notation} & \textbf{Meaning} &\textbf{Where defined} &\textbf{Value chosen}  \\ \hhline{|=|=|=|=|} %\noalign{\hrule height 2pt} 
%%$\sizefrac$ & Unclustered fraction cutoff & \Sec{prelims} & $\sizefracval$  \\ \hline
%%$\alpha$ & Heavy Bucket Parameter & \Def{bucket} & $\eps^{4/3}/300,000$ \\  \hline
%%$\len$ & Walk Length & \Sec{prelims} & $\lenval$\\ \hline
%%$\minp$ & Min Probability & \Sec{prelims} & $\minpval$\\ \hline
%%$\delta$ & Phase rate & \Sec{randomness} & $\deltaval$\\ \hline
%%\end{tabular}
%%\caption{Symbols Used}
%%\label{tab:notation}
%%\end{table}

Recall that \Thm{restrict-cut} shows that there are many $s \in F$ from which 
(level sets of) diffusions in $G$ discover low conductances cuts in $F$.
We use the Lov\'{a}sz-Simonovits curve to represent the truncated diffusion vector,
and keep track of the vertices of $F$ wrt to the curve. 
This is done via a careful adaptation of Lov\'{a}sz-Simonovits method,
as presented in \Lem{flatten}.

The main technical tool which we will use in our analysis is the Lov\'{a}sz-Simonovits method,
defined in~\cite{LS:90}, whose use for clustering was pioneered by~\cite{ST12}.
    \begin{definition}
        For a non-negative vector $\mathbf{p}$ over $V$, the function $I: \R^n \times [n] \rightarrow [0, 1]$ is defined as
        $$ I(\mathbf{p}, x) =  \max_{\substack{\mathbf{w} \in [0, 1]^n \\ \sum \mathbf{w}(u) = x }}\sum_{u \in V} \mathbf{p}(u)\mathbf{w}(u)$$
        This is equivalent to summing over the $x$ heaviest elements of $\mathbf{p}$ when $x$ is an integer, and linearly interpolating between these points otherwise.

        For notational convenience, we define:
        $$I_{s, t}(x) = I(\trwalk^t \vec{s}, x)\textrm{.}$$
    \end{definition}

Note that $I_{s,t}$ is a concave curve.

\subsection{The Lov\'{a}sz-Simonovits lemma} \label{sec:ls}

The fundamental lemma of Lov\'{a}sz-Simonovits is the following (Lemma 1.4 of~\cite{LS:90}, also refer to Theorem 7.3.3 of Lecture 7 of~\cite{Sp-notes}).

\begin{lemma} \label{lem:ls} Let $\overline{x} = \min(x,n-x)$. 
	Consider any non-negative vector $\vec{p}$, and let $S_x$ denote the level
	set of $M\vec{p}$ with $x$ vertices.
	$$ I(M\vec{p}, x) \leq (1/2)(I(\vec{p},x-2\overline{x}\Phi(S_x)) + I(\vec{p},x-2\overline{x}\Phi(S_x))) $$
\end{lemma}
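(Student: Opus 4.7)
The plan is to prove the standard Lovász--Simonovits inequality where the right-hand side should read $(1/2)\bigl(I(\vec p, x-2\overline{x}\Phi(S_x)) + I(\vec p, x+2\overline{x}\Phi(S_x))\bigr)$ (the displayed statement appears to contain a sign typo, with both terms written as minus). The proof is a short two-step bookkeeping argument; no new structural facts about $G$ are needed, only the fact that $M$ is the lazy walk matrix with $M_{u,v} = 1/(2d)$ on edges and $M_{u,u} = 1 - d(u)/(2d)$.

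Step one: reduce $I(M\vec p, x)$ to a weighted sum over the \emph{original} vertices. For integer $x$, $I(M\vec p, x) = \sum_{v \in S_x}(M\vec p)(v)$ by definition. Swapping the order of summation, I get $I(M\vec p, x) = \sum_u \vec p(u)\, w(u)$, where $w(u) := \sum_{v \in S_x} M_{u,v}$ is the mass that $u$ sends into $S_x$ in one step. Writing $e(u,T)$ for the number of $u$'s edges landing in $T$, a direct calculation gives
\[
w(u) = \begin{cases} 1 - e(u,\overline{S_x})/(2d), & u \in S_x, \\ e(u, S_x)/(2d), & u \notin S_x, \end{cases}
\]
so that $w(u) \in [0,1]$ and $\sum_u w(u) = x$ (the latter because $M$ is doubly stochastic, so it preserves the $\ell_1$-mass $x$ of $\mathbf{1}_{S_x}$ when applied from the opposite side).

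Step two: split $\vec w$ as the average of two vectors in $[0,1]^n$ whose coordinate sums are $x \pm 2\overline{x}\Phi(S_x)$. Define the slack $\delta(u) := e(u,\overline{S_x})/(2d)$ if $u \in S_x$ and $\delta(u) := e(u, S_x)/(2d)$ if $u \notin S_x$, and set $\vec w_{\pm} := \vec w \pm \vec\delta$. Then for $u \in S_x$, $w_+(u) = 1$ and $w_-(u) = 1 - e(u,\overline{S_x})/d$, while for $u \notin S_x$, $w_+(u) = e(u,S_x)/d$ and $w_-(u) = 0$; using $e(u,S_x) + e(u,\overline{S_x}) = d(u) \le d$, all four values lie in $[0,1]$. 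Double-counting the cut edges gives $\sum_u \delta(u) = E(S_x,\overline{S_x})/d = 2\overline{x}\Phi(S_x)$, hence $\sum_u w_\pm(u) = x \pm 2\overline{x}\Phi(S_x)$. Plugging back into the maximization defining $I$ yields $\sum_u \vec p(u)\, w_\pm(u) \leq I(\vec p, x \pm 2\overline{x}\Phi(S_x))$, and averaging the two inequalities gives the lemma for integer $x$. For non-integer $x$ the bound extends immediately because $I(\vec p, \cdot)$ is the piecewise-linear interpolation of its values at integer points (concavity already follows from the maximization definition).

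The argument has no genuine obstacle; the one place demanding care is the bookkeeping in step two, namely verifying that $\vec w_\pm$ really lie in $[0,1]^n$ and that their coordinate sums are exactly $x \pm 2\overline{x}\Phi(S_x)$. Both reduce to clean edge-counting in the two cases $u \in S_x$ and $u \notin S_x$, together with the identity $\sum_{u \in S_x} e(u,\overline{S_x}) = E(S_x,\overline{S_x}) = \sum_{u \notin S_x} e(u, S_x)$.
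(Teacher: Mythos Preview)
Your proof is correct and is precisely the standard Lov\'asz--Simonovits argument; the paper itself does not give a proof of this lemma but only cites it (Lemma~1.4 of~\cite{LS:90} and Spielman's lecture notes), so your write-up is effectively supplying what the cited references contain. Your observation that the displayed statement has a sign typo (the second argument on the right should be $x+2\overline{x}\Phi(S_x)$) is also correct, as the subsequent Lemma~\ref{lem:lstrun} in the paper confirms.
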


The concavity of the curves implies monotonicity, $I(M\vec{p}) \leq I(\vec{p})$.
The application of this lemma to our setting leads to the following statement.

\begin{lemma} \label{lem:lstrun} For all $t \leq \len$ and $x \leq 1/\minp$,
$$ \ls{s}{t}(x) \leq (1/2)(\ls{s}{t-1}(x(1-\Phi(\level{s}{t}{x}))) + \ls{s}{t-1}(x(1+\Phi(\level{s}{t}{x}))))$$
\end{lemma}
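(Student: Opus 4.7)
My plan is to apply the underlying Lovász--Simonovits inequality \Lem{ls} to $\vec{p}^{(t-1)} := \trwalk^{t-1}\vec{s}$ and then transfer the conclusion from the untruncated image $M\vec{p}^{(t-1)}$ to the truncated image $\trwalk^t\vec{s}$ via a one-line monotonicity argument. The reason the transfer is painless is that the truncation step is a coordinate-wise contraction: by \Def{trun}, $\trwalk^t\vec{s}$ is obtained from $M\vec{p}^{(t-1)}$ by zeroing out coordinates of value at most $\minp$, so $0 \leq (\trwalk^t\vec{s})(u) \leq (M\vec{p}^{(t-1)})(u)$ for every vertex $u$. Because $I(\cdot,x)$ is a maximum over fractional weight vectors $\mathbf{w} \in [0,1]^n$ with $\sum_u \mathbf{w}(u) = x$, it is monotone in its first argument under coordinate-wise $\leq$, which immediately yields
\[
\ls{s}{t}(x) \;=\; I(\trwalk^t\vec{s},x) \;\leq\; I(M\vec{p}^{(t-1)},x).
\]

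I then invoke \Lem{ls} with $\vec{p} = \vec{p}^{(t-1)}$. Since $x \leq \minp^{-1}$ is far below $n/2$, we have $\overline{x} = x$, and the lemma (read with the intended $\pm$ signs) reads
\[
I(M\vec{p}^{(t-1)},x) \;\leq\; \tfrac{1}{2}\bigl(\ls{s}{t-1}(x(1 - 2\Phi(S_x))) + \ls{s}{t-1}(x(1 + 2\Phi(S_x)))\bigr),
\]
where $S_x$ is the set of $x$ heaviest entries of $M\vec{p}^{(t-1)}$. The next order of business is to identify $S_x$ with $\level{s}{t}{x}$. Since the non-zero coordinates of $\trwalk^t\vec{s}$ are exactly those entries of $M\vec{p}^{(t-1)}$ that are at least $\minp$, and they carry identical values, the top $x$ coordinates of the two vectors coincide as sets whenever $x \leq |\supp(\trwalk^t\vec{s})|$; in the boundary regime $x > |\supp(\trwalk^t\vec{s})|$ the left-hand side already saturates at $\|\trwalk^t\vec{s}\|_1$ and the claim reduces to the $\ell_1$-monotonicity of truncated diffusion.

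The final step is a concavity relaxation loosening the spread from $2\Phi$ to $\Phi$. Since $\ls{s}{t-1}(\cdot)$ is concave in its argument --- being the integral of the sorted (non-increasing) coordinates of $\vec{p}^{(t-1)}$ --- widening the spread around $x$ can only decrease the averaged value, so writing $x(1 \pm \Phi)$ as an explicit convex combination of $x$ and $x(1 \pm 2\Phi)$ and applying Jensen gives
\[
\tfrac{1}{2}\bigl(\ls{s}{t-1}(x(1 - 2\Phi)) + \ls{s}{t-1}(x(1 + 2\Phi))\bigr) \;\leq\; \tfrac{1}{2}\bigl(\ls{s}{t-1}(x(1 - \Phi)) + \ls{s}{t-1}(x(1 + \Phi))\bigr).
\]
Chaining the three displayed inequalities yields the lemma. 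The only real subtlety is the bookkeeping between the level sets of $M\vec{p}^{(t-1)}$ and $\trwalk^t\vec{s}$ at the truncation boundary, but as above this is essentially automatic from the observation that truncation only deletes low-weight coordinates.
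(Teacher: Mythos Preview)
Your approach is exactly what the paper intends: the paper gives no proof of \Lem{lstrun} beyond the sentence ``the application of this lemma to our setting leads to the following statement,'' and your derivation via (i) coordinate-wise monotonicity of $I(\cdot,x)$ under truncation, (ii) \Lem{ls} with $\overline{x}=x$, (iii) identifying $S_x$ with $\level{s}{t}{x}$, and (iv) a concavity relaxation from spread $2\Phi$ to spread $\Phi$, is the natural unpacking. All four steps are correct in the main regime $x \le |\supp(\trwalk^t\vec{s})|$, where the top-$x$ coordinates of $M\trwalk^{t-1}\vec{s}$ and of $\trwalk^t\vec{s}$ genuinely coincide.

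The boundary case $x > |\supp(\trwalk^t\vec{s})|$ is where your argument has a real gap. You assert that ``the claim reduces to the $\ell_1$-monotonicity of truncated diffusion,'' but $\ell_1$-monotonicity only gives $\|\trwalk^t\vec{s}\|_1 \le \|\trwalk^{t-1}\vec{s}\|_1$, whereas the right-hand side of the lemma is $\tfrac{1}{2}\bigl(\ls{s}{t-1}(x(1-\Phi)) + \ls{s}{t-1}(x(1+\Phi))\bigr)$ with $\Phi = \Phi(\level{s}{t}{x})$, which by concavity is at most $\ls{s}{t-1}(x)$ and can be strictly smaller than $\|\trwalk^{t-1}\vec{s}\|_1$. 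More fundamentally, once $x$ exceeds the support size, $\level{s}{t}{x}$ is padded with zero-weight vertices chosen by id, while the level set $S_x$ of $M\trwalk^{t-1}\vec{s}$ is padded with genuinely low-weight vertices; these two sets need not coincide, so $\Phi(S_x)$ and $\Phi(\level{s}{t}{x})$ are unrelated and \Lem{ls} gives you no handle on the latter. The paper does not address this either, and where the boundary regime actually arises downstream (Case~2 of \Clm{step-by-step-drop}) one can sidestep the issue by bounding $\Phi(S_x)$ directly via the same edge-counting that the paper uses for $\Phi(T)$. But as a proof of \Lem{lstrun} in its stated generality (all $x \le 1/\minp$), the boundary case needs more than what you have written.
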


Let $\lin{t}{w}{y}$ be the straight line between the points $(w,\ls{s}{t}(w))$ and $(y,\ls{s}{t}(y))$.

\begin{lemma} \label{lem:flatten} Let $t_0 < t_1 < \ldots < t_h$ be time steps.
	Suppose $\forall i \leq h$ and $x \in [w,y]$: $\level{s}{t_i}{x} \subseteq \supp(\trwalk^t \vec{s})$  $\Longrightarrow$
	$\Phi(\level{s}{t_i}{x}) \geq \psi$.
	Then, $\forall i \leq h, \forall x \in [w,y]$
	$$ \ls{s}{t_i}(x) \leq \lin{t_0-1}{w}{y}(x) + \sqrt{\min(x-w, y-x)} (1-\psi^2/128)^i $$
\end{lemma}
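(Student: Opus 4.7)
The plan is to prove Lemma~\ref{lem:flatten} by induction on $i$, following the standard Lovász--Simonovits curve-flattening template. Set $L(x) := \lin{t_0-1}{w}{y}(x)$ and $\overline{x} := \min(x-w, y-x)$, and write $g_i(x) := L(x) + \sqrt{\overline{x}}(1-\psi^2/128)^i$; the goal is to show $\ls{s}{t_i}(x) \leq g_i(x)$ for every $i \leq h$ and $x \in [w,y]$.

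For the base case $i = 0$, I would apply Lemma~\ref{lem:lstrun} once at time $t_0$ and then use Jensen's inequality (via concavity of $\ls{s}{t_0-1}$) to collapse the two arguments $x(1 \pm \phi)$, obtaining $\ls{s}{t_0}(x) \leq \ls{s}{t_0-1}(x)$. The claim then reduces to $\ls{s}{t_0-1}(x) \leq L(x) + \sqrt{\overline{x}}$, which is the classical bound of a concave curve above its own chord: $\ls{s}{t_0-1}$ is concave with slope at most $1$ (entries of a truncated diffusion vector are bounded by $1$), so the vertical gap is at most $\overline{x}$, giving $\sqrt{\overline{x}}$ directly when $\overline{x} \leq 1$; for $\overline{x} > 1$ one appeals to the standard second-moment Lovász--Simonovits estimate, using that the support of $\trwalk^{t_0-1}\vec{s}$ has size at most $\minp^{-1}$ and total mass at most $1$.

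For the inductive step, assume $\ls{s}{t_{i-1}}(x') \leq g_{i-1}(x')$ for all $x' \in [w,y]$. Applying Lemma~\ref{lem:lstrun} at $t_i$ yields
\begin{equation*}
\ls{s}{t_i}(x) \leq \tfrac12 \bigl(\ls{s}{t_i-1}(x(1-\phi)) + \ls{s}{t_i-1}(x(1+\phi))\bigr),
\end{equation*}
with $\phi := \Phi(\level{s}{t_i}{x}) \geq \psi$ by the conductance hypothesis (if the level set leaves the support, $\ls{s}{t_i}$ is flat in a neighborhood of $x$ and the bound is immediate from the support boundary). Monotonicity of $\ls{s}{t}$ in $t$, an immediate consequence of Lemma~\ref{lem:ls} together with concavity, lets me replace $\ls{s}{t_i-1}$ by $\ls{s}{t_{i-1}}$, triggering the induction hypothesis at $x(1 \pm \phi)$. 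The linear part collapses via $\tfrac12(L(x(1-\phi)) + L(x(1+\phi))) = L(x)$, so the entire proof reduces to
\begin{equation*}
\tfrac12 \bigl(\sqrt{\overline{x(1-\phi)}} + \sqrt{\overline{x(1+\phi)}}\bigr) \leq \sqrt{\overline{x}}\,(1 - \psi^2/128).
\end{equation*}
A short case analysis on whether $x$ lies in the left or right half of $[w,y]$, and whether $x(1 \pm \phi)$ stays inside the interval, reduces this to the one-variable inequality $\tfrac12(\sqrt{1-u} + \sqrt{1+u}) \leq 1 - u^2/8$ for $u \in [0,1]$, applied with $u = x\phi/\overline{x}$. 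Since $\phi \geq \psi$ and $\overline{x} \leq x$, we have $u \geq \psi$, giving a shrinkage factor of $1 - \psi^2/8$, comfortably better than the claimed $1 - \psi^2/128$.

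The main obstacle is the boundary bookkeeping. When $x$ is close to $w$ or $y$, one of $x(1 \pm \phi)$ may escape the interval; this is handled by extending $\ls{s}{t_{i-1}}$ outside $[w,y]$ by its endpoint value, which is consistent with monotonicity and concavity of $I$. When $x$ is near the midpoint $(w+y)/2$, the realizer of the $\min$ inside $\overline{\cdot}$ switches between the two arguments, so the substitution $\overline{x(1 \pm \phi)} \approx \overline{x} \pm x\phi$ only holds piecewise; in these regions, concavity of $\sqrt{\cdot}$ combined with the telescoping identity $\overline{x(1-\phi)} + \overline{x(1+\phi)} \leq 2\overline{x}$ (suitably interpreted) salvages the argument. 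The looseness between $\psi^2/8$ and the claimed $\psi^2/128$ is built in precisely to absorb the factor-of-sixteen slack these edge cases introduce.
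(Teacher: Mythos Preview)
Your induction template is correct, and the base case works (the paper does it the same way, just using $\ls{s}{t_0}\leq 1\leq\sqrt{\Delta_x}$ directly for $\Delta_x\geq 1$). The real gap is in the inductive step, precisely at the boundary. The recurrence from \Lem{lstrun} hands you arguments $x(1\pm\Phi)$, and you want to plug the induction hypothesis in at those points. But when $x$ is close to $w$, the step $x\Phi$ can vastly exceed $\overline{x}=x-w$: take $w=100$, $y=200$, $x=101$, $\Phi=0.1$; then $\overline{x}=1$ while $x\Phi\approx 10$, so $x(1-\Phi)<w$ and $\overline{x(1+\Phi)}\approx 11$. Your clamping gives $0$ for the left term, but the right term contributes $\tfrac12\sqrt{11}\approx 1.66$, which is not bounded by $\sqrt{1}\cdot(1-\psi^2/128)$. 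Equivalently, your substitution $u=x\Phi/\overline{x}$ requires $u\le 1$ for the one-variable inequality to even be applicable, and near the endpoints $u$ can be arbitrarily large. Neither the ``extend by endpoint value'' trick nor the telescoping identity you mention rescues this, because the linear part also fails to collapse once you clamp one argument to $w$.

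The fix the paper uses is to invoke concavity of $\ls{s}{t_{i-1}}$ \emph{before} applying the induction hypothesis, not after. For a concave $f$, the quantity $\tfrac12[f(x-a)+f(x+a)]$ is non-increasing in $a$; since $\Delta_x\psi/4\leq x\psi\leq x\Phi$, you may replace the raw step $x\Phi$ by the smaller step $\Delta_x\psi/4$ and only weaken the upper bound. Now both $x\pm\Delta_x\psi/4$ are guaranteed to lie in $[w,y]$, the induction hypothesis applies cleanly, the linear part collapses exactly, and the square-root calculation goes through with $u=\psi/4$, yielding $1-(\psi/4)^2/8=1-\psi^2/128$. The paper packages this as a separate claim, which also handles the out-of-support case you dismissed as ``immediate'': when $\level{s}{t_i}{x}$ spills slightly past the support, one shows its conductance is still at least $\psi/4$ (this is where the $4$ comes from); when $x$ is far past the support, the curve is flat and one checks $x-\Delta_x\psi/4$ is still past the support, so the recurrence with step $\Delta_x\psi/4$ holds trivially.
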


\begin{proof} For convenience, let $\Delta_x = \min(x-w, y-x)$.
We prove by induction over $i$.

For showing the base case take $i=0$. Now consider the following cases.
\begin{asparaitem}
    \item Suppose $x = w$ or $x = y$. By monotonicity, $\ls{s}{t_0}(x) \leq \ls{s}{t_0-1}(x)$.
Since $x \in \{w,y\}$, the latter is exactly $\lin{t_0}{w}{y}(x)$.
    \item Suppose $x \in [w+1, y-1]$. Then $\Delta_x \geq 1$ and $\ls{s}{t_0}(x) \leq 1 \leq \sqrt{\Delta_x}$.
    \item Suppose $x \in (w,w+1)$. Note that $\Delta_x = w-x < 1$. By the definition of the LS curve,
    $\ls{s}{t_0}(x) = \ls{s}{t_0}(w) + (w-x)(\ls{s}{t_0}(w+1) - \ls{s}{t_0}(w)) $ $\leq \ls{s}{t_0-1}(w) + \sqrt{w-x}$
    $\leq \lin{t_0-1}{w}{y}(x) + \sqrt{\Delta_x}$.
    \item Suppose $x \in (y-1,y)$. An identical argument to the above holds.
\end{asparaitem}

Now for the induction. Suppose the premise holds at step $t_i$. Namely for $x \in [w,y]$, for 
all level sets $\level{s}{t_i}{x}$ contained inside $\supp(\trwalk^t \vec{s})$,
$\Phi(\level{s}{t_i}{x}) \geq \psi \geq \psi$. We would like to upperbound $\ls{s}{t_i}(x)$.
To this end, let us consider some $x \in [w,y]$. By
\Lem{lstrun},

\begin{eqnarray}
\ls{s}{t_i}(x) & \leq & (1/2)[\ls{s}{{t_i-1}}(x(1-\Phi(\level{s}{t_i}{x}))) + 
\ls{s}{{t_i-1}}(x(1+\Phi(\level{s}{t_i}{x})))] \\
& \leq & (1/2)[\ls{s}{{t_{i-1}}}(x(1-\Phi(\level{s}{t_i}{x}))) + 
\ls{s}{{t_{i-1}}}(x(1+\Phi(\level{s}{t_i}{x})))]
\end{eqnarray}

\noindent The second inequality follows by monotonicity, since $t_{i-1} \leq t_i - 1$. Note that $\Delta_x  = \min(x-w, y-x) \leq x$ for all $x \in [w,y]$. %We would like to show that the following holds.
\Clm{step-by-step-drop} (which we prove after the current lemma) shows the following.

\begin{claim}\label{clm:step-by-step-drop}
	For all $1 \leq i \leq h$, for all $x \in [w,y]$, the following holds
	\begin{eqnarray}
		\ls{s}{t_i}(x) & \leq & (1/2)[\ls{s}{{t_{i-1}}}(x-\Delta_x\psi/4)) + 
	\ls{s}{{t_{i-1}}}(x+\Delta_x \psi/4))] \label{eq:ls-recur}
	\end{eqnarray}
\end{claim}

%
%%%%%%%%%%%%%%%%%%%%%%%%%%%%%%%%%%%%%%%%%%%%%%%%%%%%%
\noindent Now, let $x_L = x - \Delta_x \psi/4$ and $x_R = x + \Delta_x \psi/4$. Using \Clm{step-by-step-drop} we get

\begin{eqnarray}
\ls{s}{t_i}(x) & \leq & (1/2)[\lin{t_0-1}{w}{y}(x_L) + \sqrt{\Delta_{x_L}}
(1-\psi^2/128)^{i-1} \nonumber \\
& & + \lin{t_0-1}{w}{y}(x_R) + \sqrt{\Delta_{x_R}}(1-\psi^2/128)^{i-1}] \\
& = & (1/2)[\lin{t_0-1}{w}{y}(x_L) + \lin{t_0-1}{w}{y}(x_R)] \nonumber\\
& & + (1/2)[\sqrt{\Delta_{x_L}}) (1-\psi^2/8)^{i-1} + \sqrt{\Delta_{x_R}}(1-\psi^2/128)^{i-1}] \label{eq:ls}
\end{eqnarray}
\noindent Here, \Eqn{ls} follows from the induction hypothesis.
Since $\lin{t_0-1}{w}{y}$ is a linear function, the first term is exactly $\lin{t_0-1}{w}{y}(x)$.
We analyze the second term. 

We first assume that $\Delta_{x} = x-w$ (instead of $y-x$).
\begin{eqnarray}
\Delta_{x_L} & = & \min(x-\psi\Delta_{x}/4 - w, y-x+\psi\Delta_{x}/4) \\
& = & \min((1-\psi/4)\Delta_{x}, y-x+\psi/4\Delta_{x}) \leq (1-\psi/4)\Delta_{x}
\end{eqnarray}
Analogously, 
\begin{eqnarray}
\Delta_{x_R}  &= & \min(x+\psi\Delta_{x}/4 - w, y-x-\psi\Delta_{x}/4) \\
& = & \min((1+\psi/4)\Delta_{x}, y-x-\psi\Delta_{x}/4) \leq (1+\psi/4)\Delta_{x}
\end{eqnarray}
Thus, the second term of \Eqn{ls} is at most $(1/2)(1-\psi^2/128)^{i-1}\sqrt{\Delta_{x}}(\sqrt{1-\psi/4} + \sqrt{1+\psi/4})$.

Now, we consider $\Delta_{x} = y-x$.
\begin{eqnarray}
\Delta_{x_L} & = & \min(x-\psi\Delta_{x}/4 - w, y-x+\psi\Delta_{x}/4) \\
& = & \min(x-\psi\Delta_{x}/4 - w, (1+\psi/4)\Delta_{x}) \leq (1+\psi/4)\Delta_{x}
\end{eqnarray}
Analogously, 
\begin{eqnarray}
\Delta_{x_R}  &= & \min(x+\psi\Delta_{x}/4 - w, y-x-\psi\Delta_{x}/4) \\
& = & \min(x+\psi\Delta_{x}/4 - w, (1-\psi/4)\Delta_{x}) \leq (1-\psi/4)\Delta_{x}
\end{eqnarray}
In this case as well, the second term of \Eqn{ls} is at most $(1/2)(1-\psi^2/128)^{i-1}\sqrt{\Delta_{x}}(\sqrt{1-\psi/4} + \sqrt{1+\psi/4})$.

In both cases, we can upper bound \Eqn{ls} as follows. (We use the inequality
$\frac{\sqrt{1-z} + \sqrt{1+z}}{2} \leq 1-z^2/8$.
$$ \ls{s}{t_i}(x) \leq \lin{t_0-1}{w}{y}(x) + (1-\psi^2/128)^{i-1}\sqrt{\Delta_{x}}\frac{\sqrt{1-\psi/4} + \sqrt{1+\psi/4}}{2} \leq \lin{t_0-1}{w}{y}(x) + (1-\psi^2/128)^i\sqrt{\Delta_{x}}$$

\end{proof}

Now, we establish \Clm{step-by-step-drop}, the missing piece in the above proof.

\begin{proof} (of \Clm{step-by-step-drop})
Suppose $x_{max} \in [w,y]$
is the maximum value of $x \in [w,y]$ for which $\level{s}{t_i}{x}$ is still inside
the support of the truncated diffusion at the $t_i$-th step. We split into three cases:
$x \leq x_{max}$, $x \in (x_{max}, x_{max} + \Delta_{x_{max}}\psi/2]$, $x > x_{max} + \Delta_{x_{max}}\psi/2$.
Note that in the latter two cases, $\level{s}{t_i}{x}$ is not contained in $\supp(\trwalk^{t_i}\vec{s})$.

{\bf Case 1, $x \leq x_{max}$:} Note that \Eqn{ls-recur} holds by concavity of the Lov\'{a}sz-Simonovits curve when 
$\level{s}{t_i}{x} \subseteq \supp(\trwalk^{t_i} \vec{s})$ (because then this level set has 
conductance at least $\psi$). 

{\bf Case 2, $x \in (x_{max}, x_{max} + \Delta_{x_{max}}\psi/2]$:}
Let $S = \level{s}{t_i}{x_{max}}$ and
let $T = \level{s}{t_i}{x}$. Observe that

\begin{align}
	\Phi(T) = \frac{|E(T, \overline{T})|}{d|T|} 
	\stackrel{(\bone)}\geq \frac{|E(S,\overline{S})| - \psi/2 \cdot d|S|}{d|S| + \psi/2 \cdot d|S|} 
	\stackrel{(\btwo)}\geq \frac{\psi d|S|/2 }{2d|S|} 
	\geq \frac{\psi}{4}
\end{align}

Here, $(\bone)$ follows because $T$ could contain at most $\psi |S|/2 $ neighbors of $S$ which
could cost us at most $\psi d|S|/2$ edges in the cut $(S, \overline{S})$. $(\btwo)$ follows by
upperbounding $\psi$ by $1$. Again the claim in \Eqn{ls-recur} follows by concavity of the
Lov\'{a}zs-Simonovits curve. \\

{\bf Case 3, $x > x_{max} + \Delta_{x_{max}}\psi/2$:}
Now let $x_r = x_{max} + \Delta_{x_{max}} \psi/2$. 
Write $x = x_{max} + \Delta_{x_{max}} \psi/2 + s$. Recall $\Delta_x = \min(x-w, y-x)$.
We claim that $x - \Delta_x \psi/4 \geq x_{max}$. First let us see how to establish \Eqn{ls-recur}
assuming this claim holds. Assuming this claim, we have
$$\ls{s}{t_i}(x - \Delta_x \psi/4) = \ls{s}{t_i}(x_{max}) = \ls{s}{t_i}(x + \Delta_x \psi/4) = \|\trwalk^{t_i} \vec{s}\|_1.$$
And therefore,

\begin{align*}
	\ls{s}{t_i}(x) &= \frac{1}{2} \cdot \left[\ls{s}{t_i}(x - \Delta_x \psi/4) + \ls{s}{t_i}(x + \Delta_x \psi/4) \right] \\
	&\leq \frac{1}{2} \cdot \left[\ls{s}{t_{i-1}}(x - \Delta_x \psi/4) + \ls{s}{t_{i-1}}(x + \Delta_x \psi/4)\right]
\end{align*}

Now, all that remains to establish \Eqn{ls-recur} is to show $x - \Delta_x \psi/4 \geq x_{max}$. For simplicity, write
$\Delta_m = \Delta_{x_{max}}$. Now consider two cases depending on the value of $\Delta_m$
\begin{enumerate}
	\item {\bf Case 1} $\Delta_m = x_{max} - w$.

		In this case note that
		\begin{align*}
			x - \Delta_x \psi/4 &= x_{max} + \Delta_m \psi/2 + s - (x - w) \psi/4 \\
			&\geq x_{max} + \Delta_m \psi/2 + s - (x_{max} + \Delta_m \psi/2 + s - w) \psi/4 \\
			&\geq x_{max} + \Delta_m \psi/4 - \Delta_m \psi^2/8 + s - s \psi/4 \\
			&\geq x_{max} + \Delta_m \psi/8 + s(1 - \psi/4) \geq x_{max}
		\end{align*}

		which establishes the claim above as desired.

	\item {\bf Case 2} $\Delta_m = y - x_{max}$.

		In this case note that
		\begin{align*}
			x - \Delta_x \psi/4 &= x_{max} + \Delta_m \psi/2 + s - (y - x) \psi/4 \\
			&\geq x_{max} + \Delta_m \psi/2 + s - (y - x_{max} - \Delta_m \psi/2 - s) \psi/4 \\
			&\geq x_{max} + \Delta_m \psi/4 + \Delta_m \psi^2/8 + s + s \psi/4 \\
			&\geq x_{max}
		\end{align*}
\end{enumerate}

Thus, in both cases, the claim from above holds. This means that \Eqn{ls-recur} holds as long as the
premise holds for the $t_i$-th step.
\end{proof}

\subsection{From leaking timesteps to the dropping of the LS curve} \label{sec:leak}

We fix a source vertex $s$, and consider the evolution of $\trwalk^t\vec{s}$.
Therefore, we drop the dependence of $s$ from much of the notation.

We use $\prw{t}$ to denote $\trwalk^t\vec{s}$. We begin with a 
few definitions.

\begin{definition} \label{def:leaking} A timestep $t$ is called \emph{leaking for source $s$}
if, for all $k \leq \minp^{-1}$: if $\level{s}{t}{k} \subseteq \supp(\trwalk^{t}\vec{s})$ and $|\level{s}{t}{k} \cap F| \geq \alpha^2 k/400$,
then $\Phi(\level{s}{t}{k}) \geq 1/d \len^{1/3}$.

If timestep $t$ is not leaking for $s$, there exists $k \leq \minp^{-1}$ such that 
$\level{s}{t}{k} \subseteq \supp(\trwalk^{t}\vec{s})$, $|\level{s}{t}{k} \cap F| \geq \alpha^2 k/400$, and $\phi(\level{s}{t}{k}) < 1/d \len^{1/3}$. Such
a $k$ is denoted as an \emph{$(s,t)$-certificate of non-leakiness}. 
\end{definition}

We set $\alpha = \eps^{4/3}/300,000$.

Following the construction of the LS curve $\ls{s}{t}$, we will order each
vector $\prw{t}$ in decreasing order, breaking ties by id.
The \emph{rank} of a vertex is its position in (the sorted version of) $\prw{t}$.

\begin{definition} \label{def:bucket} Let the \emph{bucket} $\bucket{t}{r}$ denote
the set of vertices whose rank in $\prw{t}$ is in the range $[2^r, 2^{r+1})$.

A bucket $\bucket{t}{r}$ is called \emph{heavy} if $\sum_{v \in \bucket{t}{r} \cap F} \prw{t}(v) \geq \alpha$. (The bucket restricted to $F$ has large probability.)
\end{definition}

The following lemma says that if there are many leaking timesteps, then the LS curve drops at heavy buckets.

\begin{lemma} \label{lem:drop} Fix $r \geq 0$. Suppose  for some $s \in F$, there exist 
	$\len' \geq \sizefrac^3 \len/8$ leaking timesteps $t_0 < t_1 < \ldots < t_{\len'}$
	such that for all $0 \leq i \leq \len'$, $\bucket{t_i}{r}$ is heavy. 
	Then, $\ls{s}{t_{\len'}}(2^{r+1}) < \ls{s}{t_0}(2^{r+1}) - \alpha/4$.
\end{lemma}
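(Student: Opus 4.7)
The plan is to apply \Lem{flatten} on an interval $[w,y]$ that contains $x = 2^{r+1}$ strictly in its interior, with conductance parameter $\psi = 1/(d\len^{1/3})$ guaranteed by leakiness, and then extract a drop at $x = 2^{r+1}$ from the resulting chord bound. Once the conductance hypothesis is in place, \Lem{flatten} will yield
\[
\ls{s}{t_{\len'}}(2^{r+1}) \leq \lin{t_0-1}{w}{y}(2^{r+1}) + \sqrt{\min(2^{r+1}-w,\, y-2^{r+1})}\,(1-\psi^2/128)^{\len'},
\]
and the task reduces to choosing $[w,y]$ so that the chord at $2^{r+1}$ lies at least $\alpha/4$ below $\ls{s}{t_0}(2^{r+1})$, up to a negligible residual.

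To verify the conductance hypothesis, I use heaviness: each vertex in $\bucket{t_i}{r}$ has mass at most $2^{-r}$ (its rank is $\geq 2^r$), so $|\bucket{t_i}{r}\cap F| \geq \alpha\cdot 2^r$ at every $t_i$. For $x \geq 2^{r+1}$ the level set $\level{s}{t_i}{x}$ contains the entire bucket, so the $F$-intersection exceeds $\alpha\cdot 2^r$, which beats the leakiness threshold $\alpha^2 x/400$ well past $x = 2^{r+2}$. For $x$ slightly below $2^{r+1}$, down to $2^{r+1} - \Theta(\alpha\cdot 2^r)$, even the worst-case placement of the bucket's $F$-vertices (at its low-probability end) leaves $\Omega(\alpha\cdot 2^r)$ of them in the level set, again above the threshold. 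Taking, concretely, $[w,y] = [2^{r+1} - \alpha 2^{r-1},\, 2^{r+2}]$ makes the hypothesis valid at every $t_i$. The residual is controlled by the parameter settings: with $\len' \geq \sizefrac^3\len/8$, one computes $\psi^2 \len' = \Omega(\eps^{-7})$, so $\sqrt{\min(2^{r+1}-w, y-2^{r+1})}(1-\psi^2/128)^{\len'} \leq \sqrt{1/\minp}\cdot e^{-\Omega(\eps^{-7})}$, dwarfed by $\alpha/16$.

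The crux is lower-bounding the concavity gap $\ls{s}{t_0-1}(2^{r+1}) - \lin{t_0-1}{w}{y}(2^{r+1})$. Heaviness of $\bucket{t_0}{r}$ forces mass at least $\alpha$ on the ranks $(2^r,2^{r+1}]$ at time $t_0$, i.e.\ $\ls{s}{t_0}(2^{r+1}) - \ls{s}{t_0}(2^r) \geq \alpha$. Combined with monotonicity $\ls{s}{t_0-1}\geq \ls{s}{t_0}$ (from \Lem{lstrun} and concavity), a jump of the same order must be present in $\ls{s}{t_0-1}$ across ranks near $2^{r+1}$. Choosing $y = 2^{r+2}$ keeps the chord only mildly pulled up by bucket $r+1$, while the left extent $2^{r+1}-w = \alpha 2^{r-1}$ keeps the chord anchored near $\ls{s}{t_0-1}(2^{r+1})$ on the left; together these produce a chord gap strictly larger than $\alpha/4$ plus the one-step decrement $\ls{s}{t_0-1}(2^{r+1}) - \ls{s}{t_0}(2^{r+1})$. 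Combining with monotonicity in $t$ converts \Lem{flatten}'s bound into the desired drop from $\ls{s}{t_0}(2^{r+1})$.

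The main obstacle is preventing the slope of $\ls{s}{t_0-1}$ just past $2^{r+1}$ from wiping out the bucket-$r$ jump: by concavity alone, the mass on bucket $r+1$ could be as large as twice that of bucket $r$, which would cancel the chord gap. This is why both endpoints must be chosen tightly—$y = 2^{r+2}$ so that only bucket $r+1$ itself (rather than many further buckets) contributes to the slope on the right, and $2^{r+1}-w = \Theta(\alpha 2^r)$ so that the conductance hypothesis is still met and the chord registers a drop commensurate with $\alpha/4$. Tracking constants, controlling the one-step decrement, and bounding the residual are then routine.
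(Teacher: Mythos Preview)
Your conductance verification on the interval $[w,y]=[2^{r+1}-\alpha 2^{r-1},\,2^{r+2}]$ is fine, and the residual $(1-\psi^2/128)^{\len'}$ is indeed negligible. The argument breaks at the chord-gap step.

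The problem is that heaviness of $\bucket{t_0}{r}$ gives you mass $\geq\alpha$ on the \emph{entire} bucket $[2^r,2^{r+1})$, but says nothing about the tiny window $[w,2^{r+1}]$ of width $\alpha 2^{r-1}$; the mass there is at most $\alpha/2$ and can be essentially zero. Worse, your inference ``monotonicity $\ls{s}{t_0-1}\geq\ls{s}{t_0}$ forces a jump of the same order in $\ls{s}{t_0-1}$ near $2^{r+1}$'' is false: pointwise domination does not transfer increments. A concrete obstruction: take $\prw{t_0-1}$ to have constant value $c$ on ranks $1,\ldots,2^r$ and constant value $c'\leq c$ on ranks $2^r+1,\ldots,2^{r+2}$. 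Then $\ls{s}{t_0-1}$ is linear on all of $[2^r,2^{r+2}]\supset[w,y]$, so the chord equals the curve and $\ls{s}{t_0-1}(2^{r+1})-\lin{t_0-1}{w}{y}(2^{r+1})=0$. With $c'\approx 2\alpha/2^r$ and all bucket-$r$ vertices in $F$, bucket $r$ carries mass $\approx 2\alpha$, so after one diffusion step $\bucket{t_0}{r}$ can still be heavy. \Lem{flatten} then only yields $\ls{s}{t_{\len'}}(2^{r+1})\leq\ls{s}{t_0-1}(2^{r+1})+o(1)$, which is useless. Your ``obstacle'' paragraph worries about bucket $r{+}1$ being large, but the real killer is bucket $r{+}1$ having the \emph{same per-rank mass} as the tail of bucket $r$, which eliminates all concavity on your interval.

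The paper's proof avoids this with two coupled ingredients you are missing. First, an averaging argument (\Clm{split}) produces a single rank $w\in[2^r,2^{r+1})$ that is a \emph{balanced split} at an $\alpha/3$-fraction of the $t_i$'s: at each such $t$, there is at least $\alpha/3$ mass strictly between rank $w$ and rank $2^{r+1}$, i.e.\ $\ls{s}{t}(2^{r+1})\geq\ls{s}{t}(w)+\alpha/3$. This is exactly the concavity input your interval cannot manufacture. Second, $y$ is taken far to the right, $y\approx 2^{r}\cdot 64/\alpha$, so that the chord from $w$ to $y$ evaluated at $2^{r+1}$ exceeds $\ls{s}{t}(w)$ by at most $\alpha/16$ (the slope is $\leq 1/(y-w)\leq \alpha/(32\cdot 2^r)$). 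Combining, the chord at $2^{r+1}$ sits below $\ls{s}{t}(2^{r+1})-\alpha/3+\alpha/16$, and monotonicity back to $t_0$ gives the drop by more than $\alpha/4$. Both pieces are essential: the balanced split supplies the gap, and the large $y$ keeps the chord from eating it.
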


The main tool used in our 
proof is our adaptation of Lov\'{a}sz-Simonovits lemma done in \Lem{flatten}. We first
make a definition.

\begin{definition} \label{def:balanced:split}
	Fix $r \geq 0$, a source $s$ and a timestep $t$. A vertex $w \in [2^r, 2^{r+1}]$ is called 
	a balanced split for $t$ if $|\lev{t}{w} \cap F| \geq \alpha 2^r/3$
	and $\sum_{v \in \bucket{t}{r} \setminus \lev{t}{w}} \prw{t}(v) \geq \alpha/3$.
\end{definition}

We will first prove the following claim which essentially follows by averaging arguments.

\begin{claim} \label{clm:split} 
	Fix $r \geq 0$ and suppose for some source vertex $s \in F$, there exist
	$\len'$ leaking timesteps $t_0 < t_1 < \ldots < t_{\len'}$
	such that for all $0 \leq i \leq \len'$, $\bucket{t_i}{r}$ is heavy. 
	Then, there exists a vertex $w$ that is a balanced split for at least
	an $\alpha/3$-fraction of timesteps in $T = \{t_0, t_1, \ldots t_{\len'} \}$.
\end{claim}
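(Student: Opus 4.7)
The plan is to establish, for each $t \in T$, a uniform lower bound $|W_t| \geq \alpha 2^r / 3$ on the number of integers $w \in [2^r, 2^{r+1})$ that are balanced splits for $t$, and then to conclude by a double-counting argument: since $\sum_t |W_t| \geq |T|\alpha 2^r / 3$, averaging over the $2^r$ effective positions $w$ yields some $w$ balanced for at least an $\alpha/3$-fraction of $t \in T$. (The endpoint $w = 2^{r+1}$ is never balanced, because then $\bucket{t}{r} \setminus \lev{t}{w} = \emptyset$, so restricting to $2^r$ positions is harmless.) Notably, leakiness of each $t_i$ is never invoked; only the heaviness of each $\bucket{t_i}{r}$ is used here.

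The key arithmetic fact driving everything is that, since $\prw{t}$ is a probability vector sorted in decreasing rank order, the largest $\prw{t}$-value inside $\bucket{t}{r}$ (the value at the rank-$2^r$ vertex) is at most $1/2^r$: the top $2^r$ values each dominate it and sum to at most $1$. Heaviness of the bucket (the $F$-mass is $\geq \alpha$) then forces $|\bucket{t}{r} \cap F| \geq \alpha 2^r$, because $\alpha$ units of $F$-mass must be carried by vertices each of weight at most $1/2^r$. Write $p_0 \geq p_1 \geq \cdots \geq p_{2^r - 1}$ for the $\prw{t}$-values of the bucket vertices in decreasing rank order, so $p_0 \leq 1/2^r$, and note that for $w = 2^r + j$ the intersection $\lev{t}{w} \cap \bucket{t}{r}$ is exactly the first $j+1$ bucket positions, so condition 2 at this $w$ reads $\sum_{i > j} p_i \geq \alpha / 3$.

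I will then split on $a_t := |\lev{t}{2^r} \cap F|$. In the easy case $a_t \geq \alpha 2^r / 3$, condition 1 holds at every $w$, and the estimate $\sum_{i \leq j} p_i \leq (j+1)/2^r$ together with $\sum_i p_i \geq \alpha$ gives condition 2 for every $j$ with $j+1 \leq 2^{r+1}\alpha/3$, yielding $|W_t| \geq 2\alpha 2^r / 3$. In the harder case $a_t < \alpha 2^r / 3$, set $f_t := \alpha 2^r/3 - a_t$ and let $j^{(1)}_t$ be the smallest $j$ for which the first $j+1$ bucket positions contain $f_t$ vertices in $F$; the bounds $|\bucket{t}{r} \cap F| \geq \alpha 2^r$ and $f_t \leq \alpha 2^r / 3$ then force $j^{(1)}_t \leq 2^r(1 - 2\alpha/3)$. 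I claim every $j \in [j^{(1)}_t, j^{(1)}_t + \alpha 2^r/3]$ is a balanced split. Condition 1 is immediate by monotonicity of the $F$-count. For condition 2, the $F$-count among the first $j+1$ bucket positions is at most $f_t + \alpha 2^r/3 \leq 2\alpha 2^r/3$, so the prefix $F$-mass is at most $(2\alpha 2^r/3)(1/2^r) = 2\alpha/3$; therefore the tail $F$-mass, which lower-bounds the tail total mass, is at least $\alpha - 2\alpha/3 = \alpha/3$. Thus $|W_t| \geq \alpha 2^r / 3$ in either case, and the double counting described in the first paragraph completes the proof.

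The main subtle point I anticipate is the careful indexing: because the rank-$2^r$ vertex already sits inside $\lev{t}{2^r}$, the intersection $\lev{t}{2^r + j} \cap \bucket{t}{r}$ consists of $j+1$ rather than $j$ bucket positions, and this one-off must be tracked throughout the case analysis. The rest is a short chain of two-line inequalities driven entirely by the single arithmetic bound $p_0 \leq 1/2^r$, which is used both to guarantee $|\bucket{t}{r} \cap F| \geq \alpha 2^r$ and to control the prefix $F$-mass in the harder case.
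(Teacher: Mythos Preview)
Your proposal is correct and follows essentially the same approach as the paper: for each $t_i$ you show at least $\alpha 2^r/3$ integers $w\in[2^r,2^{r+1})$ are balanced splits (using only heaviness, via the bound $\prw{t}(v)\le 1/2^r$ on bucket entries and the consequent $|\bucket{t}{r}\cap F|\ge \alpha 2^r$), and then average over $w$. The paper does this too, phrased probabilistically via indicators $X_i$ and linearity of expectation.

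The one difference is that your case split on $a_t$ is unnecessary. The paper avoids it by working entirely inside the bucket: it takes $u_1,u_2$ to be the ranks of the $(\alpha 2^r/3)$-th and $(2\alpha 2^r/3)$-th $F$-vertices of $\bucket{t_i}{r}$ (in rank order), and argues that every $w\in[u_1,u_2]$ is a balanced split. Condition~1 holds because $\lev{t}{w}$ already contains $\alpha 2^r/3$ $F$-vertices from the bucket; condition~2 holds because the $F$-prefix in the bucket up to $u_2$ has at most $2\alpha 2^r/3$ vertices, hence $F$-mass at most $2\alpha/3$, so the tail total mass is at least $\alpha-2\alpha/3=\alpha/3$. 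Since $[u_1,u_2]$ contains $\ge \alpha 2^r/3$ distinct ranks, this gives the same lower bound you obtain, without distinguishing whether enough $F$-vertices already lie in $\lev{t}{2^r}$. Your harder case is effectively rediscovering this $[u_1,u_2]$ interval, and your easy case is subsumed by it.
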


\begin{proof} 
Since $\bucket{t_0}{r}$ is heavy, $\ls{s}{t_0}(2^r) < 1$. Since the support
of $\prw{t}$ is at most $\minp^{-1}$, this implies that $2^r < \minp^{-1}$ and 
$r \leq -\lg \minp$ (and this holds by the choice of parameters).

For all $v \in \bucket{t}{r}$, $\prw{t}(v) \leq 1/2^r$.
Since $\sum_{v \in \bucket{t}{r} \cap F} \prw{t}(v) \geq \alpha$, $|\bucket{t}{r} \cap F| \geq \alpha 2^r$.

For convenince, let $T = \{t_0, t_1, \ldots t_{\len'}\}$. Pick $w$ uar in $[2^r,2^{r+1})$. Let $X_i$
be the indicator for $w$ being a balanced split for $t_i$. Recall that
$|\bucket{t_i}{r} \cap F| \geq \alpha 2^r$. Sort the vertices of $\bucket{t_i}{r} \cap F$
by increasing rank and consider the vertices in positions $\alpha 2^r/3$ and $2\alpha 2^r/3]$.
Let the rank corresponding to these vertices by $u_1$ and $u_2$.
We first argue that any rank $w \in [u_1, u_2]$ is a balanced split.
We have $|\lev{t}{w} \cap F| \geq \alpha 2^r/3$ because $w \geq u_1$.
For all $v \in \bucket{t_i}{r}$, $\prw{t_i}(v) \leq 1/2^r$. Thus,
$\sum_{v \in \lev{t_i}{u_2} \cap \bucket{t_i}{r}} \prw{t_i}(v) \leq (1/2^r)(2\alpha 2^r/3) = 2\alpha/3$.
Note that $\sum_{v \in \bucket{t_i}{r}} \prw{t}(v) \geq \alpha$, since the
bucket is heavy
Hence, for any $w \leq u_2$, 
$\sum_{v \in \bucket{t}{r} \setminus \lev{t}{w}} \prw{t}(v) \geq \alpha - 2\alpha/3 = \alpha/3$.

As a consequence, for any $t_i$, there are at least $\alpha 2^r/3$ values of $w$
that are balanced splits. In other words,
$\EX[X_i] \geq \alpha/3$. By linearity of expectation, $\EX[\sum_{i \leq \len'}X_i] \geq \alpha {\len'}/3$.
Thus, there must exist some $w \in [2^r, 2^{r+1})$ that is a balanced split for at least
$\alpha \len'/3$ timesteps.
\end{proof}

Next, we show the following claim which essentially uses leakiness of a timestep $t \in T$ and
the balanced split vertex $w$ promised by \Clm{split} to spell out a set with enough free vertices
with large conductance.

\begin{claim}\label{clm:large:conductance}
	Fix $r \geq 0$ and let $w \in [2^r, 2^{r+1})$ be a split vertex as promised by \Clm{split} and let 
	$t_{i_1} < t_{i_2} < \ldots < t_{i_{\alpha \len'/3}}$ denote the timesteps for which $w$ is a balanced split.
	Let $y = \min(2^{r+6+\lceil \lg(1/\alpha)\rceil}, \minp^{-1})$. 
	Then, for all $x \in [w,y]$ and for all $t \in \{t_{i_1}, t_{i_2}, \cdots, t_{i_{\alpha \len'/3}} \}$, 
	whenever $\lev{t}{x} \subseteq \supp(\trwalk^t \vec{s})$, then $\Phi(\lev{t}{x}) \geq 1/d \len^{1/3}$. 
\end{claim}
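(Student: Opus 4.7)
The plan is to chain together three facts: (i) $t$ is leaking, (ii) $w$ is a balanced split for $t$, and (iii) the window $[w,y]$ is narrow enough (only a $\poly(\alpha^{-1})$-factor wider than $2^r$). Concretely, I would reduce the claim to checking the hypothesis of \Def{leaking}, namely that for every $x \in [w,y]$ with $\lev{t}{x} \subseteq \supp(\trwalk^t \vec{s})$ one has $|\lev{t}{x} \cap F| \geq \alpha^2 x/400$; leakiness of $t$ would then immediately yield $\Phi(\lev{t}{x}) \geq 1/d\len^{1/3}$.

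First, since $w$ is a balanced split for the timestep $t$, \Def{balanced:split} gives $|\lev{t}{w} \cap F| \geq \alpha 2^r/3$. For any $x \geq w$, the level set is monotone, so $\lev{t}{x} \supseteq \lev{t}{w}$ and therefore
\[
    |\lev{t}{x} \cap F| \;\geq\; |\lev{t}{w} \cap F| \;\geq\; \alpha 2^r / 3.
\]
Second, I would bound the right-hand side target $\alpha^2 x/400$ using the definition of $y$. Since $x \leq y \leq 2^{r+6+\lceil \lg(1/\alpha)\rceil}$ and $2^{\lceil \lg(1/\alpha)\rceil} \leq 2/\alpha$, we get $x \leq 2^{r+7}/\alpha = 128\cdot 2^r/\alpha$. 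Plugging this in,
\[
    \frac{\alpha^2 x}{400} \;\leq\; \frac{\alpha^2}{400}\cdot \frac{128\cdot 2^r}{\alpha} \;=\; \frac{128\,\alpha\, 2^r}{400} \;<\; \frac{\alpha\, 2^r}{3}.
\]
Combining the two displays gives $|\lev{t}{x} \cap F| \geq \alpha^2 x/400$ for every $x \in [w,y]$, as needed.

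Finally, since $t \in \{t_{i_1},\dots,t_{i_{\alpha\len'/3}}\}$ is one of the leaking timesteps from the hypothesis of \Lem{drop}, \Def{leaking} applied with $k = x$ (when $\lev{t}{x} \subseteq \supp(\trwalk^t \vec{s})$) immediately yields $\Phi(\lev{t}{x}) \geq 1/d \len^{1/3}$. The only delicate point is the constant bookkeeping in the arithmetic step ($128/400 < 1/3$), which is precisely why the constants $\alpha^2/400$ in the definition of leakiness and $\alpha 2^r/3$ in the definition of balanced split, together with the $2^6$ slack in the definition of $y$, have been chosen the way they have. No further machinery is needed.
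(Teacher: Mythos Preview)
Your proof is correct and essentially identical to the paper's own argument: both use the balanced-split lower bound $|\lev{t}{w}\cap F|\geq \alpha 2^r/3$, the monotonicity $\lev{t}{x}\supseteq \lev{t}{w}$, and the bound $x\leq 128\cdot 2^r/\alpha$ to verify the free-vertex hypothesis of \Def{leaking}, then invoke leakiness. The only point you leave implicit is that $x\leq y\leq \minp^{-1}$ (needed to apply \Def{leaking} with $k=x$), but this is immediate from the definition of $y$.
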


\begin{proof}
	Take $x \in [w,y]$ and a leaking timestep $t \in \{t_{i_1}, t_{i_2}, \cdots, t_{i_{\alpha \len'/3}} \}.$
	Note that $x \leq y \leq \minp^{-1}$ clearly holds. Now, to establish the lower bound on conductance 
	claimed, we first unpack what it 
	means for $t$ to be a leaking timestep \Def{leaking}. It says: 
	If $\lev{t}{x} \subseteq \supp(\trwalk^t \vec{s})$ and $|\lev{t}{x} \cap F| \geq \alpha^2 k/400$, 
	then it better hold that $\phi(\lev{t}{x}) \geq 1/d \len^{1/3}$.

	Note that $y \leq 2^{r+6+\lceil \lg(1/\alpha)\rceil} \in [2^r (64/\alpha), 2^{r+1}(64/\alpha)]$. 
	Since $r \leq -\lg \minp$, $y \leq 128(\minp\alpha)^{-1}$. 

	Note that for all $t \in \{t_{i_1}, t_{i_2}, \cdots, t_{i_{\alpha \len'/3}} \}$ 
	and $x \in [w,y]$, $\lev{t}{x}$ contains at least $\alpha 2^r/3$ vertices of $F$. 
	Thus, at least a $(\alpha 2^r/3)/(2^{r+1} \cdot 64/\alpha) \geq \alpha^2/400$-fraction 
	of $\lev{t}{x}$ is in $F$.
	Now note that since $t$ is leaking, we see that one of the following will hold. Either

	\begin{asparaitem}
		\item $\lev{t}{x} \subseteq \supp(\trwalk^t \vec{s})$ and $\Phi(\lev{t}{x}) \geq 1/d \len^{1/3}$, Or
		\item $\lev{t}{x} \not\subseteq \supp(\trwalk^t \vec{s})$.
	\end{asparaitem}

	And this establishes the claim.
\end{proof}

Now, we have all the ingredients to prove \Lem{drop}. The key step which remains is an application of \Lem{flatten}.

\begin{proof} (Of \Lem{drop})
Suppose $w \in [2^r, 2^{r+1})$ is a balanced split at $\alpha \len'/3$ timesteps as promised by
\Clm{split}. Let $y = \min(2^{r+6+\lceil \lg(1/\alpha) \rceil}, \minp^{-1})$ and as observed in
\Clm{large:conductance}, note that for $x \in [w,y]$ if $\lev{t}{x} \subseteq \supp(\trwalk^t \vec{s})$,
it holds that $\phi(\lev{t}{x}) \geq 1/d \len^{1/3}$.
Now, we apply \Lem{flatten}. For all $x \in [w,y]$, we have
$\ls{s}{t_{\len'}}(x) \leq \ls{s}{t_{i_{\alpha \len'/3}}}(x)
\leq \lin{t_{i_1-1}}{w}{y}(x) + \sqrt{x} (1-1/128 d^2\len^{2/3})^{\alpha \len'/3}$.
By the premise, $\len' \geq \sizefrac^3 \len/8$ and therefore we
have 
$$(1 - 1/128 d^2 \len^{2/3})^{\alpha \len'/3} \leq (1 - 1/128 d^2\len^{2/3})^{\alpha \sizefrac^3 \len/3} = \ 
(1 - 1/128 d^2 \len^{2/3})^{128 d^2\len^{2/3} \cdot \frac{\alpha \sizefrac^3 \len^{1/3}}{3 \cdot 128d^2}} \leq \exp(-1/\alpha)$$ which holds because, for sufficiently small $\eps > 0$, we have 
$$\len^{1/3} = \frac{d^2}{\eps^{10}} \geq \frac{d^2 \cdot 10^{20}}{\eps^7} \geq \frac{d^2}{\alpha^3 \beta^3}.$$
Further, by the monotonicity of LS curves,
$\ls{s}{t_{\len'}}(x) \leq \lin{t_{i_1-1}}{w}{y}(x) + \exp(-1/\alpha)$
$\leq \lin{t_{i_0}}{w}{y}(x) + \exp(-1/\alpha)$.
Specifically, we get
\begin{equation} \label{eq:lhs-lem-drop}
	\ls{s}{t_{\len'}}(2^{r+1}) \leq \lin{t_{i_0}}{w}{y}(2^{r+1}) + \exp(-1/\alpha).
\end{equation}

Since $w$ is a good split, $\ls{s}{t_{i_0}}(2^{r+1}) \geq \ls{s}{t_{i_0}}(w) + \alpha/3$.
Note that 
\begin{eqnarray}
\lin{t_{i_0}}{w}{y}(2^{r+1})  & = & \ls{s}{t_{i_0}}(w) + 
(2^{r+1}-w)\left(\frac{\ls{s}{t_{i_0}}(y) - \ls{s}{t_{i_0}}(w)}{y-w}\right) \nonumber \\
& \leq & \ls{s}{t_{i_0}}(w) + 2^{r+1}/(y/2) \\
&\leq& \ls{s}{t_{i_0}}(w) + 2^{r+1} \times \left(\frac{2 \alpha}{2^r \cdot 64} \right) = \ls{s}{t_{i_0}}(w) + \alpha/16
\end{eqnarray}

The first inequality above follows by upper bounding $\ls{s}{t_{i_0}}(y) - \ls{s}{t_{i_0}}(w)$
by $1$, dropping the negative term and noting that $y-w \geq y/2$ for a sufficiently small 
$\alpha$. Together with \Eqn{lhs-lem-drop}, we get

\begin{eqnarray}
\ls{s}{t_{\len'}}(2^{r+1}) \leq \lin{t_{i_0}}{w}{y}(2^{r+1}) + \exp(-1/\alpha)
& \leq & \ls{s}{t_{i_0}}(w) + \alpha/16 + \exp(-1/\alpha) \nonumber \\
&\leq & \ls{s}{t_{i_0}}(2^{r+1}) - \alpha/3
+\alpha/16 + \exp(-1/\alpha) 
\end{eqnarray}

By monotonicity of the LS curve, 
$\ls{s}{t_{\len'}}(2^{r+1}) < \ls{s}{t_0}(2^{r+1}) - \alpha/4$.

\end{proof}

Now, we state a key lemma. It says that a fixed bucket (parameterized by $r$)
satisfies the following at most timesteps: (i) either it does not contain enough free vertices, 
or (ii) if it contains many free vertices at a particular timestep, then most of the corresponding
timesteps are not leaky.

\begin{lemma}  \label{lem:heavy} Fix $r \geq 0$ and take any $s \in F$. There are at most $\sizefrac^3 \len/\alpha$ 
leaking timesteps $t$ (with respect to $s$) where $\bucket{t}{r}$ is heavy.
\end{lemma}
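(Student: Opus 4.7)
The plan is to prove this by contradiction, relying on the fact that the LS curve value $\ls{s}{t}(2^{r+1})$ always lies in $[0,1]$ (because $\|\trwalk^t\vec{s}\|_1 \leq 1$) and is monotone non-increasing in $t$ (as the concavity of the curves yields $I(M\vec{p}, x) \leq I(\vec{p}, x)$, noted just after \Lem{ls}; truncation only further decreases coordinates and hence $\ls{s}{t}(x)$). Each block of $\sizefrac^3\len/8$ leaking timesteps at which $\bucket{t}{r}$ is heavy forces, via \Lem{drop}, a uniform drop of $\alpha/4$ in this bounded quantity, so too many such timesteps cannot fit.

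Concretely, suppose for contradiction that there are more than $\sizefrac^3\len/\alpha$ leaking timesteps $t$ for which $\bucket{t}{r}$ is heavy. Enumerate them in increasing order as $\tau_1 < \tau_2 < \cdots < \tau_N$ with $N > \sizefrac^3\len/\alpha$, and partition this sorted list into $m$ disjoint, consecutive blocks of size exactly $\sizefrac^3\len/8$, where $m \geq \lfloor N / (\sizefrac^3\len/8) \rfloor \geq 8/\alpha$ (for the parameter regime at hand).

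Next, I would apply \Lem{drop} to each block separately. For the $i$-th block with earliest and latest timesteps $s_i$ and $e_i$, respectively, the lemma yields $\ls{s}{e_i}(2^{r+1}) < \ls{s}{s_i}(2^{r+1}) - \alpha/4$. Because $s_{i+1} > e_i$, monotonicity of the LS curve in $t$ gives $\ls{s}{s_{i+1}}(2^{r+1}) \leq \ls{s}{e_i}(2^{r+1})$, so chaining across all $m$ blocks produces
$$\ls{s}{e_m}(2^{r+1}) < \ls{s}{s_1}(2^{r+1}) - m \cdot \frac{\alpha}{4} \leq 1 - 2 = -1,$$
contradicting $\ls{s}{e_m}(2^{r+1}) \geq 0$. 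Hence $N \leq \sizefrac^3\len/\alpha$, as claimed.

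I do not foresee any serious obstacle here: the argument is essentially a pigeonhole driven by \Lem{drop}, and the only other ingredients are the general monotonicity and boundedness of the LS curve. The only subtlety to handle carefully is the ordering — the blocks must be consecutive in the sorted enumeration of leaking-heavy timesteps so that the hypothesis of \Lem{drop} (strictly increasing timesteps within a block) and the cross-block monotonicity inequality $\ls{s}{s_{i+1}}(2^{r+1}) \leq \ls{s}{e_i}(2^{r+1})$ chain correctly to accumulate a total drop of $m \alpha/4$.
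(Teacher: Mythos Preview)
Your proposal is correct and follows essentially the same approach as the paper: both argue by contradiction, partition the leaking-heavy timesteps into consecutive blocks, apply \Lem{drop} to each block to force a drop of $\alpha/4$ in $\ls{s}{t}(2^{r+1})$, and then chain these drops (via monotonicity) until the curve becomes negative. The only cosmetic differences are the block size ($\sizefrac^3\len/8$ versus the paper's $\sizefrac^3\len/4$) and that you spell out the between-block monotonicity step explicitly, which the paper leaves implicit.
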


\begin{proof} We prove by contradiction. Suppose there are more than 
$\sizefrac^3 \len/\alpha$ leaking timesteps $t$ where $\bucket{t}{r}$ is heavy. 
We break these up into $4/\alpha$
contiguous blocks of $\sizefrac^3 \len / 4$ leaking timesteps. 
By \Lem{drop}, after every such block of timesteps,
$\ls{s}{t}(2^{r+1})$ reduces by more than $\alpha/4$. Note that $\ls{s}{0}(2^{r+1}) \leq 1$,
and thus, after $4/\alpha$ blocks, $\ls{s}{t}(2^{r+1})$ becomes negative. Contradiction 
to the non-negativity of $\ls{s}{t}(2^{r+1})$.
\end{proof}

\subsection{Proof of \Thm{restrict-cut}} \label{sec:relevant}

We finally prove \Thm{restrict-cut}. In particular, recall that this theorem claims that for an arbitrary
set $F \subseteq V$ with $|F| \geq \sizefrac n$, there exists a size threshold $k$ such that one
can find enough source vertices $s \in F$ such that $\len$-step diffusions from $s$ contain enough
non-leaky timesteps. Moreover, these non-leaky timesteps can be used to obtain a low conductance cut
restricted to $F$. We begin by showing that indeed many sources $s \in F$ have the desired behavior.

\begin{lemma} \label{lem:relevant} There are at least $\sizefrac^2n/8$ vertices $s \in F$,
such that: there are at least $\sizefrac\len/16$
timesteps $t$ in $[\len]$ that are not leaking for $s$.
\end{lemma}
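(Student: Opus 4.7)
The plan is to combine \Thm{goodseed} (which supplies many seeds whose truncated diffusion keeps significant mass on $F$ at many timesteps) with \Lem{heavy} (which bounds, per bucket, how often a heavy bucket can coincide with a leaking timestep). The only piece needed to glue these together is a pigeonhole argument that turns ``lots of mass on $F$'' into ``some heavy bucket.''

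First, I apply \Thm{goodseed} to obtain a set $S \subseteq F$ with $|S| \geq \sizefrac^2 n/8$ such that every $s \in S$ has a set $T_s \subseteq [\len]$ of at least $\sizefrac \len/8$ timesteps with $\trwalk^t\vec{s}(F) \geq \sizefrac/16$. Fix such an $s$ and a timestep $t \in T_s$. Since $|\supp(\trwalk^t\vec{s})| \leq \minp^{-1}$, the buckets $\bucket{t}{r}$ are nonempty only for $r \leq \lceil \log \minp^{-1}\rceil$, so partitioning the $F$-mass across these $O(\log \minp^{-1})$ buckets and averaging produces some index $r = r(s,t)$ with $\sum_{v \in \bucket{t}{r} \cap F} \trunp{s}{t}{v} \geq \sizefrac / \bigl(16(\log\minp^{-1}+1)\bigr)$. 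A short parameter calculation (using $\sizefrac = \eps/10$, $\alpha = \eps^{4/3}/300000$, and $\log \minp^{-1} = O(\log d + \log\eps^{-1})$) shows this lower bound exceeds $\alpha$ for sufficiently small $\eps$, so $\bucket{t}{r(s,t)}$ is heavy in the sense of \Def{bucket}.

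Next I bound, for each fixed $s \in S$, the number of leaking timesteps inside $T_s$. By the previous paragraph, every leaking timestep $t \in T_s$ has at least one heavy bucket $\bucket{t}{r}$ with $r \leq \lceil \log \minp^{-1} \rceil$. Applying \Lem{heavy} bucket-by-bucket and summing over $r$, the total number of leaking $t$ (with respect to $s$) admitting any heavy bucket is at most $(\lceil \log \minp^{-1}\rceil + 1)\cdot \sizefrac^3\len/\alpha$. With the parameter settings, $\sizefrac^2 \log(\minp^{-1})/\alpha = O(\eps^{2/3}(\log d + \log \eps^{-1}))$, which is much less than $1/16$ for sufficiently small $\eps$. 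Hence this number of leaking timesteps is strictly less than $\sizefrac\len/16$, so at least $|T_s| - \sizefrac\len/16 \geq \sizefrac\len/16$ timesteps in $T_s$ are non-leaking, completing the proof.

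The only delicate point is checking that the parameters align, namely that the per-bucket mass guaranteed by pigeonhole truly reaches the heaviness threshold $\alpha$, and that summing the \Lem{heavy} bound over the $O(\log\minp^{-1})$ buckets still leaves a clear majority of the timesteps in $T_s$ non-leaking. Both inequalities come down to the fact that $\log \minp^{-1}$ is polylogarithmic in $\eps^{-1}$ while the gap between $\sizefrac$ and $\alpha$ is polynomial, so they hold with room to spare once $\eps$ is sufficiently small.
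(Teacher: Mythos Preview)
Your proposal is correct and follows essentially the same approach as the paper's proof: invoke \Thm{goodseed} to get many seeds with many high-$F$-mass timesteps, pigeonhole over the $O(\log\minp^{-1})$ buckets to produce a heavy bucket at each such timestep, and then use \Lem{heavy} to cap the number of leaking timesteps. The only cosmetic difference is that the paper argues by contradiction and averages to find a \emph{single} bucket index $r$ accounting for too many heavy leaking timesteps, whereas you bound directly by summing the \Lem{heavy} cap over all $r$; both routes yield the same parameter inequality.
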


\begin{proof} We fix any vertex $s$ satisfying the conditions of \Thm{goodseed}.
Let us recall what this means. This means that for at least $\sizefrac \len/8$ timesteps
$t$, it holds that $\trwalk^t \vec{s}(F) \geq \sizefrac/16$.
We will show that conclusion in \Lem{relevant} above holds for $s$ which will
establish the lemma. We prove by contradiction. 

To this end, let us suppose for any vertex $s$ satisfying the conditions 
of \Thm{goodseed}, there are at most $\sizefrac\len/16$ non-leaky timesteps. 
There are at least $\sizefrac\len/8-\sizefrac\len/16 = \sizefrac\len/16$ 
timesteps $t$ that are leaking for $s$, such that 
$\trwalk^t\vec{s}(F) \geq \sizefrac/16$. Fix any such timestep $t$ and consider
the buckets $\bucket{t}{r}$. There are at most $-\lg \minp$ buckets with non-zero
probability mass, and by averaging,
there exists $r \leq -\lg\minp$ such that 
$$\sum_{v \in F \cap \bucket{t}{r}} \prw{t}(v) \geq \sizefrac/(-16\lg \minp) = \frac{\eps}{160 \cdot 3000 \lg(1/\eps)} \geq \frac{\eps^{4/3}}{300,000} = \alpha$$
where the last step holds for sufficiently small $\eps$
and therefore, $\bucket{t}{r}$ is heavy.

Thus, for each of the $\sizefrac\len/16$ leaking timesteps $t$ above, 
there exists some $r \leq -\lg\minp$ such that $\bucket{t}{r}$ is heavy.
By averaging, there exists some $r \leq -\lg\minp$ such that
for $\sizefrac\len/(-16\lg \minp)$ leaking timesteps $t$, $\bucket{t}{r}$ is heavy.
However, for sufficiently small $\eps$ ($\eps < 2^{-30}$), 
we have 
$$\frac{\sizefrac\len}{-16 \lg \minp} = \frac{\eps \cdot \len}{160 \cdot 3000 \log(1/\eps)} \geq 1000 \eps^{3-4/3} \len \geq \frac{\sizefrac^3 \len}{\alpha}$$ 
which contradicts \Lem{heavy}.
\end{proof}

\begin{lemma} \label{lem:goodr} Let $|F| \geq \sizefrac n$. There exists a $r \leq \lg(1/\minp)$
such that for $\geq \sizefrac^2 n/(8 \lg^2(\minp^{-1}))$ vertices $s \in F$, the following holds.
For at least $\sizefrac \len/(\lg^2(\minp^{-1}))$ timesteps $t$,
there exists $k \in [2^r, 2^{r+1}]$ that is an $(s,t)$-certificate of non-leakiness.
\end{lemma}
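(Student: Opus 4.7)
The plan is to combine \Lem{relevant} with two rounds of pigeonhole over the dyadic range of the certificate size. \Lem{relevant} supplies a set $S \subseteq F$ of size at least $\sizefrac^2 n/8$ such that every $s \in S$ has at least $\sizefrac \len/16$ non-leaking timesteps. By \Def{leaking}, for every such non-leaking pair $(s,t)$ there is at least one $(s,t)$-certificate $k = k(s,t) \leq \minp^{-1}$, so the task reduces to locating a single dyadic window $[2^r, 2^{r+1}]$ capturing many of these certificates across many sources.

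First I would bin the certificates dyadically. Define $r(s,t) := \lfloor \lg k(s,t)\rfloor \in \{0,1,\ldots,\lceil \lg \minp^{-1}\rceil\}$; this index set has cardinality at most $2\lg \minp^{-1}$. Fixing any $s \in S$, its $\geq \sizefrac \len/16$ non-leaking timesteps are distributed among these at most $2\lg \minp^{-1}$ bins, so by pigeonhole some bin $r_s$ captures at least $\sizefrac \len / (32 \lg \minp^{-1})$ timesteps whose certificates lie in $[2^{r_s}, 2^{r_s+1}]$. Note that whenever $k \in [2^{r_s},2^{r_s+1}]$ is an $(s,t)$-certificate, so is $k$ itself, which is exactly what the lemma statement requires.

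Second, I would pigeonhole over the vertices of $S$ via the map $s \mapsto r_s$. Since $|S| \geq \sizefrac^2 n/8$ and the codomain has size at most $2\lg \minp^{-1}$, some $r^\star$ is the image of at least $\sizefrac^2 n / (16 \lg \minp^{-1})$ vertices of $S$. For each such $s$ there are at least $\sizefrac \len / (32 \lg \minp^{-1})$ timesteps $t$ with an $(s,t)$-certificate in $[2^{r^\star}, 2^{r^\star+1}]$, which is the desired conclusion for $r = r^\star$ modulo constants.

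Finally I would do the parameter check to absorb those constants. With $\minp = d^{-60}\eps^{\minpexp}$, the quantity $\lg \minp^{-1}$ exceeds $32$ for sufficiently small $\eps$, giving both $\sizefrac^2 n/(16 \lg \minp^{-1}) \geq \sizefrac^2 n/(8 \lg^2 \minp^{-1})$ and $\sizefrac \len/(32 \lg \minp^{-1}) \geq \sizefrac \len / \lg^2 \minp^{-1}$, so the chosen $r^\star$ satisfies the lemma as stated. There is no real obstacle here: the only non-trivial input is \Lem{relevant}, after which the argument is a clean double pigeonhole on the dyadic partition of certificate sizes plus a routine comparison of constants against $\lg \minp^{-1}$.
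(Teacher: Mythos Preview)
Your proposal is correct and follows essentially the same argument as the paper: invoke \Lem{relevant}, bin the certificate sizes dyadically via $r(s,t) = \lfloor \lg k(s,t)\rfloor$, and apply two rounds of averaging/pigeonhole (first over timesteps for each $s$, then over sources $s$) before absorbing the resulting constants into the $\lg^2(\minp^{-1})$ slack. The only cosmetic difference is that you track the bin count as $2\lg\minp^{-1}$ (accounting for the ceiling) rather than $\lg\minp^{-1}$, yielding slightly different intermediate constants that are handled identically by the final parameter check.
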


\begin{proof} This is an averaging argument. Apply \Lem{relevant}.
For each of the $\sizefrac^2n/8$ vertices $s \in F$, there are at least $\sizefrac\len/16$
timesteps $t$ that are not leaking for $s$. Thus, for every such $(s,t)$ pair,
there exists $k_{s,t} \leq \minp^{-1}$ that is an $(s,t)$-certificate of non-leakiness.
We basically bin the logarithm of the certificates. Thus, to every pair $(s,t)$
(of the above form), we associate $r_{s,t} = \lfloor \lg k_{s,t} \rfloor$. By averaging,
for each relevant $s$, there is a value $r_s$ such that for at least $\sizefrac\len/(16\lg(\minp^{-1}))$ 
timesteps $t$, there is an $(s,t)$-certificate in $[2^{r_s}, 2^{r_s+1}]$.
Again, by averaging there exists $r \leq \lg(\minp{-1})$ such that
there are at least $\sizefrac^2 n/(8 \lg(\minp^{-1})) \geq \sizefrac^2 n/(\lg^2(\minp^{-1}))$ 
vertices $s \in F$ for which
there exist at least $\sizefrac\len/(16 \lg(\minp^{-1})) \geq \sizefrac\len/\lg^2(\minp^{-1})$ timesteps $t$,
such that there is an $(s,t)$-certificate for non-leakiness in $[2^r, 2^{r+1}]$.
\end{proof}

\Thm{restrict-cut} follows as a corollary of \Lem{goodr}. We now present the proof.

\begin{proof} (Of \Thm{restrict-cut})
	As seen from \Lem{goodr}, there exists some $r \leq -\lg(\minp)$ such that
	there are at least $\Omega(\sizefrac^2/\lg(\sizefrac^{-1})) \cdot n$ vertices
	$s \in F$ each of which in turn has $(s,t)$-certificates of non-leakiness for 
	at least $\Omega(\sizefrac/16 \lg^2(\sizefrac^{-1})) \cdot \len$ different values
	of $t$. We simply choose $k = 2^r$.

	Let $S \subseteq F$ denote the collection of these relevant sources. And for
	$s \in S$, define 
	$$C_s = \{t \leq \len : \text{ there exists a } (s,t)-\text{ certificate of non-leakiness} \}.$$
	Take $s \in S$, $t \in C_s$. We will show that there exists $k' = k'(s,t) \in [k,2k]$ 
	such that the level set $\level{s}{t}{k'}$ satisfies the following.

	\begin{asparaitem}
		\item $\level{s}{t}{k} \subseteq \supp(\widehat{M}^t\vec{s})$. 
		\item $\phi(\level{s}{t}{k'} \cup \{s\}) \leq 1/ \len^{1/3}$.
		\item $|\level{s}{t}{k'} \cap F| \geq \alpha^2 k'/400 \geq \sizefrac^3 k$.
	\end{asparaitem}

	The first item above follows from the conclusion of \Lem{goodr}, \Def{leaking} and taking 
	contrapositive in \Lem{flatten}.
	Unpacking, this means that since $t \in C_s$ is a non-leaking timestep for $s$, it follows that
	there exists $k' = k'(s,t) \in [k,2k]$ for which 
	$\level{s}{t}{k'} \subseteq \supp(\widehat{M}^t\vec{s})$.
	The last item above holds for this choice of $k'$ from the conclusion of \Lem{goodr}.
	For item 2 above, again note that our choice of $k'$ and \Lem{goodr} imply that 
	$$\phi(\level{s}{t}{k'}) \leq 1/d \len^{1/3} = 1/d \cdot \frac{\eps^{10}}{d^2} = \eps^{10}/d^3 = \phi/d^3$$ 
	and therefore $\phi(\level{s}{t}{k'} \cup \{s\}) \leq \phi$ also follows as by (possibly)
	including a single vertex in the set, the number of cut-edges can only increase by $d$.
\end{proof}

\section{Proofs of applications} \label{sec:appl}

The proofs here are quite straightforward and appear (in some form) in previous work.
We sketch the proofs, and do not give out the specifics of the Chernoff bound calculations.
Specifically, we mention Theorem 9.28 and its proof in ~\cite{G17-book}, which contains
these calculations.

\begin{proof} (of \Thm{testers}) 
	Given input graph $G$, we set up the partition oracle with proximity parameter $\eps/8$.
    Therefore, with probability at least $2/3$ over the random seed $\bR$, the number of cut edges is at most $\eps dn/8$.
    The tester repeats the following $O(1)$ times. For a random $\bR$, we first estimate the number of edges cut by random sampling. 
    The tester samples $\Theta(1/\eps)$ uar vertices $u$, picks a uar neighbor $v$ of $u$, and calls the partition
	oracle on $u$ and $v$. If these lie in different components, the edge $(u,v)$ is cut.
	If more that an $\eps/4$ fraction of edges are cut, then repeat with a new $\bR$.
    Otherwise, we fix the seed $\bR$ and proceed to the second phase of the tester.
    (If no such $\bR$ is found, the tester rejects.)

	In the second phase, we sample a multiset $S \subseteq V$ of $O(\eps^{-1})$ uar vertices, 
	and query the subgraph induced by the component $C(v)$ (of the partition given by the oracle)
	that each $v \in S$ belongs to. For each ($\poly(\eps^{-1})$-sized) component $C(v)$, we directly determine if it belongs to $\cQ$.
	(If there is an efficient algorithm, we can run that algorithm.) If any of these components
	does not belong to $\cQ$, the tester rejects, otherwise it accepts.

	Now, let us argue that this is a bonafide tester for $\cQ$. Recall $\cQ$ is both monotone and additive. 
    Suppose $G \in \cQ$. Since $\cQ$ is a subproperty of a minor-closed property, the first phase
    of setting the partition oracle succeeds with high probability. Since $\cQ$ is monotone and additive,
    all the subgraphs induced on the connected components $C(v)$ also satisfy $\cQ$. So the tester accepts
    whp. Suppose $G$ is $\eps$-far from $\cQ$. If the first phase does not succeed, then the tester rejects.
    So assume that the first phase succeeds. Whp, by a Chernoff bound, the number of cut edges (of the partition)
    is at most $\eps dn/2$. Since $\cQ$ is monotone, the graph obtained by removing these cut
    edges is at least $\eps/2$-far from $\cQ$. Since $\cQ$ is additive, at least $\Omega(\eps n)$ vertices
    participate in connected components that not in $\cQ$. Hence, by a Chernoff bound, the second phase
    rejects whp.

    The query complexity has at most an $O(d\eps^{-1})$ multiplicative overhead of the time complexity
    of the partition oracle, which is $\poly(d\eps^{-1})$. If $\cQ$ can be decided in polynomial time,
    then the second phase also runs in $\poly(d\eps^{-1})$ time.
\end{proof}

\begin{proof} (of \Thm{approx}) As with the previous proof, we set up the partition oracle with proximity parameter $\eps dn/c$,
	where $c$ is the largest amount by which an edge addition/deletion changes $f$.
    As before, there is a first phase to determine an appropriate setting of $\bR$ for the partition oracle.
	We sample $\poly(d\eps^{-1})$ uar vertices and determine the component that each
	vertex belongs to. For each component, we compute $f$ exactly. We take the sum of $f$-values,
	and rescale appropriately to get an additive $\eps nd$ estimate for $f$. 
\end{proof}

\section*{Acknowledgements} We acknowledge Reut Levi for pointing out a correction in the statement of
\Thm{testers}.

\bibliographystyle{alpha}
\bibliography{polytime-oracle}

\end{document}